\newif\ifArxiv
\newif\ifPagesLimited
\newif\ifDraft
\let\warning\faBell \let\force\faHandORight
\renewcommand{\Game}{\mathcal{G}}
\newcommand{\Gameb}{\mathcal{H}}
\newcommand{\Mach}{\mathcal{M}}
\newcommand{\<}{\langle}
\renewcommand{\>}{\rangle}
\newcommand{\bsigma}{\bar{\sigma}}
\renewcommand{\|}{\upharpoonright}
\newcommand{\m@ken@medcal}[2][\@error]{\expandafter\def\csname #1\endcsname{\ensuremath{\mathcal{#2}}}}
\newcommand{\m@kecal}{\@dblarg\m@ken@medcal}
\newcommand{\m@ken@medbb}[2][\@error]{\expandafter\def\csname #1\endcsname{{\ensuremath{\mathbb{#2}}}}}
\newcommand{\m@kebb}{\@dblarg\m@ken@medbb}
\newcommand{\m@ken@medsf}[2][\@error]{\expandafter\def\csname #1\endcsname{{\ensuremath{\mathsf{#2}}}}}
\newcommand{\m@kesf}{\@dblarg\m@ken@medsf}
\newcommand{\m@ken@medmathtxt}[2][\@error]{\expandafter\def\csname #1\endcsname{{\ensuremath{\textit{#2}}}}}
\newcommand{\m@kemathtxt}{\@dblarg\m@ken@medmathtxt}
\newcommand{\m@ken@medtt}[2][\@error]{\expandafter\def\csname #1\endcsname{{\ensuremath{\texttt{#2}}}}}
\newcommand{\m@kett}{\@dblarg\m@ken@medtt}
\newcommand{\Occ}{\mathsf{Occ}}
\newcommand{\Inf}{\mathsf{Inf}}
\newcommand{\Fin}{\mathsf{Fin}}
\newcommand{\obj}[2]{\mbox{\tikz{\node[font=\tiny,inner sep=1.5pt,draw,shape=rectangle,solid,minimum width=0pt,fill=white,rounded corners=0pt] {$#1\rightarrow#2$};}}}
\newcommand{\iab}{\obj{\alice}{\bob}}
\newcommand{\iba}{\obj{\bob}{\alice}}
\newcommand{\iac}{\obj{\alice}{\charlie}}
\newcommand{\ica}{\obj{\charlie}{\alice}}
\newcommand{\ibc}{\obj{\bob}{\charlie}}
\newcommand{\icb}{\obj{\charlie}{\bob}}
\newcommand{\iij}{\obj{i}{j}}
\newcommand{\iki}{\obj{k}{i}}
\newcommand{\mmid}{\ \middle|\ }
\newcommand{\NN}{\mathbb{N}}
\newcommand{\RR}{\mathbb{R}}
\renewcommand{\SS}{\mathbb{S}}
\newcommand{\PP}{\mathbb{P}}
\newcommand{\CC}{\mathbb{C}}
\newcommand{\II}{\mathbb{I}}
\newcommand{\bt}{\blacktriangle}
\newcommand{\btd}{\blacktriangledown}
\renewcommand{\l}{\ell}
\renewcommand{\epsilon}{\varepsilon}
\renewcommand{\phi}{\varphi}
\newcommand{\Hist}{\mathsf{Hist}}
\newcommand{\Dev}{\mathsf{Dev}}
\newcommand{\noDev}{\Delta_\emptyset}
\newcommand{\Comp}{\mathsf{Comp}}
\newcommand{\M}{\mathcal{M}}
\newcommand{\Poly}{\ensuremath{\mathsf{P}}}
\newcommand{\NP}{\ensuremath{\mathsf{NP}}}
\newcommand{\ExpTime}{\ensuremath{\mathsf{EXPTIME}}}
\newcommand{\ExpSpace}{\ensuremath{\mathsf{EXPSPACE}}}
\newcommand{\coNP}{\ensuremath{\mathsf{coNP}}}
\newcommand{\PSpace}{\ensuremath{\mathsf{PSPACE}}}
\newcommand{\W}{\ensuremath{\mathsf{W}}}
\newcommand{\XP}{\ensuremath{\mathsf{XP}}}
\newcommand{\FPT}{\ensuremath{\mathsf{FPT}}}
\newcommand{\bx}{\bar{x}}
\newcommand{\by}{\bar{y}}
\newcommand{\balpha}{\bar{\alpha}}
\newcommand{\bbeta}{\bar{\beta}}
\newcommand{\bs}{\bar{s}}
\newcommand{\dchi}{\dot{\chi}}
\newcommand{\dalpha}{\dot{\alpha}}
\newcommand{\dbeta}{\dot{\beta}}
\newcommand{\autoMR}{\bgroup\hat{\auto}\egroup}
\newcommand{\eventually}{\mathbf{F}\,}
\newcommand{\globally}{\mathbf{G}\,}
\newcommand{\eqdef}{\ensuremath{\mathrel{\stackrel{\text{def}}{=}}}}
\newcommand{\procPar}{\mathrel{||}}
\newcommand{\Biggg}{\bBigg@{4}}
\newcommand{\Bigggg}{\bBigg@{5}}
\newcommand{\Biggggg}{\bBigg@{6}}
\newtheorem{thm}{Theorem}
\newtheorem{lm}{Lemma}
\theoremstyle{definition}
\newtheorem{defi}{Definition}
\newtheorem{pb}{Problem}
\theoremstyle{remark}
\renewcommand\@endtheorem{\vvv@endmarker\endtrivlist\@endpefalse}
\newcommand\vvv@endmarker{%
  {\unskip\nobreak\hfil\penalty50
  \hskip2em\vadjust{}\nobreak\hfil$\triangleleft$
  \parfillskip=0pt \finalhyphendemerits=0 \par
  \penalty 10000 \parskip=0pt\noindent}\ignorespaces}
\newtheorem{rk}{Remark}
\newtheorem{exa}{Example}
\tikzstyle{abstractstate}=[state,shape=rectangle,rounded corners=7pt]
\tikzstyle{aliceMove}=[solid,red,thick]
\tikzstyle{bobMove}=[dashed,blue,thick]
\tikzstyle{ttpMove}=[dotted,green!60!black,thick]
\tikzstyle{infraMove}=[dashdotted,orange,thick]
\tikzstyle{player1}=[regular polygon,regular polygon sides=4,draw,inner sep=0.5pt]
\tikzstyle{player2}=[circle,draw,inner sep=2pt]
\tikzstyle{player3}=[regular polygon,regular polygon sides=6,,draw,inner sep=0.5pt]
\tikzstyle{initial}+=[initial by arrow,initial text=]
\newcommand{\playerOneSymb}{\bgroup\mathchoice{%
        \tikz{\node[player1,minimum size=9pt]{~};}
    }{%
        \tikz{\node[player1,minimum size=9pt]{~};}
    }{%
        \tikz{\useasboundingbox (-3pt,-3pt) rectangle (3pt,4pt); \node[player1,minimum size=6pt]{~};}
    }{%
        \tikz{\node[player1,minimum size=3pt]{~};}
    }\egroup}
\newcommand{\playerTwoSymb}{\bgroup\mathchoice{%
        \tikz{\useasboundingbox (-3.5pt,-3pt) rectangle (3.5pt,4pt);\node[player2,minimum size=7pt]{~};}
    }{%
        \tikz{\useasboundingbox (-3.5pt,-3pt) rectangle (3.5pt,4pt);\node[player2,minimum size=7pt]{~};}
    }{%
        \tikz{\useasboundingbox (-2pt,-3pt) rectangle (2pt,4pt);\node[player2,minimum size=4pt,inner sep=0pt]{};}
    }{%
        \tikz{\useasboundingbox (-1.5pt,-2pt) rectangle (1.5pt,2.5pt);\node[player2,inner sep=0pt,minimum size=3pt]{};}
    }\egroup}
\newcommand{\playerThreeSymb}{\bgroup\mathchoice{%
        \tikz{\node[player3,minimum size=8pt]{~};}
    }{%
        \tikz{\node[player3,minimum size=8pt]{~};}
    }{%
        \tikz{\useasboundingbox (-2pt,-2pt) rectangle (2.5pt,4pt);\node[player3,minimum size=5pt]{~};}
    }{%
        \tikz{\useasboundingbox (-1.25pt,-2pt) rectangle (1.5pt,2.5pt);\node[player3,minimum size=3.5pt]{~};}
    }\egroup}
\tikzstyle{initial}+=[initial by arrow,initial text={}]
\tikzstyle{every state}+=[minimum size=5pt,fill, top color=white, bottom color=black!20]
\tikzstyle{mealyTrans}=[->,font=\scriptsize]
    \tikzstyle{nobodyState}=[state,rectangle,inner sep=3pt,minimum size=5pt]
    \tikzstyle{innocentState}=[state,regular polygon,regular polygon sides=6,inner sep=1.5pt]
    \tikzstyle{literalState}[$i$]=[nobodyState,%
    \tikzstyle{opponentState}=[state,regular polygon,regular polygon sides=6,inner sep=0pt,minimum size=5pt]
    \tikzstyle{solverState}=[state,circle,inner sep=1pt,minimum size=5pt]
    \tikzstyle{loseState}=[shape=cloud,inner sep=1pt,fill=red!80!black,text=white,font=\scriptsize,cloud ignores aspect]
    \newcommand{\cobuchiLabel}[3][]{\node[anchor=west,loseState,xshift=-2pt,#1] at (#2.east) {#3};}
    \newcommand{\cobuchiLabelLeft}[3][]{\node[anchor=east,loseState,xshift=2pt,#1] at (#2.west) {#3};}
    \tikzstyle{automatonState}=[state,circle,inner sep=1pt,minimum size=5pt,fill, bottom color=white, top color=black!20]
    \tikzstyle{vanishingPattern}=[dash pattern=on 5pt off 2pt on 2pt off 2pt on 2pt off 2pt]
    \newcommand{\outVanishingArrows}[3][]{%
        \foreach \a in {#3} {
            \coordinate (#2Out\a) at ($(#2.\a)+(\a:15pt)$);
            \path[draw,vanishingPattern,#1] (#2) -- (#2Out\a);
        }
    }
    \newcommand{\inVanishingArrows}[3][]{%
        \foreach \a in {#3} {
            \coordinate (#2In\a) at ($(#2.\a)+(\a:15pt)$);
            \path[draw,vanishingPattern,<-,#1] (#2) -- (#2In\a);
        }
    }
\def\myvdots{\vbox{\baselineskip4\p@ \lineskiplimit\z@
    \hbox{.}\hbox{.}\hbox{.}}}
\def\myddots{\mathinner{\mkern1mu\raise7\p@\vbox{\hbox{.}}\mkern2mu
    \raise4\p@\hbox{.}\mkern2mu\raise\p@\hbox{.}\mkern1mu}}
\def\myiddots{\mathinner{\mkern1mu\raise-6\p@\vbox{\hbox{.}}\mkern2mu
    \raise-3\p@\hbox{.}\mkern2mu\raise\p@\hbox{.}\mkern1mu}}
\newcommand{\og}{\guillemetleft\,}
\newcommand{\fg}{\,\guillemetright\xspace}
\newcommand{\Ms}[1]{\textcolor{BurntOrange}{\textit{Mathieu:} \og#1\fg}}
\newcommand{\Le}[1]{\textcolor{Magenta}{\textit{L\'eonard:} \og#1\fg}}
\newcommand{\JF}[1]{\textcolor{NavyBlue}{\textit{Jean-Fran\c cois:} \og#1\fg}}
\newcommand{\Ma}[1]{\textcolor{PineGreen}{\textit{Marie:} \og#1\fg}}
\newcommand{\Gu}[1]{\textcolor{BlueViolet}{\textit{Guillaume:} \og#1\fg}}
\newcommand{\@untreatedComment}[1]{\@latex@warning{Untreated comment from #1}}
\newcommand{\Ms}[1]{\@untreatedComment{Mathieu}}
\newcommand{\Le}[1]{\@untreatedComment{Léonard}}
\newcommand{\JF}[1]{\@untreatedComment{Jean-François}}
\newcommand{\Ma}[1]{\@untreatedComment{Marie}}
\newcommand{\Gu}[1]{\@untreatedComment{Guillaume}}
\title{Pessimism of the Will, Optimism of the Intellect:\ifArxiv\\\else\ \fi Fair Protocols with Malicious but Rational Agents}
\author{Léonard Brice$^1$ \and Jean-François Raskin$^1$ \and Mathieu Sassolas$^1$ \and Guillaume Scerri$^2$ \and Marie Van Den Bogaard$^3$
\and$^1$Université libre de Bruxelles, Belgium \and $^2$Université Paris-Saclay, ENS Paris Saclay \& CNRS, LMF, Gif-sur-Yvette, France \and $^3$Univ. Gustave Eiffel, CNRS, LIGM, Marne-la-Vallée, France}
\author{
\IEEEauthorblockN{Léonard Brice,\\Jean-François Raskin,\\ Mathieu Sassolas}
\IEEEauthorblockA{Université libre de Bruxelles\\Bruxelles, Belgium}
\and
\IEEEauthorblockN{Guillaume Scerri}
\IEEEauthorblockA{Université Paris-Saclay,\\ ENS Paris-Saclay \& CNRS, LMF\\ Gif-sur-Yvette, France}
\and
\IEEEauthorblockN{Marie Van Den Bogaard}
\IEEEauthorblockA{Univ. Gustave Eiffel, CNRS, LIGM\\Marne-la-Vallée, France}
}
\begin{document}

	\maketitle

\begin{abstract}
    Fairness is a desirable and crucial property of many protocols that handle, for instance, exchanges of message.
    It states that if at least one agent engaging in the protocol is honest, then either the protocol will unfold correctly and fulfill its intended goal for all participants, or it will fail for everyone.
    In this work, we present a game-based framework for the study of fairness protocols, that does not define a priori an attacker model.
    It is based on the notion of \emph{strong secure equilibria}, and leverages the conceptual and algorithmic toolbox of game theory.
    In the case of finite games, we provide decision procedures with tight complexity bounds for determining whether a protocol is immune to nefarious attacks from a coalition of participants, and whether such a protocol could exist based on the underlying graph structure and objectives.
\end{abstract}

\section{Introduction}

\emph{Fairness} is a desirable and crucial property of many protocols that handle, for instance, exchanges of messages. 
It states that if at least one agent engaging in the protocol is honest, then either the protocol will unfold correctly and fulfill its intended goal for all participants (and this is clearly the most desired outcome), or it will fail for everyone.
In other words, for exchange protocols, it means that either every agent involved in the exchange will send their message and receive the one intended for them, or that nobody will receive any message.
Traditionally, such a fairness property is formalized by expressing it as a \emph{trace} property of the protocol : one must make sure that for {\em every possible execution} of the protocol, the property holds. 
While verifying trace properties is well-understood, formalizing fairness that way imposes a very stringent requirement, as it constrains every single trace or execution, including those associated with irrational and non-credible behavior.
Reasoning on traces under the assumption that these non-credible behaviors will not occur means reasoning on sets of traces (i.e. using hyperproperties), which require a different approach to the verification of these protocols and more elaborated modeling and verification techniques.
Furthermore, imposing such a strong notion of fairness usually requires  the involvement of a \emph{trusted third party} (TTP for short)~\cite{pagnia1999impossibility}.
On the other hand, while a number of general formalisms for hyperproperties exist, restrictions on what behaviors are credible or not are not generally focused on, at least in the context of fairness. Additionally, general algorithms for deciding hyperproperties typically have very high complexity making them somewhat impractical.
Here we therefore restrict ourselves to finding the right notion of acceptable behavior and focus on this notion in order to obtain a reasonable decision procedure.

In this work, we introduce a novel semantic framework based on {\em multi-player games} and an appropriate notion of {\em equilibrium}. This framework allows for reasoning about fair exchange protocols without enforcing the excessively strong fairness constraint based on traces, while still providing guarantees that are relevant and practical. Additionally, we explore the algorithmic tools required to develop verification procedures within this semantic framework.


Relaxation of the fairness property has previously been attempted.
In fact, some previous works already refer to a notion of \emph{rational fairness}. 
See the related works paragraph below~\ref{par:related_works} for a more detailed account of those attempts.
Often, the works in this direction rely on modeling protocols via two-player games, and choosing a solution concept (usually a variant of Nash equilibria) to describe the desired outcomes of the game/protocol and its robustness against attacks. 
In our approach, we assert that protocols should be represented as games with more than two players, allowing for the modeling of malignant coalitions that could potentially derail the protocol. We propose a solution concept that remains robust against such deviations, operating under reasonable (albeit pessimistic) rationality assumptions about the participants.

Before presenting our contributions in further details, let us give a few insights on solution concepts in game theory and their relation to rationality.

\subsection{Rationality in games}
In fair exchange protocols, participants are not purely adversarial, as each agent has an individual goal (e.g., receiving the messages they expect from others) that is not necessarily merely being a nuisance to other players.
Consequently, these protocols cannot be faithfully formalized as \emph{zero-sum} games, and we need to use the more general framework of \emph{non zero-sum} games to model them accurately.
In this setting, each agent has a goal that they pursue through making choices of actions, known as \emph{strategies}, and every outcome of the collective strategies of all agents (a \emph{profile}) results in a potentially different payoff for each player. The central notion for these games is \emph{equilibrium}: a profile of strategies from which it is not worthwhile to deviate.
The notion of equilibrium was first introduced by Nash in the 1950s~\cite{Nas50}, for the case where a single agent deviating would not increase their payoff. Since then, several notions of equilibria have been defined to capture different \emph{models of rationality}: Aumann~\cite{Aumann60} defined Strong Nash Equilibria, where several players deviating conjointly cannot increase their payoff. Other notions include Immune Equilibria~\cite{DBLP:conf/fossacs/Brenguier16}, where players should not be able to decrease the payoff of others by deviating, and Secure Equilibria~\cite{ChatterjeeHenzingerJurdzinski05,DBLP:conf/csl/BruyereMR14}, in which players first aim at maximizing their payoff and, as a secondary objective, try to minimize the payoff of the other players.

Although most of these initial definitions originated in economic theory, game models have also been studied by computer scientists for addressing underlying algorithmic questions: how difficult it is to construct an equilibrium, and how challenging it is to decide the existence of an equilibrium with desirable properties, but also for modeling purposes. In the application context we consider (e.g., fair exchange protocols), the most appropriate instantiation of these game models is games played on directed finite graphs~\cite{GaleStewart53}. Each vertex of a graph represents a state of affairs in the execution of the protocol, and each state belongs to a given player who can choose, based on the sequence of previously visited vertices, how to move to the next vertex. This transition models the execution of one more step in the protocol, thus producing a path that, in turn, models a complete execution of the protocol.
The payoff for each player can be determined by a Boolean condition on this path: does the execution meet the goal of the player? 

This general framework allows us to reason about the rationality of players: if a player has a strategy that enables them to achieve their objective given the strategies of other players, they may have no reason to deviate from it. Therefore, when reasoning about the correctness of a protocol, it should be unnecessary to consider irrational behaviors. A protocol can be deemed adequate if it is correct for rational behaviors, even if irrational behaviors could lead to violations of the strong fairness condition. Our framework also accommodates a degree of pessimism: if a player can deviate to preserve their own goal while harming others, they might choose to do so. Additionally, it allows us to reason about coalitions that agents might form to gain an advantage or cause harm through deviation.




\subsection{Approach and contributions}

Taking into account the modeling needs induced by exchange protocols with fairness objectives, we introduce the concept of \emph{strong secure equilibria} (SSE) and argue that it is the appropriate solution concept. Indeed, SSE ensures robustness against both coalition deviations (strong equilibria, \cite{Aumann60}) and nuisance-only deviations (secure equilibria, \cite{ChatterjeeHenzingerJurdzinski05}). We propose a modeling approach for protocols, termed \emph{the protocol challenge}, which is expressive enough to encode classical exchange protocols, including those with TTPs and infrastructure as regular players. Additionally, our notion of protocol challenge allows for fine-tuning the levels of rationality and fairness among agents, potentially eliminating the need for a TTP. This is achieved using relatively simple objectives (\emph{i.e.}, Boolean), for which decision procedures can be devised (for any $\omega$-regular objectives).

In our approach, and contrary to most approaches in the literature, the study of fairness protocols does not rely on a predefined attacker model. In our framework, the attacker is implicit and can consist of any coalition of agents, with their power being a parameter of the instance at hand. The types of attacks captured include any deviation that is considered rational under the notion of Strong Secure Equilibria.

Our technical contribution focuses on the case of games played on finite directed graphs with $\omega$-regular objectives (which subsume Linear Temporal Logics objectives, for instance), which are well-suited to the study of fairness protocols.
The finiteness assumption is adequate as the set of messages that can be exchanged in these protocols is usually finite ---~up to nonces or other secret data that can be abstracted by a finite number of messages.
The expressive power of $\omega$-regularity allows to capture a wide range of desired properties on finite or repeated exchanges of messages, as illustrated in various examples throughout this paper.


We provide decision procedures for determining whether a protocol is immune to nefarious (but rational) attacks from a coalition of participants, and whether such a protocol could exist based on the underlying graph structure and objectives. Most of the computational complexity bounds for our algorithms are tight, and we include an analysis of the case with a fixed number of agents, which is often the dominant factor in the complexity of multiplayer games and which is usually small in practical applications.

\subsection{Related Works}\label{par:related_works}
Several works rely on notions of equilibria to argue for the resilience of the provided protocol, such as the RatFish peer-to-peer protocol~\cite{BackesCiobotaruKrohmer10}, although they rely on 2-player Nash equilibria.
In~\cite{AbrahamDolevGonenHalpern06}, the authors use resilient strong Nash equilibria to show that their protocol does not require a TTP as long as the agents are all rational, but they do not provide a general framework.
A more general formalism to handle rational fairness is laid out in~\cite{ButtyanHubauxCapkun04}, where the authors point out the limit of fairness as a strong trace property.
They also choose the game modeling approach and propose a variant of Nash equilibria as the appropriate solution concept to capture rationality in protocols.
However, their notion of rationality of the agents is \emph{optimistic}: they are not assumed to be willing to deviate from the protocol only to affect negatively the others, it must at the very least impact them strictly positively. 
Furthermore, since the model they choose is strictly two-player, there is no notion of robustness against malignant coalitions. Thus, they favor a sort of \emph{best} Nash equilibria as a rational solution concept.
Finally, while they propose a relevant general framework to model protocols in terms of multiplayer games, they provide no implementations or decision procedure.

At the other end of the spectrum, a very general approach uses \emph{hyperproperties}~\cite{ClarksonSchneider10,RakotonirinaBartheSchneidewind24}.
These allow to express properties of sets of traces, and are therefore well suited to the verification of protocols.
The drawback of such a general approach is that decision problems are often undecidable or non-elementary for relevant fragments; in comparison, the complexity of the most difficult problems in our setting is only $\ExpTime$.

Although a wide array of notions of rationality in games have been defined, few works focus on the notion of secure equilibria.
Two notable exceptions are~\cite{ChatterjeeHenzingerJurdzinski05} which actually also motivates the use of secure equilibria with the need to model malicious agents, and~\cite{DBLP:conf/csl/BruyereMR14}, for quantitative objectives.
Both however only consider the case of two player games.
Two player games are also considered in~\cite{KremerRaskin2003} to evaluate the fairness of an exchange protocol, through the logic ATL.
The authors hint at using coalitions, as they solve several two player games to take into account the fact that the infrastructure may side with one agent or the other to facilitate their cheating.
In contrast, SSEs syntactically consider all possible coalitions.

    
    

\section{The challenges of fair exchange}\label{sec:motivatingExample}

The aim of designing a protocol is to establish a set of behavior rules that yield a specific outcome if thoroughly followed, and that are self-enforceable: every participating agent must adhere to these rules, lest they do not get anything out of the interactions.

Before moving on the more formal definitions and results of the paper, let us run through the obstacles and possible solutions to the design a good fair exchange protocol. 
Towards this, let us consider the simple setting of only two agents, Alice and Bob, that wish to exchange messages or currency online. (For instance, Alice could want to purchase digital art from Bob.)
How to ensure that the exchange is fair?
A first naive approach to designing such a protocol could be ``let both send their part to the other one''. 
\ifPagesLimited\relax\else(See \figurename~\ref{fig:ongoingExampleTwoPlayers} in Appendix~\ref{app:first_examples} for a depiction of the possible sequences of actions following this basic guideline.)\fi
If both of them do this, even asynchronously, then the exchange is performed. 
Of course, there could be a malicious third party involved here, that could derail the conduction of the protocol for any reason. 
This is what is traditionally called an \emph{external} attacker, and protocols being robust to external attacks is the subject of many works in the security community.
Another source of trouble could be the failure of the communication channels. 
In this work, we assume that the channels are \emph{resilient}: they deliver every message perfectly, albeit maybe with an arbitrary but finite delay (every message is \emph{eventually} delivered).

However, another critical aspect to take into account is what assumptions we make about the motivations of the agents. 
Indeed, one can argue that there is no reason for an agent to trust the other one. 
In fact, both may even prefer an outcome where they receive something from the other as intended, but do not send their share.
On the other hand, the agents wish in priority to avoid being \emph{wronged}, i.e. sending their share without receiving anything.
Depending on such assumptions, the solutions for designing a fair protocol may vary: clearly, the naive approach sketched above fails as soon as one agent is not fully trustworthy.

Let us briefly describe two approaches to ensure robustness against these possibly untrustworthy agents.

\subsection{With a trusted third party}\label{sec:motivatingExampleWithTTP}

A first possible solution to ensure fairness while making no assumption on the trustworthiness of the participants is to involve a \emph{trusted third party} (TTP) in the protocol.
This particular agent is assumed to be trustworthy, and to have no other interest than the completion of the exchange. 
Furthermore, it is supposed to have sufficient leverage on the agents to enforce the completion of the exchange, if one of them tries to deflect from the intended course of actions.
The TTP is thus considered to be an agent with absolute \emph{authority}.
One could consider performing the exchange only via the TTP, sending both agents' parts to the TTP that would, upon full reception, redistribute them appropriately. This, however foolproof, is extremely costly, as the TTP is used in every instance of the protocol.
Usually, this is avoided by amending the naive protocol to add the rule ``if an agent does not receive anything while he has sent his share, he contacts and alerts the TTP, who then makes sure the other agent comply''. 
This is what is called an \emph{optimistic} version of a protocol involving a TTP. It is assumed that the participating agents desire the intended outcome of the protocol, and thus that in most traces of it, the TTP will not need to intervene at all.
\ifPagesLimited\relax\else A depiction of a such a TTP moderated protocol exchange can be found in \figurename~\ref{fig:ongoingExampleWithTTP} in Appendix~\ref{app:first_examples}.\fi
This is also the approach of the Zhou-Gollmann protocol that uses a TTP only in case of a (suspected or real) cheating attempt (and is encompassed by our framework, see Section~\ref{sec:ZhouGollmann} later on).

\subsection{Beyond trust: without a trusted third party}\label{sec:motivatingExampleInfinite}

The TTP approach is well-known and widely accepted as a necessity (as well-justified by the impossibility result of~\cite{pagnia1999impossibility}).
What can be done when this approach is not applicable?
Indeed, there are many cases where the trustworthiness of a third party is not clear enough to reasonably rely upon it.
For those cases, can we design protocols which rules are self-enforceable and that eliminate the need of a TTP?
In other terms, what if the incentives to complete the protocol properly were stronger than the ones to deviate from the intended behaviors?
An alternative, and the approach we choose in this work, is to rely, not on the trustworthiness of agents, but instead on their \emph{rationality}.
While not amenable to every possible context where fair exchange is needed, it can provide a viable solution in relevant situations.
We will therefore consider that a protocol is safe if no alliance of agents can wrong another agent \emph{without wronging a member of the alliance itself}. 
The security of the protocol stems from the assumption that all agents, even though they might be malicious, would nonetheless behave rationally.

For instance, in peer-to-peer settings, no one can be absolutely trusted in theory, but all agents have an incentive to be honest, since keeping a good reputation enable them to perform further exchanges.
As another example, a ride-share application can be used as a trusted third party between a driver and a traveler without being trusted \emph{per se}, but because both the driver and the traveler know that if the application had dishonest behaviors in some situations, then no one would trust it anymore, and the company that holds it would stop their profits.

In both these examples, important common features are their \emph{multi-agent} nature and the \emph{repeatability} of the exchange, albeit with different agents. 
To illustrate further and start going towards a more abstract model, consider the following setting, depicted by Figure~\ref{fig:noTTPcircularExchange}.
Alice, Bob and Charlie are three agents who wish to exchange an infinity of items: each agent has infinitely many items to send to both other agents.
Again, as shown in~\cite{pagnia1999impossibility}, there is no protocol that could satisfy the usual definitions of fairness. 
Indeed, there is an agent (say, for instance, Alice) that can \emph{scam} another agent (say Bob), i.e. that can send only finitely many items to Bob and receive at least one more item from him.
That is the case if the agents exchange their items directly, but also if they use the third agent (Charlie) as an intermediary, because Charlie is not a TTP and can therefore choose to help Alice, forming a (malicious) \emph{coalition}, to scam Bob (or the reverse).

However, since the process is repeating for further exchanges, if Charlie does so, Bob may react by stopping exchanges not only with Alice, but also with Charlie.
Alice would then be satisfied (she actually scammed Bob), but Charlie would be worse off (she did not scam anyone herself, and can no longer exchange with Bob).
So if Charlie behaves rationally, she has no incentive to help another agent scam the third.
In practice, therefore, such a protocol can be considered as reasonably \emph{safe}: Alice, Bob, and Charlie all take, alternatively, the role of the TTP and enable an exchange of items between the two other agents; and if an agent deviates from the protocol, the other ones stop all exchanges with them.


        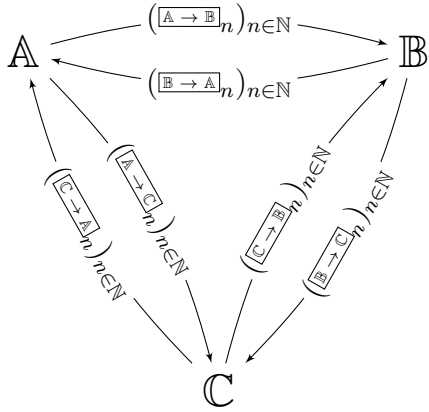
\begin{figure}
            \centering
            \begin{tikzpicture}
                \tikzstyle{toexchange}=[->,>=latex',draw,bend left=15]
                \tikzstyle{exlab}=[sloped,fill=white,inner sep=2pt]
                \node[font=\Large] (a) at (150:3cm) {$\alice$};
                \node[font=\Large] (b) at (30:3cm) {$\bob$};
                \node[font=\Large] (c) at (-90:3cm) {$\charlie$};

                \path (a) edge[toexchange] node[exlab] {$(\iab_n)_{n\in\NN}$} (b);
                \path (b) edge[toexchange] node[exlab] {$(\iba_n)_{n\in\NN}$} (a);
                \path (a) edge[toexchange] node[exlab] {$(\iac_n)_{n\in\NN}$} (c);
                \path (c) edge[toexchange] node[exlab] {$(\ica_n)_{n\in\NN}$} (a);
                \path (b) edge[toexchange] node[exlab] {$(\ibc_n)_{n\in\NN}$} (c);
                \path (c) edge[toexchange] node[exlab] {$(\icb_n)_{n\in\NN}$} (b);
            \end{tikzpicture}
            \caption{Expected exchanged items between Alice, Bob, and Charlie.}
            \label{fig:noTTPcircularExchange}
        \end{figure}

    \section{General formalisms}\label{sec:formalism}

In this section, we present our model and argue that, in the game-theoretic setting, safe protocols are characterized by strong secure equilibria.
We then provide definitions of games on finite graphs, in order to handle them algorithmically.
In particular, we present techniques to model the infrastructure (\emph{e.g.}, communication channels) as a player among others in these finite games.

        \subsection{Framework}

We consider the following general scenario: several \emph{agents} (for instance: programs, or human users) interact through \emph{behaviors} (for instance: sending some message after having received some other), and aim to satisfy a \emph{correctness condition} (a subset of the possible traces, for instance: exchanging a cryptographic key, or signing a contract).
But, some of those agents may be tempted to deviate from that objective, in order to \emph{wrong} some other agent; as long as they are not themselves wronged.
We thus make the hypothesis that an attack on the protocol comes from the same agent(s) that engage in it.
In other terms, the participating agents are assumed rational but malicious. If we want to model an external attacker in the classical sense, we can by introducing him as an explicit new agent in our framework.
In such a scenario, we wish to design, or to verify, \emph{protocols} that all those agents should follow, so that the correctness condition is achieved, and that no alliance of agents can cheat other agent and wrong them.

In that setting, the \emph{Infrastructure} is seen as an agent among the others, with several possible behaviors but no wrongness condition, so that it can help anyone to wrong anyone.

Before defining protocols, we need a notion of \emph{protocol challenge}, to capture the context in which one wants to define a protocol.

\begin{defi}[Protocol challenge]\label{def:protocolProblemNoInfrastructure}
    A \emph{protocol challenge} is a tuple:
    \[\pc = \left(\agents, (\behaviors_a)_{a \in \agents}, (\Wrong_a)_{a \in \agents}, \Corr\right),\] where:

    \begin{itemize}
\item $\agents$ is the set of \emph{agents};
    
    \item each agent $a \in \agents$ has a set $\behaviors_a\neq \emptyset$ of available \emph{behaviors} (and for $A \subseteq \agents$, we write $\behaviors_A = \prod_{a \in A} \behaviors_a$);

    \item for each agent $a$, the set $\Wrong_a \subseteq \behaviors_\agents$ is the \emph{wronging condition} of agent $a$;

    \item the set $\Corr \in \behaviors_\agents$ is called the \emph{correctness condition}, and is such that for each agent $a$, we have $\Corr \cap \Wrong_a = \emptyset$;

\end{itemize}
\end{defi}

In the rest of this article, we often use the bar $\bbeta_A$ to denote tuples.
The subscript $A$ denotes the set of indices on which the tuple is defined.
It will be omitted when that set is complete and clear from the context --- for example, we will write $\bbeta$ for a tuple of behaviors for all the agents, which we call a \emph{profile} of behaviors.
\\ 
A \emph{protocol} is, then, a subset of profiles of behaviors.

\begin{defi}[Protocol, implementation]
    A \emph{protocol} is a nonempty set $P \subseteq \behaviors_\agents$.
    An \emph{implementation} of $P$ is an element $\bbeta$ of $P$.
\end{defi}

The reason why a protocol is not a single behavior, but a set, is that protocols are often not defined in a deterministic way: they may be ambiguous about minor aspects, and leave some room for interpretation.
In such a case, all of the possible implementations must be verified.

\begin{exa} \label{ex:protocol}

 Consider an item exchange protocol with a TTP (such as the one in Section~\ref{sec:motivatingExampleWithTTP}), in which we explicitly take into account the fact that when Alice or Bob sends an item, it is not immediately received by the recipient, but spends some time within the \emph{communication infrastructure} ($\II$).
        It is reasonable to assume that this Infrastructure, while offering no guarantee on delays, always delivers the item (this is the standard model of \emph{resilient channels}).
        \ifArxiv The interested reader will find an illustration in Appendix~\ref{app:ongoingExampleWithTTPandInfra}, \figurename~\ref{fig:ongoingExampleWithTTPandInfra}.\fi

 Within the above formalism, there are four agents: Alice, Bob, the TTP, and the Infrastructure $\II$.
The available behaviors for each agent are given by the choice, at any given moment, to perform (or not) any of their own action whenever possible.
The behaviors of the Infrastructure are further restricted by prohibiting behaviors that infinitely choose not to perform any action while infinitely enabled, to model the resilience of the channels.

Alice (or Bob) is wronged if she (he) remains indefinitely in a state where she (he) has neither items.
The TTP is wronged if either Alice or Bob is, which guarantees that both can trust it.
Regarding the correctness condition, it could consist simply in exchanging the items.
However, we choose here to add the condition that the exchange goes on without any intervention of the TTP, as such an intervention is in practice costly and should be used parsimoniously.

An example of a protocol is the set of behaviors defined as the product of the sets of following behaviors:
\begin{description}
    \item[For Alice:] Wait between 1 and 7 time units before sending $\textit{item}_1$. Then wait between 22 and 28 time units.
    If $\textit{item}_2$ has not been received by then, alert the TTP.
    \item[For Bob:] Wait between 3 and 12 time units before sending $\textit{item}_2$. Then wait between 20 and 30 time units.
    If $\textit{item}_1$ has not been received by then, alert the TTP.
    \item[For the TTP:] When alerted, immediately force the other party to send the item.
    \item[For the Infrastructure:] Each received item is transmitted within 6 time units.
\end{description}
One can check that this allows the exchange to be performed, without any TTP intervention, in under 18 time units (in the worst case, 7 + 6 time units for $\textit{item}_1$ to be delivered to Bob, and 12 + 6 for $\textit{item}_2$ to be delivered to Alice), so the correctness condition is achieved for any implementation of this protocol.
Note that that is true only if the Infrastructure behaves as planned by the protocol: in practice, no one controls it, and an item delivery may last (much) more than 6 time units.
In such cases, the correctness condition is no longer guaranteed: the TTP may be called.
But that is acceptable in light of another requirement to protocols, immunity against attacks, which must be guaranteed even if the Infrastructure helps the attacking agents.
\end{exa}

While the usual attacking models are fully pessimistic, i.e. assume that malicious agents may do anything to wrong other agents, our model enable us to mitigate that pessimism by assuming no agent will wrong themself in order to perform an attack.

\begin{defi}[Attack, safe protocol] \label{def:attack}
    An \emph{attack} of the \emph{alliance} $A \subseteq \agents$ to the protocol $P$ is a tuple of behaviors $\balpha$, such that:
    \begin{itemize}
        \item there exists an implementation $\bbeta \in P$ such that, for each $a \in \agents \setminus A$, we have $\beta_a = \alpha_a$ (the agents that do not belong to $A$ follow the protocol $P$);
    
        \item for every $a \in A$, we have $\balpha \not\in \Wrong_a$ (none of the attacking agents is wronged);
        
        \item and there exists $a \not\in A$ such that $\balpha \in \Wrong_a$ (at least one agent is wronged).
    \end{itemize}

    A protocol $P$ is \emph{safe} if there is no attack to $P$.
\end{defi}

\begin{rk}
    If the protocol $P$ is safe, then in particular, the empty alliance has no attack.
    Therefore, no agent is wronged in any implementation $\bbeta \in P$.
\end{rk}

Thus, modeling a concrete situation by a protocol challenge can be done by defining appropriate wronging conditions, depending on the type of attacks one wants to avoid: it defines simultaneously what a successful attack is, and what the agents cannot accept when they are involved in an attack.

In addition to being immune to attacks, a desirable property of a protocol is to conform to the correctness condition.
Therefore, a positive answer to the challenge is defined as follows:
\begin{defi}
    A \emph{positive answer} to a protocol challenge is a safe protocol $P$ such that $P \subseteq \Corr$.
\end{defi}

\begin{exa}\label{ex:protocolWithTTPgood}
    If we consider the protocol described in Example~\ref{ex:protocol}, it appears that every implementation satisfies the correctness condition.
    Moreover, there is no attack to the protocol: the only ways for Alice to satisfy her individual objective are to be helped by the TTP or by Bob; but then, that agent is wronged ---~and symmetrically for Bob.
\end{exa}
\begin{exa}\label{ex:protocolWithoutTTPbad}
    On the other hand, consider again the protocol challenge in the context without the TTP.
    Then, there is no safe protocol satisfying the correctness condition in that case: if the correctness condition is satisfied, then either the alliance $\{\alice\}$ or the alliance $\{\bob\}$ has an attack, by refusing to send the item.
\end{exa}

        \subsection{Safe protocols as strong secure equilibria}

We will now see how such protocol challenges can be modeled as games.
Let us first define games.

\begin{defi}[Game]\label{def:generalGame}
    A \emph{game} is a tuple $\Game = \left(\Pi, (\Sigma_i)_{i \in \Pi}, (\mu_i)_{i \in \Pi}\right)$, where:
    \begin{itemize}
        \item $\Pi$ is a set of \emph{players};

        \item for each player $i \in \Pi$, the set $\Sigma_i \neq \emptyset$ gathers the \emph{strategies} available for player $i$ (when $P \subseteq \Pi$, we also write $\Sigma_P = \prod_{i \in P} \Sigma_i$);

        \item for each player $i$, the function $\mu_i: \Sigma_\Pi \to \RR$ is the \emph{payoff function} of player $i$.
    \end{itemize}
\end{defi}

Intuitively, strategies are functions that map a sequence of past events to the next action a player must choose.
But here, strategies are only abstract objects, as behaviors are in the context of a protocol challenge.
Later, we will define strategies in the context of games played on graphs.

Given a strategy profile $\bsigma$ and a subset of players $C\subseteq \Pi$, $\bsigma_C \in \Sigma_C$ denotes the profile of strategies from $\bsigma$ for players of $C$.
The set $-C$ denotes $\Pi\setminus C$, the set of all players not in $C$.
For two disjoint sets of players $C_1,C_2$ with $C_1 \cap C_2=\emptyset$, two strategy profiles $\bsigma_{C_1} \in \Sigma_{C_1}$ and $\bsigma_{C_2}' \in \Sigma_{C_2}$ can be combined into a profile for all players in $C_1\cup C_2$, written $( \bsigma_{C_1}, \bsigma'_{C_2} ) \in \Sigma_{C_1\cup C_2}$.
In particular, given two profiles $\bsigma,\bsigma'\in\Sigma_\Pi$ and a set of players $C$, $( \bsigma_{-C}, \bsigma'_C )\in \Sigma_\Pi$ is the profile made from strategies of $\bsigma$ for players not in $C$ and in $\bsigma'$ for players in $C$.

In such games, we are interested in a new notion of equilibrium, that does not allow deviations of any coalition (\emph{strong} equilibrium) that would harm another player without harming a member of the coalition (\emph{secure} equilibrium).

\begin{defi}[Strong secure equilibrium]
    Let $\Game$ be a game.
    Let $\bsigma \in \Sigma_\Pi$ be a strategy profile in $\Game$: a \emph{harmful deviation} of the \emph{coalition} $C \subseteq \Pi$ is a strategy profile $\bsigma'_C \in \Sigma_C$ such that for every player $i \in C$, we have $\mu_i( \bsigma_{-C}, \bsigma'_C ) \geq \mu_i( \bsigma )$, and for some player $i \not\in C$, we have $\mu_i( \bsigma_{-C}, \bsigma'_C ) < \mu_i( \bsigma )$.
    A \emph{strong secure equilibrium} in $\Game$, or \emph{SSE} for short, is a strategy profile that admits no harmful deviation.
\end{defi}

That notion captures our notion of safe protocol: in an SSE, players cannot collude to harm another player, as long as we assume that they will not sacrifice themselves to do so.
Formally:

\begin{thm}[Safe protocols and SSEs]\label{thm:linkSNEsafeNoInfrastructure}
    Let $\pc$ be a protocol challenge, and let us consider the game $\Game$ defined as follows: the players are the agents of $\pc$, the strategies available for each player $i$ are the behaviors available for agent $a$ in $\pc$, and the payoff functions are defined as follows:

    $$\mu_a: \bsigma \mapsto \begin{cases}
        0 & \text{if } \bsigma \in \Wrong_a \\
        1 & \text{if } \bsigma \not\in \Wrong_a
    \end{cases}.$$

    Then, the protocol $P$ is safe if and only if each $\bsigma \in P$ is an SSE satisfying $\mu_i(\bsigma) = 1$ for every player~$i$.
\end{thm}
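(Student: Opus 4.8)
The plan is to set up an exact dictionary between the two sides of the equivalence and then read off both implications from it. On one side, membership in a wronging condition $\Wrong_a$ is precisely the event $\mu_a = 0$; since the payoffs are binary, the inequalities $\mu_i(\bsigma_{-C},\bsigma'_C) \geq \mu_i(\bsigma)$ and $\mu_i(\bsigma_{-C},\bsigma'_C) < \mu_i(\bsigma)$ appearing in the definition of a harmful deviation collapse, whenever $\mu_i(\bsigma) = 1$, into the statements ``$(\bsigma_{-C},\bsigma'_C) \notin \Wrong_i$'' and ``$(\bsigma_{-C},\bsigma'_C) \in \Wrong_i$'' respectively. This is what makes an attack of an alliance $A$ and a harmful deviation of the coalition $C = A$ literally the same object, once the ``all payoffs equal to $1$'' normalization is in place.

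I would first dispose of that normalization. Specializing the definition of an attack (Definition~\ref{def:attack}) to the empty alliance shows that a profile $\bbeta \in P$ with some agent wronged is exactly an attack of $\emptyset$; hence safety of $P$ forces $\bbeta \notin \Wrong_a$, i.e.\ $\mu_a(\bbeta) = 1$, for every agent $a$ and every $\bbeta \in P$. This is the observation already recorded in the remark below Definition~\ref{def:attack}, and it supplies the payoff condition in the forward direction.

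For the forward implication, assuming $P$ safe, I would take any $\bsigma \in P$ and prove it is an SSE by contraposition. A harmful deviation $\bsigma'_C$ of a coalition $C$ produces the profile $\balpha \eqdef (\bsigma_{-C},\bsigma'_C)$, which agrees with $\bsigma$ outside $C$; taking $A \eqdef C$, the implementation $\bsigma \in P$ witnesses the first clause of an attack, the condition $\mu_i(\balpha) \geq \mu_i(\bsigma) = 1$ for $i \in C$ yields $\balpha \notin \Wrong_i$ (second clause), and the strict drop $\mu_i(\balpha) < \mu_i(\bsigma) = 1$ for some $i \notin C$ yields $\balpha \in \Wrong_i$ (third clause). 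Thus $\balpha$ is an attack of $A$, contradicting safety. The converse is symmetric: assuming every $\bsigma \in P$ is an SSE with payoffs $1$, an attack $\balpha$ of an alliance $A$ comes with an implementation $\bbeta \in P$ agreeing with $\balpha$ off $A$, so $\balpha = (\bbeta_{-A},\balpha_A)$; then $\balpha_A$ is a harmful deviation of $C = A$ from the SSE $\bbeta$ (the attackers keep payoff $1$, and some outside player drops from $1$ to $0$), a contradiction.

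I do not anticipate a real obstacle; the only point that needs care is the quantifier asymmetry between the two notions — safety is a global property of the whole protocol $P$, whereas being an SSE is imposed profile by profile. This is reconciled by the first clause of the definition of an attack, which ties every attack to one specific implementation $\bbeta \in P$, matched against the fact that the right-hand side of the equivalence demands the SSE property of each member of $P$ individually.
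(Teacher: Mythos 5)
Your proof is correct and matches the paper's argument in substance: the paper does not prove this theorem directly but defers to the stronger Theorem~\ref{thm:linkSNEsafeWithSanity} (the version with a compliance hypothesis), whose proof is exactly your dictionary between attacks of an alliance $A$ and harmful deviations of the coalition $C=A$ — including the same use of the empty-alliance observation to get $\mu_i(\bsigma)=1$ — plus bookkeeping for the Infrastructure player that is vacuous here. The only hair to split is that for $A=\emptyset$ the deviation $\balpha_A$ is not literally a harmful deviation (the empty coalition can never have one); in that degenerate case the contradiction comes from the payoff-$1$ normalization itself ($\balpha=\bbeta$ would wrong some agent), a glossing your parenthetical already covers and which the paper's own proof of Theorem~\ref{thm:linkSNEsafeWithSanity} shares.
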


\begin{proof}
    That result will be an immediate consequence of the stronger Theorem~\ref{thm:linkSNEsafeWithSanity}, in Subsection~\ref{ssec:compliant}.
\end{proof}

Note that, beyond the link between safety and SSEs, this theorem also shows that games with Boolean objectives are sufficient to capture our notion of safe protocols.
This (possibly surprising) result can also be seen as a consequence of the following property of SSEs: if the strategy profile $\bsigma$ is an SSE in the game $\Game$, with payoff functions $(\mu_i)_{i \in \Pi}$, then it is also an SSE in the same game where each $\mu_i$ has been replaced by the Boolean payoff function $\mu'_i$ that maps each play $\pi$ to the payoff $1$ if $\mu_i(\pi) \geq \mu_i\< \bsigma \>$, and to the payoff $0$ otherwise.

\subsection{Finite models of protocols}

Modeling and verifying protocols through games is only possible when available behaviors are generated from a finite object; the most frequent underlying game model being \emph{games on graph}, where players navigate a finite graph, the player owning the vertex choosing the next edge.
In what follows, given a set $S$, we write $S^*$ (resp. $S^\omega$) the set of finite (resp. infinite) sequences $s_1\dots s_n$ (resp. $s_1 s_2 \dots$) of elements of $S$.

\begin{defi}[Game on graph]\label{def:gameGraph}
    A \emph{game on graph} is a tuple $\Game=\left(\Pi,V,E,\left(\mu_i\right)_{i\in\Pi}\right)$ where:
    \begin{itemize}
        \item $\Pi$ is a set of players;
        \item $V = \biguplus_{i\in\Pi} V_i$ is the set of vertices, partitioned according to which player owns each vertex;
        \item $E\subseteq V\times V$ is the set of edges, which is assumed to create no sink vertex: $\forall v\in V, \exists v'\in V, (v,v')\in E$;
        \item for each player $i$, the function $\mu_i:V^\omega\rightarrow\RR$ is the payoff function of player $i$.
    \end{itemize}
    A game is initialized by specifying an initial vertex $v_0\in V$.
    Game $\Game$ initialized in $v_0$ is denoted $\Game_{\|v_0}$.
\end{defi}

We write $\mu: V^\omega\rightarrow\RR^\Pi$ the payoff vector function defined by $\mu(\rho)=\left(\mu_i(\rho)\right)_{i\in\Pi}$ for $\rho\in V^\omega$.
A path in the graph $(V, E)$ is called \emph{history} when it is finite, and \emph{play} when it is infinite.

An initialized game on graph is a particular case of a game, as defined in Definition~\ref{def:generalGame}.
Indeed, strategies for a given player $i$ can be defined as mappings $\sigma_i: V^*V_i \to V$ such that $(v,\sigma_i(wv))\in E$ for any $w\in V^*$.
Then, from an initial vertex $v_0$, a strategy profile $\bsigma=\left(\sigma_i\right)_{i\in\Pi}$ (one strategy per player) generates a single infinite path $\< \bsigma\> = \pi_0\pi_1\cdots$ with $\pi_0 = v_0$, such that for every $j\in \NN$, if $\pi_j\in V_i$, then $\pi_{j+1} = \sigma_i(\pi_0\cdots \pi_j)$.
One can then define the payoff $\mu_i(\bsigma)$ as $\mu_i(\< \bsigma \>)$.


\begin{exa}\label{ex:finiteModelTripartiteExchange}
    Recall the item exchange of Section~\ref{sec:motivatingExampleInfinite}.
A blunt representation of the possible states of the system is in means remembering, for each of the three agents, the set of items it currently possesses.
That means an infinite (and even non-enumerable!) set of vertices.
Instead, if we add the assumption that no player $i$ will send the item $\iij_{n+1}$ before they have received both items $\iki_n$, one can model the infinite exchange as a single round of the exchange, repeating infinitely.
Such an assumption is reasonable: indeed, a player $i$ that sends an item before the previous round is finished could be easily scammed.
%
%
\end{exa}

Some strategies, called \emph{finite memory strategies}, can themselves be encoded by finite objects, called \emph{Mealy machines}.
That will in particular be the case of the strategy profiles representing the protocols we wish to verify.
In such a machine, the states represent the memory of the player.
At each turn, the current vertex in the game is read, firing a transition to a new state; and if the current vertex is controlled by the player using the machine, then the transition outputs a new vertex, to which the player goes.
We do not impose determinism on Mealy machines, since we represent a protocol as a set of strategy profiles and not necessarily as a singleton.
We will see in Section~\ref{sec:algorithms} that non-determinism does not bring additional complexity to the decision problems we study.

\begin{defi}[Mealy machine]
    A \emph{Mealy machine for player $i$} on a game $\Game$ is a tuple $\Mach = (Q, q_0, \Delta)$, where $Q$ is a finite set of \emph{states}, where $q_0 \in Q$ is the \emph{initial state}, and where $\Delta \subseteq (Q \times V_{-i} \times Q) \cup (Q \times V_i \times Q \times V)$ is a finite set of \emph{transitions}, such that for every $(p, u, q, v) \in \Delta$, we have $uv \in E$, and such that for every $p \in Q$ and $u \in V$, there exists a transition $(p, u, q)$ or $(p, u, q, v) \in \Delta$.
        When for each $p$ and $u$, there exists exactly one such transition, we say that the machine $\Mach$ is \emph{deterministic}.
		
		A strategy $\sigma_i$ in $\Game_{\|v_0}$ is \emph{compatible} with $\Mach$ if there exists a mapping $h \mapsto q_h$ that maps every history $h$ in $\Game_{\|v_0}$ to a state $q_h \in Q$, such that for every $hv \in \Hist_{-i} \Game_{\|v_0}$, we have $(q_h, v, q_{hv}) \in \Delta$, and for every $hv \in \Hist_i \Game_{\|v_0}$, we have $(q_h, v, q_{hv}, \sigma_i(hv)) \in \Delta$.
		The set of strategies in $\Game_{\|v_0}$ compatible with $\Mach$ is written $\Comp_{\|v_0}(\Mach)$; these strategies are called \emph{finite-memory}.
		If $\Mach$ is deterministic, then there is exactly one strategy compatible with $\Mach$.

We define analogously Mealy machines capturing strategy profiles for several players, and in particular for the whole set $\Pi$.
In such a case, we simply call it \emph{machine on the game $\Game$}, with no mention of a player.
\end{defi}

\subsection{About compliant infrastructures} \label{ssec:compliant}

The proposed model relies on the assumption that agents are rational and therefore would always prefer to satisfy their individual objective and to avoid being wronged.
However, it is often necessary to consider a special agent whose case is slightly different: the \emph{Infrastructure}, written $\II$, which may do anything and help any player, as long as it satisfies some compliance hypothesis.
For example, it is a common assumption that (neutral) communication channels can delay messages for an arbitrary duration, but cannot drop a message altogether.
In the case of protocol challenges and games specified abstractly, this can be done explicitly: the set $\behaviors_\II$ does not contain the behaviors we choose to exclude.
But when we want to see a protocol challenge as a game played on a finite graph, some behaviors that are assumed to be impossible cannot be excluded by the game structure itself: a finite arena that enables arbitrary delays of messages necessarily contains a path in which they are never delivered.
As a result, such spurious behaviors must be included in the model, but the payoff function must be designed in a way that guarantees that they will never be used.

This is why, although the Infrastructure is not an agent endowed with reason and acting according to its interest, hence we always have $\Wrong_\II = \emptyset$, we define a \emph{compliance hypothesis} that the Infrastructure will always guarantee.

The definition of what constitutes a protocol challenge and an attack can then be updated as follows:

\begin{defi}[Protocol challenge \emph{with compliance hypothesis}]
    A \emph{protocol challenge} is a tuple:
    \[\pc = \left(\agents, (\behaviors_a)_{a \in \agents}, (\Wrong_a)_{a \in \agents}, \Com, \Corr\right),\] where:

    \begin{itemize}
    \item $\left(\agents, (\behaviors_a)_{a \in \agents}, (\Wrong_a)_{a \in \agents}, \Corr\right)$ is a protocol challenge;

    \item the agent \emph{Infrastructure}, written $\II \in \agents$, has an empty wronging condition $\Wrong_\II = \emptyset$;

    \item the \emph{compliance hypothesis} $\Com \subseteq \behaviors_\agents$ is such that the Infrastructure has a least one \emph{compliant behavior}, i.e. a behavior $\beta_\II$ such that for every behavior $\bbeta_{-\II}$, we have $(\beta_\II, \bbeta_{-\II}) \in \Com$.
\end{itemize}
\end{defi}

\begin{defi}[Attack, safe protocol \emph{under compliance hypothesis}]
    An \emph{attack} of the \emph{alliance} $A \subseteq \agents$ to the protocol $P$ in the protocol challenge $\pc$ is a tuple of behaviors $\balpha$, such that:
    \begin{itemize}
        \item $\balpha$ is an attack on $P$, in the sense of Definition~\ref{def:attack}, in the protocol challenge $\left(\agents, (\behaviors_a)_{a \in \agents},\right.$ $\left.(\Wrong_a)_{a \in \agents}, \Corr\right)$;
    
        \item and $\balpha$ satisfies the compliance hypothesis $\Com$.
    \end{itemize}

    The protocol $P$ is \emph{safe} if there is no attack to $P$, and if in every implementation of $P$, the Infrastructure's behavior is compliant.

    A \emph{positive answer} to a protocol challenge with compliance hypothesis is a protocol $P \subseteq \Corr$ that is safe under compliance hypothesis.
\end{defi}

\begin{exa}
    In the item exchange with explicit Infrastructure described in Example~\ref{ex:protocol}, consider the following deviation to our protocol: the Infrastructure delivers $\textit{item}_2$ to Alice, but keeps $\textit{item}_1$ forever.
    Then, Alice gets her individual goal satisfied (without even effectively deviating), and Bob is wronged: even when he calls the TTP, she cannot arrange the situation, since Alice has already sent $\textit{item}_1$.
    However, in such a scenario, the compliance hypothesis is no longer satisfied, hence it is not an attack under compliance hypothesis.
\end{exa}

\begin{rk}\emph{Why is the compliance hypothesis not encoded in the Infrastructure's wronging condition?}    
Intuitively, it may seem a natural abstraction to define the Infrastructure's wronging condition as the negation of its compliance hypothesis: it can then help any agent who tries to attack others, as long as the compliance hypothesis is maintained.
However, the compliance hypothesis cannot fit such a formalism: the wronging condition is what an agent \emph{cannot accept} to help other agents, while non-compliance is what the Infrastructure \emph{cannot do}.
A protocol can perfectly be safe if, in a scenario in which some agent, say Bob, deviates, some other agent, say Alice, accepts to be wronged in order to wrong Bob and, thus, punish him.
There can also be scenarii in which Alice cannot avoid to be wronged.
In contrast, the Infrastructure must always guarantee its compliance hypothesis whatever the other agents do.
\end{rk}

The addition of the compliance hypothesis does not hinder the conversion of the problem into a game.
The payoffs are defined as follows:
\begin{itemize}
    \item for $a \neq \II$:
    $$\mu_i: \bsigma \mapsto \begin{cases}
        0 & \text{if } \bsigma \in \Wrong_a \cap \Com \\
        1 & \text{if } \bsigma \in \overline{\Wrong_a} \cup \overline{\Com}
    \end{cases};$$

    \item for the Infrastructure:
    $$\mu_\II: \begin{cases}
        0 & \text{if } \bsigma \in \overline{\Com} \\
        1 & \text{if } \bsigma \in \Com
    \end{cases}.$$
\end{itemize}

As we will see in the formal proof, the fact that non-compliance cancels the wronging conditions of the non-Infrastructure players will ensure that it is in the best interest of the Infrastructure to play according to a compliant strategy.

\begin{thm}[Safe protocols and SSEs, with compliance hypothesis]\label{thm:linkSNEsafeWithSanity}
    Let $P$ be a protocol in the protocol challenge with compliance hypothesis $\pc$.
    The protocol $P$ is safe if and only if each $\bsigma \in P$ is an SSE satisfying $\mu_i(\bsigma) = 1$ for every player~$i$.
\end{thm}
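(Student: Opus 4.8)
The plan is to prove the equivalence by matching, on one side, attacks of an alliance $A$ to the protocol and, on the other side, harmful deviations of the coalition $C = A$ away from an implementation $\bsigma \in P$, reading everything through the two payoff tables. The whole argument rests on two elementary consequences of those tables, which I would record first. \textbf{(i)} If a profile $\bbeta$ is non-compliant, i.e. $\bbeta \in \overline{\Com}$, then every non-Infrastructure agent $a$ satisfies $\mu_a(\bbeta) = 1$ while $\mu_\II(\bbeta) = 0$. \textbf{(ii)} If $\bbeta \in \Com$, then for every non-Infrastructure $a$ we have $\mu_a(\bbeta) = 0 \iff \bbeta \in \Wrong_a$, while $\mu_\II(\bbeta) = 1$. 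Fact (i) is the formal counterpart of ``non-compliance cancels the wronging conditions'': under a non-compliant profile nobody but the Infrastructure can be harmed, and the Infrastructure alone absorbs the loss. I would also use that the second clause of safety says precisely that, in every implementation, $\sigma_\II$ is a \emph{compliant behavior} in the strong sense $(\sigma_\II, \cdot) \in \Com$.

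For the forward direction I would fix $\bsigma \in P$ and first show that all payoffs equal $1$. Safety makes $\sigma_\II$ a compliant behavior, so $\bsigma \in \Com$ and $\mu_\II(\bsigma) = 1$; the no-attack property applied to the empty alliance (whose only candidate attack is the compliant profile $\bsigma$ itself) rules out $\bsigma \in \Wrong_a$ for any $a$, whence all non-Infrastructure payoffs are $1$ by (ii). Next, assuming a harmful deviation $\bsigma'_C$ with $\balpha := (\bsigma_{-C}, \bsigma'_C)$, I would argue $\balpha \in \Com$ in every case: if $\II \notin C$ the Infrastructure still plays the compliant behavior $\sigma_\II$, so compliance is automatic; if $\II \in C$ then $\mu_\II(\balpha) \geq \mu_\II(\bsigma) = 1$ forces $\balpha \in \Com$. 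With $\balpha$ compliant, the harmed player $j \notin C$ has $\mu_j(\balpha) = 0$, hence $j \neq \II$ and $\balpha \in \Wrong_j$ by (ii), while every $i \in C$ has $\mu_i(\balpha) = 1$, hence $\balpha \notin \Wrong_i$ (by (ii), or because $\Wrong_\II = \emptyset$). This exhibits $\balpha$ as an attack of the alliance $A = C$, contradicting safety, so $\bsigma$ is an SSE.

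For the converse, assume every $\bsigma \in P$ is an SSE with all payoffs $1$, and check the two clauses of safety. To obtain that $\sigma_\II$ is a compliant behavior in each implementation, I would argue contrapositively: if some $\bbeta_{-\II}$ made $(\sigma_\II, \bbeta_{-\II}) \in \overline{\Com}$, then by (i) the coalition $C = \agents \setminus \{\II\}$ deviating to $\bbeta_{-\II}$ keeps all its own payoffs at $1$ while dropping $\mu_\II$ from $1$ to $0$ --- a harmful deviation, contradicting the SSE hypothesis. For the absence of attacks, I would take a putative attack $\balpha$ of an alliance $A$, let $\bsigma \in P$ be the implementation witnessing that non-members follow the protocol so that $\balpha = (\bsigma_{-A}, \balpha_A)$, and read it as a candidate harmful deviation of $C = A$: compliance of $\balpha$ together with (ii) turns ``no attacker is wronged'' into $\mu_i(\balpha) = 1$ for $i \in A$, and ``some $j \notin A$ is wronged'' into $\mu_j(\balpha) = 0 < 1 = \mu_j(\bsigma)$, again contradicting the SSE property.

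The main obstacle I anticipate is the careful handling of the Infrastructure across the compliant/non-compliant dichotomy: one must ensure that non-compliant deviations can never serve as attacks (they only ever harm the Infrastructure, which has no wronging condition) and, conversely, that the SSE requirement is strong enough to force $\sigma_\II$ to be compliant against \emph{every} adversarial choice of the other agents, not merely along the nominal play $\bsigma$. Fact (i) is exactly what makes both go through, so stating it cleanly and getting its two uses right is the crux; the rest (matching alliances with coalitions and the payoffs $0/1$ with the $\Wrong_a$/$\Com$ predicates) is routine bookkeeping. Finally, Theorem~\ref{thm:linkSNEsafeNoInfrastructure} is recovered as the special case of a trivial compliance hypothesis $\Com = \behaviors_\agents$, under which $\overline{\Com} = \emptyset$, the payoff $\mu_\II$ is constantly $1$, every behavior of $\II$ is compliant, and the compliance clause of safety becomes vacuous.
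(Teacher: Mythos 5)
Your proof is correct and follows essentially the same route as the paper's: the same translation of attacks into harmful deviations (and back), the same case split on whether $\II$ belongs to the deviating coalition, the same use of the empty alliance to get payoff $1$ everywhere, and the same deviation by the coalition $\agents \setminus \{\II\}$ to force compliance of $\sigma_\II$. The only cosmetic difference is that you isolate the two payoff-table facts (i)/(ii) as explicit lemmas, which the paper uses implicitly.
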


\begin{proof}~
    \begin{itemize}
        \item \emph{If $P$ is safe under compliance hypothesis.}
        
        Let $\bsigma \in P$.
        From the compliance hypothesis, $\bsigma\in\Com$, so $\mu_\II(\bsigma)=1$.
        Since $P$ has no attack from the empty alliance, we have $\bsigma \not\in \Wrong_a$, i.e. $\mu_a(\bsigma) = 1$, for every agent $a\neq \II$.

        Let us now prove that $\bsigma$ is an SSE.
        Assume that there exists a coalition $C \subseteq \agents$, that has a harmful deviation $\bsigma'_C$.
        Let $\balpha=(\bsigma_{-C},\bsigma_C')$.
        We have:
        
        \begin{itemize}
            \item $\mu_i(\balpha) = 1$ for all $i \in C$;
            
            \item and there exists $j \not\in C$ such that $\mu_i(\balpha) = 0$.
        \end{itemize}

        Then, if we have $\II \in C$, then $\mu_i(\balpha) = 1$, i.e. $\balpha$ satisfies the compliance hypothesis; and if $\II \not\in C$, then $\alpha_\II = \beta_\II$ is a compliant behavior, hence we also have $\balpha \in \Com$ and $\mu_\II(\balpha) = 1$.
        Therefore, we have $j \neq \II$, which means that agent $j$ is wronged in $\balpha$, while no agent belonging to $C$ is: as a consequence, the behavior $\balpha$ is an attack of the alliance $C$ on $P$, which contradicts its safety.

        \item \emph{If every $\bsigma \in P$ is an SSE that yields payoff $1$ to every player.}

        First, let us assume that there exists $\bsigma \in P$ such that $\sigma_\II$ is not a compliant behavior.
        Then, there exists $\bsigma'_{-\II}$ such that $(\bsigma'_{-\II}, \sigma_\II) \not\in \Com$, and therefore such that $\mu_\II(\bsigma'_{-\II}, \sigma_\II) = 0$, and $\mu_i(\bsigma'_{-\II}, \sigma_\II) = 1$ for each $i \neq \II$.
        In other words, the coalition $\agents \setminus \{\II\}$ has a harmful deviation from the strategy profile $\bsigma$, which is therefore not an SSE: contradiction.

        Let us now assume that there is an attack $\balpha$ of the alliance $A \subseteq \agents$ on the protocol $P$.
        Let $\bsigma$ be an implementation of $P$ such that $\balpha_{\agents \setminus A} = \bsigma_{\agents \setminus A}$.
        By assumption, the strategy profile $\bsigma$ is an SSE that gives every player a payoff equal to $1$.
        
        Consider the coalition $A$: for each agent $i \in A \setminus \{\II\}$, we have $\balpha \notin \Wrong_i$, hence $\mu_i(\balpha) = 1$; and since the compliance hypothesis is satisfied in $\balpha$, we also have $\mu_\II(\balpha) = 1$.
        Moreover, there exists $j \not\in A$ such that $\balpha \in \Wrong_j$, i.e. $\mu_j(\balpha) = 0$.
        As a consequence, the strategy profile $\balpha_A$ is a harmful deviation of $\bsigma_A$, which is impossible since $\bsigma$ is an SSE: there is therefore no attack to $P$, which is then safe under compliance hypothesis.\qedhere
    \end{itemize}
\end{proof}

    \section{Modeling message exchanges in games}\label{sec:examples}

In this section, we argue that finite games are, despite their relative simplicity, expressive enough to capture a wide range of fair exchange protocols. We first provide a general technique to encode these problems, and apply it to the Zhou-Gollmann optimistic protocol.
We show in\ifArxiv\ Appendix~\ref{app:BaumWaidner}\else~\cite{arxivVersion}\fi\ that it can also be used to model (several variations of) the Baum-Waidner multi-party contract signing protocol.
\ifArxiv We \else Below, we \fi also provide a more formal description of the tripartite exchange of Section~\ref{sec:motivatingExampleInfinite}, where safety relies on agents' rationality.

\subsection{Protocol challenges for message exchange}

Not every protocol analysis can be effectively translated into a \emph{finite} game on graphs.
One main limitation is protocols that are defined using an a priori unbounded variety of messages.
On the other hand, fair exchange protocols such as contract signing (whether two-party or multi-party), rely on a finite variety of messages: messages that do not conform in structure to what is expected can be discarded (or even trigger the termination of the protocol).
As a result, by treating cryptographic primitives (which are assumed to be perfect) in a symbolic way, one can limit the state-space to encode the set of messages that each participant has already received.



The only actions that an agent can perform is to send a message to another agent.
The set of possible actions (the sending of a messages) is denoted $\actions$.
From the set of messages currently known, agents have the possibility to send all messages they are allowed to based on rules describing the protocol.
When the agent is the Infrastructure, the action is the relaying of a message to its intended recipient; this correspond to the action of said recipient receiving the message.
Recall that in order to assume that the communication channels are reliable, \emph{i.e.} that they cannot delay a message infinitely, it is sufficient to include that in the Infrastructure's compliance hypothesis.

Usually, protocol models are concurrent.
Here, in order to fit in the framework of turn-based graph games, we bypass concurrency by including a round-robin mechanism: agents take turn in sending messages (or just skip their turn); as a result the state space also includes the identity of the active agent.

This can be modeled by a game on graph $2^\actions\times\agents$.
From a state $(v,a)$, the sending of a set of messages $M\subseteq\actions$ from by $a$ means choosing as successor state $(v\cup M,a')$, where $a'$ is the next agent in the round-robin.
Abusively, we say that an agent ``sends messages $x,y,z$'' when $v'\setminus v=\{x,y,z\}$.
It is sometimes useful to refer to the absence of action ($v=v'$) as a special action $\noAct$.
The behavior of each agent is the choice of the message to be emitted when in a given state.

        \subsection{The Zhou-Gollmann optimistic protocol}\label{sec:ZhouGollmann}

The Zhou-Gollmann optimistic protocol~\cite{ZhouGollmann97} allows for the sending of a message $M$ from Alice ($\alice$) to Bob ($\bob$) along with the exchange of \emph{non-repudiation} proofs between them.
The protocol only relies on a \emph{trusted third party} (TTP, $\ttp$) as a recovery process, hence the \emph{optimistic} nature of the protocol.
The TTP is connected to both $\alice$ and $\bob$ by resilient communication channels that ensure immediate transmission of messages.
Alice and Bob are, however, connected through channels that can delay messages for an arbitrarily long time; it is often assumed that messages are always indeed delivered and not delayed forever.
Messages are never altered by the channels.

The protocol is performed as follows.
Alice first sends the message $M$ ciphered with a fresh key $k$, along with the \emph{evidence of origin} of the message: the ciphered message and some metadata, signed with $\alice$'s private key.
This whole message will be denoted as $\EOO$.
Bob replies with an \emph{evidence of receipt} message $\EOR$: the ciphered  $M$ and some metadata, signed with $\bob$'s private key.
Alice then sends the key $k$, again with evidence of origin in the form of a privately signed string containing $k$ and the metadata; altogether this message is denoted $\EOO_k$.
Bob replies with an evidence of receipt of $k$, denoted $\EOR_k$, made of a privately signed $k$ and some metadata.
Remark that having both the ciphered message and the key, Bob is now able to obtain $M$.
The non-repudiation proof of origin ($\NRO$) is made of the evidence of origin of both $M$ and $k$.
Conversely, the non-repudiation proof of receipt ($\NRR$) is made of the evidence of receipt of both $M$ and $k$.

In the case where $\alice$ never receives the $\EOR$, she stops the protocol and the transmission of $M$ is aborted.
If Alice has, however, received the $\EOR$ but not the $\EOR_k$, a situation in which she is at risk of being cheated since Bob now has $M$ and the $\NRO$, Alice can initiate a recovery protocol with the TTP.
In this case she sends key $k$ along with a signed evidence of origin $\Sub_k$ to $\ttp$.
The TTP can then publish $k$ along with a signed confirmation $\Con_k$ that acts as the evidence of receipt for Alice.
This confirmation can also be used by Bob as a confirmation of origin; in the case Alice had initiated the recovery process without first actually sending $\EOO_k$, Bob can still recover it from the TTP.

As the Infrastructure may delay messages, it makes sense for this protocol to have a commonly agreed time after which it is considered that messages that were not received were actually never sent and the protocol should be stopped.
So $\alice$ indicates a deadline $t$ in her metadata: after $t$, the recovery protocol cannot be initialized.
As a result, after $t$, Bob can ask the TTP for the key and $\Con_k$ if he has not received $\EOO_k$ yet.
If the TTP does not have it, then the protocol is aborted.
This means that Alice must initiate the recovery protocol before $t$; the TTP would just refuse to provide $\Con_k$ after $t$.

The rest of the metadata in this protocol allows each participant to identify protocol runs and prevents man-in-the-middle type attacks, but is not relevant to the setting of this work, and will be ignored henceforth.

\bigskip

In our framework, both $\alice$, $\bob$, and $\ttp$ are agents, and $\II$ is the Infrastructure.
The actions available to the agents as the messages they can send, and a special $\noAct$ action that amounts to not doing anything.
When $X\in\{\alice,\bob\}$ sends a message $m$ to the other, $Y$, they actually sends it to $\II$ which relays it (maybe later, if ever) to its recipient.
This is modeled by two actions $m^\s$, a message from $X$ to $\II$, and $m^\r$, from $\II$ to $Y$.
As communication with the TTP is perfect, the Infrastructure is not involved in that case.
Except for $\Con_k$, each message has a given recipient which will not be specified in the corresponding action:
\begin{itemize}
    \item $\alice$ can execute $\EOO^\s$, $\EOO_k^\s$, $\Sub_k$, and $\noAct_\alice$.
    \item $\bob$ can execute $\EOR^\s$, $\EOR_k^\s$, and $\noAct_\bob$.
    \item $\ttp$ can execute $\Con_k^\alice$ (to Alice), $\Con_k^\bob$ (to Bob), and $\noAct_\ttp$.
    \item $\II$ can execute $m^\r$ for $m\in\{\EOO,\EOR,\EOO_k,\EOR_k\}$, and $\noAct_\II$.
\end{itemize}
The rules prevent the sending of a message containing elements unknown to an agent yet.
So the Infrastructure cannot send $m^\r$ before $m^\s$ has occurred.
And as evidence of receipt contains the message received (in some form), Bob cannot execute $\EOR^\s$ (resp. $\EOR_k^\s$) before $\EOO^\r$ (resp. $\EOO_k^\r$) occurred.
Similarly, $\ttp$ cannot execute $\Con_k^\alice$ or $\Con_k^\bob$ before $\Sub_k$.

As a notation, we identify the fact that an action (except $\noAct$s) has been executed with a propositional variable, and collectively denote them as the set $\actions$.
The variable becomes true once the action has been executed, and remains so.
That amounts to define the arena as the turn-based game on vertices $2^\actions\times\{\alice,\bob,\ttp,\II\}$.
\ifPagesLimited
The set of transitions $E$ that ensures the above rules are respected and uses a round-robin to rotate through players is formally given in Appendix~\ref{app:ZhouGollmannTransitions}.
\else

\ifPagesLimited
The set of transitions $E$ is formally defined as follows, from the rules described in Section~\ref{sec:ZhouGollmann} and using a round-robin to rotate through players.
\fi
For $v\in 2^\actions$, we can see it as a predicate: for $x\in\actions$, $v(x) \stackrel{\text{\tiny def}}{=} x\in v$.
$E$ is defined as the largest subset of $\left(2^\actions\times\{\alice,\bob,\ttp,\II\}\right)^2$ such that:
\begin{itemize}
    \item If \((v,X)(v',Y)\in E\) then \((X,Y) \in \{(\alice,\bob),(\bob,\ttp),(\ttp,\II),(\II,\alice)\}\).
    \item If \((v,X)(v',Y) \in E\) then \(v\subseteq v'\).
    \item If \((v,\alice)(v',\bob) \in E\) then \((v'\setminus v) \subseteq\{\EOO^\s,\EOO_k^\s,\Sub_k\}\).
    \item If \((v,\bob)(v',\ttp)\in E\) then \((v'\setminus v) \subseteq\{\EOR^\s,\EOR_k^\s\}\), \(v'(\EOR^\s)\Rightarrow v(\EOO^\r)\), and \(v'(\EOR_k^\s) \Rightarrow v(\EOO_k^\r)\).
    \item If \((v,\ttp)(v',\II)\in E\) then \((v'\setminus v) \subseteq \{\Con_k^\alice,\Con_k^\bob\}\) and \(v'\neq v \Rightarrow \Sub_k \in v\).
    \item If \((v,\II)(v',\alice)\in E\) then \((v'\setminus v) \subseteq\left\{m^\r\mid m\in\{\EOO,\EOO_k,\EOR,\EOR_k\}\right\}\) and for \(m \in \{\EOO,\EOO_k,\EOR,\EOR_k\}\), \(v'(m^\r) \Rightarrow v(m^\s)\).
\end{itemize}
\fi

All conditions can then be expressed as Linear Temporal Logic (LTL) formulas~\cite{Pnueli77}.
This logic allows reasoning over traces by using time modalities.
The \emph{eventually} modality, written $\eventually \varphi$, means that the formula $\varphi$ must be true at some time in the future.
Dually, the \emph{globally} modality, written $\globally \varphi$, means that $\varphi$ must be true at all time from now on.

We first define (boolean) formulas for the non-repudiation proofs:
\begin{mathpar}
\NRO = \EOO^\r \wedge (\EOO_k^\r \vee \Con_k^\bob)
\and
\NRR = \EOR^\r \wedge (\EOR_k^\r \vee \Con_k^\alice)
\end{mathpar}
Then we can define the wrongness condition of Alice and Bob as wanting their non-repudiation proof to be obtained by the other without obtaining a non-repudiation proof themselves.
Otherwise said, Alice is wronged if $\NRO$ is eventually true and $\NRR$ is never (\emph{i.e.} always not) true.
In LTL syntax, this is written as:
\begin{mathpar}
\Wrong_\alice = \eventually \NRO \wedge \globally \neg\NRR
\and
\Wrong_\bob = \eventually \NRR \wedge \globally \neg\NRO
\end{mathpar}
The wrongness condition of the TTP is defined to correspond to situation where at least one of the other agents is wronged, since its goal is to ensure a fair exchange of said proofs:
\[\Wrong_\ttp = \Wrong_\alice \cup \Wrong_\bob\]
The compliance hypothesis models the assumption that the Infrastructure cannot delay messages forever.
\[\Com = \bigwedge_{m\in\{\EOO,\EOR,\EOO_k,\EOR_k\}} \globally (m^\s \Rightarrow \eventually m^\r)\]
Note that there is a way for the Infrastructure to ensure compliance regardless of the behaviors of others, for example by immediately transmitting any message it receives.
Finally, the correctness of the protocol is the full exchange of both non-repudiation proofs:
\[\Corr = \eventually \NRR \wedge \eventually \NRO\]

Remark that all these conditions can be encoded with relatively simple parity conditions (see Definition~\ref{def:parityCondition} below) over the arena.
For example for $\Wrong_\alice$, give priority $1$ to vertices where $\NRR$ holds, $2$ to other vertices where $\NRO$ holds, and $3$ elsewhere.
In the corresponding game, the winning condition combines the negation of the wrongness condition with the negation of the compliance condition.
As the arena is a directed acyclic graph, this can still be done by a simple parity condition.
Namely, vertices where  $\NRR$ holds or there is a sent message that has not been transmitted get priority $0$, other vertices where $\NRO$ holds have priority $1$, and all remaining vertices get priority $2$.

The protocol as described above where the Infrastructure chooses to transmit messages without delay and Alice chooses a deadline that is $\theta$ time units after the start of the protocol can be modeled by the Mealy machine $\M_\theta$ of \figurename~\ref{fig:mealyZGprotocol}\ifPagesLimited{ in Appendix~\ref{app:ZhouGollmannMealy}}\else.

Transitions in of \figurename~\ref{fig:mealyZGprotocol} are read as follows: $\varphi,X \,|\, a,Y$ means if the game is in a state $(v,X)$ where $v\models \varphi$, then go to state $(v\cup a,Y)$.
To provide a deterministic and complete machine, all conditions $\varphi$ (associated with a given agent) should be mutually exclusive and their disjunction amount to $\top$.
Transitions $m^\s, \II \,|\, \{m^\r\},\alice$ is actually the abbreviation of $16=2^4$ transitions.
For each subset $Z\subseteq\{\EOO,\EOR,\EOO_k,\EOR_k\}$, we have the transition
\[
\left. \bigwedge_{m\in Z} m^\s \wedge \bigwedge_{m\notin Z} \neg m^\s, \II ~\right|~ \{m^\r | m\in Z\},\alice
\]
in the machine.
This means that every message that has been sent must be received.

 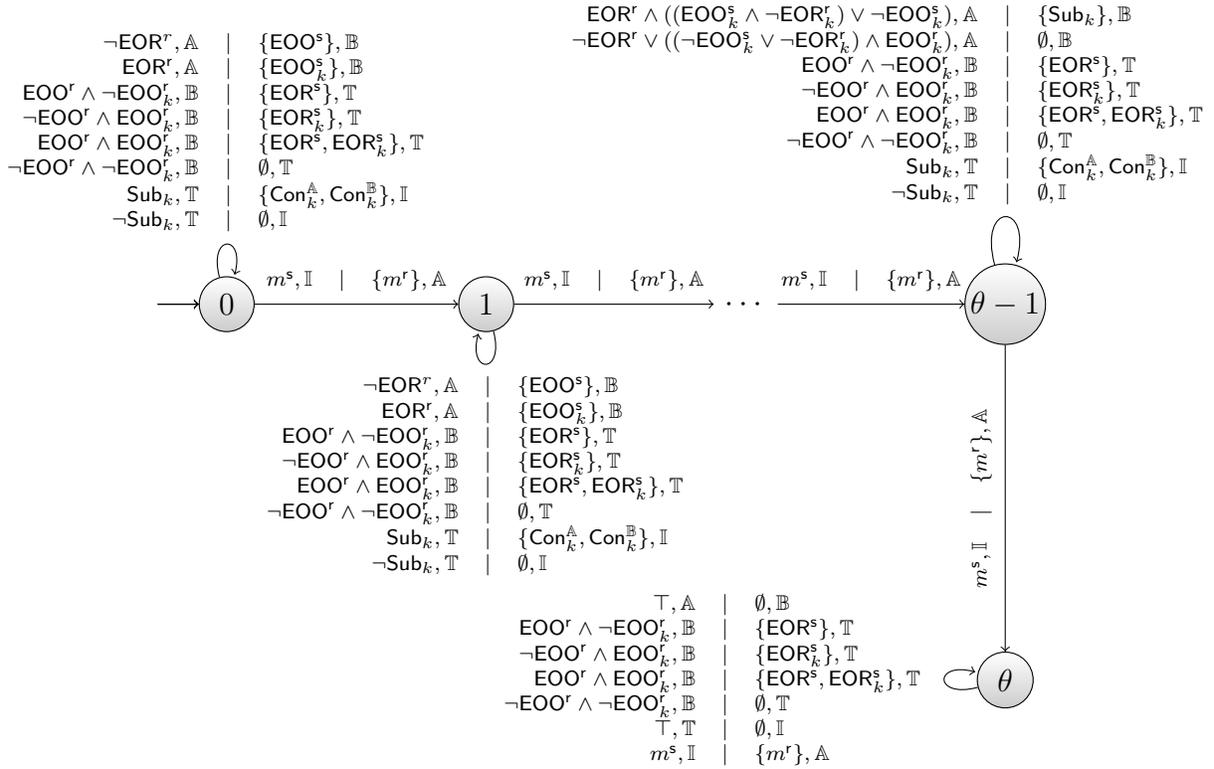
\begin{figure*}
     \centering
     \begin{tikzpicture}[auto,node distance=3.45cm]
     \node[state,initial] (s0) {0};
     \node[state,right of=s0] (s1) {1};
     \node[right of=s1] (dots) {$\cdots$};
     \node[state,right of=dots,inner sep=1pt] (sTauMinus1) {$\theta-1$};
     \node[state,below of=sTauMinus1,node distance=5cm] (sTau) {$\theta$};

     \path[mealyTrans] (s0) edge[loop above] node[xshift=-0.15cm] {%
        \(\begin{array}{rcl}
        \neg\EOR^r,\alice &|& \{\EOO^\s\}, \bob \\
        \EOR^\r, \alice &|& \{\EOO_k^\s\},\bob \\
        \EOO^\r\wedge\neg\EOO_k^\r, \bob &|& \{\EOR^\s\}, \ttp \\
        \neg\EOO^\r\wedge\EOO_k^\r, \bob &|& \{\EOR_k^\s\}, \ttp \\
        \EOO^\r\wedge\EOO_k^\r, \bob &|& \{\EOR^\s,\EOR_k^\s\}, \ttp \\
        \neg\EOO^\r\wedge\neg\EOO_k^\r, \bob &|& \emptyset, \ttp \\
        \Sub_k, \ttp &|& \{\Con_k^\alice, \Con_k^\bob\}, \II \\
        \neg\Sub_k, \ttp &|& \emptyset, \II
        \end{array}\)%
     } (s0);
     \path[mealyTrans] (s0) edge node  {%
        \(\begin{array}{rcl}
        m^\s, \II &|& \{m^\r\},\alice
        \end{array}\)%
     } (s1);
     \path[mealyTrans] (s1) edge[loop below] node[xshift=-0.15cm] {%
        \(\begin{array}{rcl}
        \neg\EOR^r,\alice &|& \{\EOO^\s\}, \bob \\
        \EOR^\r, \alice &|& \{\EOO_k^\s\},\bob \\
        \EOO^\r\wedge\neg\EOO_k^\r, \bob &|& \{\EOR^\s\}, \ttp \\
        \neg\EOO^\r\wedge\EOO_k^\r, \bob &|& \{\EOR_k^\s\}, \ttp \\
        \EOO^\r\wedge\EOO_k^\r, \bob &|& \{\EOR^\s,\EOR_k^\s\}, \ttp \\
        \neg\EOO^\r\wedge\neg\EOO_k^\r, \bob &|& \emptyset, \ttp \\
        \Sub_k, \ttp &|& \{\Con_k^\alice, \Con_k^\bob\}, \II \\
        \neg\Sub_k, \ttp &|& \emptyset, \II
        \end{array}\)%
     } (s1);
     \path[mealyTrans] (s1) edge node  {%
        \(\begin{array}{rcl}
        m^\s, \II &|& \{m^\r\},\alice
        \end{array}\)%
     } (dots);
     \path[mealyTrans] (dots) edge node  {%
        \(\begin{array}{rcl}
        m^\s, \II &|& \{m^\r\},\alice
        \end{array}\)%
     } (sTauMinus1);
     \path[mealyTrans] (sTauMinus1) edge[loop above] node[xshift=-1.57cm] {%
        \(\begin{array}{rcl}
        \EOR^\r\wedge((\EOO_k^\s\wedge\neg\EOR_k^\r) \vee \neg\EOO_k^\s),\alice &|& \{\Sub_k\}, \bob \\
        \neg\EOR^\r\vee((\neg\EOO_k^\s\vee\neg\EOR_k^\r)\wedge\EOO_k^\r),\alice &|& \emptyset, \bob \\
        \EOO^\r\wedge\neg\EOO_k^\r, \bob &|& \{\EOR^\s\}, \ttp \\
        \neg\EOO^\r\wedge\EOO_k^\r, \bob &|& \{\EOR_k^\s\}, \ttp \\
        \EOO^\r\wedge\EOO_k^\r, \bob &|& \{\EOR^\s,\EOR_k^\s\}, \ttp \\
        \neg\EOO^\r\wedge\neg\EOO_k^\r, \bob &|& \emptyset, \ttp \\
        \Sub_k, \ttp &|& \{\Con_k^\alice, \Con_k^\bob\}, \II \\
        \neg\Sub_k, \ttp &|& \emptyset, \II
        \end{array}\)%
     } (sTauMinus1);
     \path[mealyTrans] (sTauMinus1) edge node[anchor=south,rotate=90]  {%
        \(\begin{array}{rcl}
        m^\s, \II &|& \{m^\r\},\alice
        \end{array}\)%
     } (sTau);
     \path[mealyTrans] (sTau) edge[loop left] node {%
        \(\begin{array}{rcl}
        \top,\alice &|& \emptyset, \bob \\
        \EOO^\r\wedge\neg\EOO_k^\r, \bob &|& \{\EOR^\s\}, \ttp \\
        \neg\EOO^\r\wedge\EOO_k^\r, \bob &|& \{\EOR_k^\s\}, \ttp \\
        \EOO^\r\wedge\EOO_k^\r, \bob &|& \{\EOR^\s,\EOR_k^\s\}, \ttp \\
        \neg\EOO^\r\wedge\neg\EOO_k^\r, \bob &|& \emptyset, \ttp \\
        \top, \ttp &|& \emptyset, \II \\
        m^\s, \II &|& \{m^\r\},\alice
        \end{array}\)%
     } (sTau);
     \end{tikzpicture}
     \caption[Mealy machine $\M_\tau$ for the Zhou-Gollmann Protocol]{Mealy machine $\M_\theta$ for the Zhou-Gollmann Protocol. $\theta$ specifies the number of steps each agent is allowed until the deadline $t$ is reached.}
     \label{fig:mealyZGprotocol}
 \end{figure*}

\fi

\begin{rk}[On previous verification efforts] The Zhou-Gollmann optimistic protocol was verified in~\cite{KremerRaskin2003}. The authors considered a list of verification conditions expressed in alternating time temporal logic (ATL)~\cite{DBLP:journals/jacm/AlurHK02}, an extension of branching time logics that allows reasoning about the strategic abilities of agents in alternating transition systems. Those alternating transition systems are equivalent to the game graphs used in our framework. The properties verified in that paper are logically equivalent to ensuring that the Zhou-Gollmann optimistic protocol enforces an SSE. Thus, their verification requirements align with the correctness condition implied by our framework.

However, our method operates at a higher level. We only need to express the correctness condition of the protocol and the wronging conditions for each participating agent. The semantics of the protocol challenge then automatically define the conditions that need to be verified on the game graph, relieving the specifier from reasoning about possible coalitions and their objectives. While our framework provides guarantees equivalent to those obtained through the verification tasks in~\cite{KremerRaskin2003}, it significantly reduces the specifier's burden by eliminating the need for ad hoc reasoning about agent rationality and potential coalitions that could threaten the protocol.
\end{rk}

\subsection{Beyond trust (continued)}\label{sec:motivatingExampleInfiniteFiniteModel}

    Recall the tripartite item exchange of Section~\ref{sec:motivatingExampleInfinite}.
    As explained in Example~\ref{ex:finiteModelTripartiteExchange}, this can be modeled as a finite arena as long as it is assumed that the process takes place in rounds.

In this setting, the goal of each player is still to exchange items infinitely often, and they can punish a player deviating from the predefined protocol by stopping exchanges with them.
Therefore the arena of the game consists of vertices remembering the current ownership of the six different items, regardless of their index.
A player's wronging condition is expressed by not having received an item after sending theirs (being \emph{scammed}) or not being able to continue the exchange when they have not scammed another player.
        Formally, for each player $i$, the wronging condition of $i$ is given by the following LTL formula:
        \begin{mathpar}
            \Wrong_i = \left(\exists j, \Scam_{j,i}\right) \vee (\exists j, \exists k, \neg \eventually \left( i \text{ has } \obj{j}{i}_k \right) \wedge \forall j' \neg \Scam_{i,j'})
            \\\text{where}\and\Scam_{i,j} = \exists k, \forall \ell\leq k, \eventually \left(i \text{ does not have } \obj{j}{i}_\ell \right) \wedge \neg \eventually j \text{ has } \obj{i}{j}_k.
        \end{mathpar}

In the game formalism, determining which players are wronged can be expressed through an automaton reading the infinite sequence of states of the arena.
A player, say Alice, wins by visiting infinitely often a vertex where the exchange of the current round is finished (i.e. when she owns $\iba$ and $\ica$) or when she successfully ``stole'' an item, for example owning $\iab$, $\iba$, $\iac$.
One can go one step further and externalize the winning conditions for each player into a separate automaton, in this case a B\"uchi\footnote{A \B\"uchi automaton accepts an infinite run if it visits at least one accepting state infinitely.} automaton.
The arena can then be reduced to a simple round-robin allowing players to alternatively send an item to any other participant, or skip their turn (see \figurename~\ref{fig:infiniteExchangeArena}).
Each B\"uchi automaton's state remembers what the player currently possesses: a player who sends an item that they do not currently have leads the parity automaton to a sink (not accepting) state $\bot$.
When a player has reached a state where the exchanges for this round have been performed, the automaton loops back to the initial state, meaning the next round is now being played.
The accepting states are the states where a player has finished the round or when they have stolen an item.
This is partially depicted in \figurename~\ref{fig:infiniteExchangeParityAuto}.
This separation does simplify the model as it separates the knowledge of each player, although it does not reduce the actual complexity of the analysis (see Section~\ref{sec:algorithms}).%
\begin{figure*}
    \centering
    \ifArxiv
    \begin{tikzpicture}[node distance=3cm]
        \tikzstyle{neutral}=[state,shape=rounded rectangle, inner xsep=2pt,inner ysep=3pt,draw]
        
        \node[state,player1] (va) at (0,0) {$v_\alice$};
        \node[state,player2] (vb) at (7,0) {$v_\bob$};
        \node[state,player3] (vc) at (3.5,-5) {$v_\charlie$};

        \path[->] (va) edge[bend left=90,distance=4cm] (vb);
        \path[->] (vb) edge[bend left] (vc);
        \path[->] (vc) edge[bend left] (va);

        \node[neutral,above right of=va,anchor=south west] (sABB) {$\alice$ sends $\iab$ to $\bob$};
        \node[neutral,below right of=va,anchor=north west] (sBCC) {$\alice$ sends $\ibc$ to $\charlie$};
            \node[neutral] (sABC) at (barycentric cs:sABB=4,sBCC=1) {$\alice$ sends $\iab$ to $\charlie$};
            \node[neutral] (sACB) at (barycentric cs:sABB=3,sBCC=2) {$\alice$ sends $\iac$ to $\bob$};
            \node[neutral] (sACC) at (barycentric cs:sABB=2,sBCC=3) {$\alice$ sends $\iac$ to $\charlie$};
            \node (sDots) at (barycentric cs:sACC=1,sBCC=1) {$\vdots$};
        \path[->] (va) edge[bend left=30] (sABB);
        \path[->] (sABB) edge[bend left=30] (vb);
        \path[->] (va) edge[bend left=20] (sABC);
        \path[->] (sABC) edge[bend left=20] (vb);
        \path[->] (va) edge[bend left=9] (sACB);
        \path[->] (sACB) edge[bend left=9] (vb);
        \path[->] (va) edge[bend right=9] (sACC);
        \path[->] (sACC) edge[bend right=9] (vb);
        \path[->] (va) edge[bend right=30] (sBCC);
        \path[->] (sBCC) edge[bend right=30] (vb);

        \node[neutral,dashed,shape=rectangle, rounded corners=9pt,anchor=west,text width=3.25cm,text centered,inner sep=5pt] (bobSends) at (7.5,-3) {$\bob$ sends $\obj{i}{j}$ to $k$\\$i \in\{\alice,\bob,\charlie\}$, $j,k \in \{\alice,\charlie\}$\\$i\neq j$};
        \path[->] (vb) edge[bend left] (bobSends);
        \path[->] (bobSends) edge[bend left] (vc);
        
        \node[neutral,dashed,shape=rectangle, rounded corners=9pt,anchor=east,text width=3.25cm,text centered,inner sep=5pt] (charlieSends) at (-0.5,-3) {$\charlie$ sends $\obj{i}{j}$ to $k$\\$i\in\{\alice,\bob,\charlie\}$, $j,k \in \{\alice,\bob\}$\\$i\neq j$};
        \path[->] (vc) edge[bend left] (charlieSends);
        \path[->] (charlieSends) edge[bend left] (va);
        \end{tikzpicture}
    \else
    \begin{tikzpicture}[node distance=2cm,yscale=0.55]
        \tikzstyle{neutral}=[state,shape=rounded rectangle, inner xsep=2pt,inner ysep=3pt,draw]
        
        \node[state,player1] (va) at (0,0) {$v_\alice$};
        \node[state,player2] (vb) at (5.5,0) {$v_\bob$};
        \node[state,player3] (vc) at (2.75,-5) {$v_\charlie$};

        \path[->] (va) edge[bend left=90,distance=5cm] (vb);
        \path[->] (vb) edge[bend left] (vc);
        \path[->] (vc) edge[bend left] (va);

        \node[neutral,above right of=va,anchor=south west] (sABB) {$\alice$ sends $\iab$ to $\bob$};
        \node[neutral,below right of=va,anchor=north west] (sBCC) {$\alice$ sends $\ibc$ to $\charlie$};
            \node[neutral] (sABC) at (barycentric cs:sABB=4,sBCC=1) {$\alice$ sends $\iab$ to $\charlie$};
            \node[neutral] (sACB) at (barycentric cs:sABB=3,sBCC=2) {$\alice$ sends $\iac$ to $\bob$};
            \node[neutral] (sACC) at (barycentric cs:sABB=2,sBCC=3) {$\alice$ sends $\iac$ to $\charlie$};
            \node (sDots) at (barycentric cs:sACC=1,sBCC=1) {$\vdots$};
        \path[->] (va) edge[bend left=30] (sABB);
        \path[->] (sABB) edge[bend left=30] (vb);
        \path[->] (va) edge[bend left=20] (sABC);
        \path[->] (sABC) edge[bend left=20] (vb);
        \path[->] (va) edge[bend left=9] (sACB);
        \path[->] (sACB) edge[bend left=9] (vb);
        \path[->] (va) edge[bend right=9] (sACC);
        \path[->] (sACC) edge[bend right=9] (vb);
        \path[->] (va) edge[bend right=30] (sBCC);
        \path[->] (sBCC) edge[bend right=30] (vb);

        \node[neutral,dashed,shape=rectangle, rounded corners=9pt,anchor=west,text width=3.25cm,text centered,inner sep=5pt] (bobSends) at (7,-3) {$\bob$ sends $\obj{i}{j}$ to $k$\\$i \in\{\alice,\bob,\charlie\}$, $j,k \in \{\alice,\charlie\}$\\$i\neq j$};
        \path[->] (vb) edge[bend left] (bobSends);
        \path[->] (bobSends) edge[bend left] (vc);
        
        \node[neutral,dashed,shape=rectangle, rounded corners=9pt,anchor=east,text width=3.25cm,text centered,inner sep=5pt] (charlieSends) at (-1.5,-3) {$\charlie$ sends $\obj{i}{j}$ to $k$\\$i\in\{\alice,\bob,\charlie\}$, $j,k \in \{\alice,\bob\}$\\$i\neq j$};
        \path[->] (vc) edge[bend left] (charlieSends);
        \path[->] (charlieSends) edge[bend left] (va);
    \end{tikzpicture}
    \fi
    \caption{Arena for the exchange of items between three players (partial depiction).}
    \label{fig:infiniteExchangeArena}
\end{figure*}
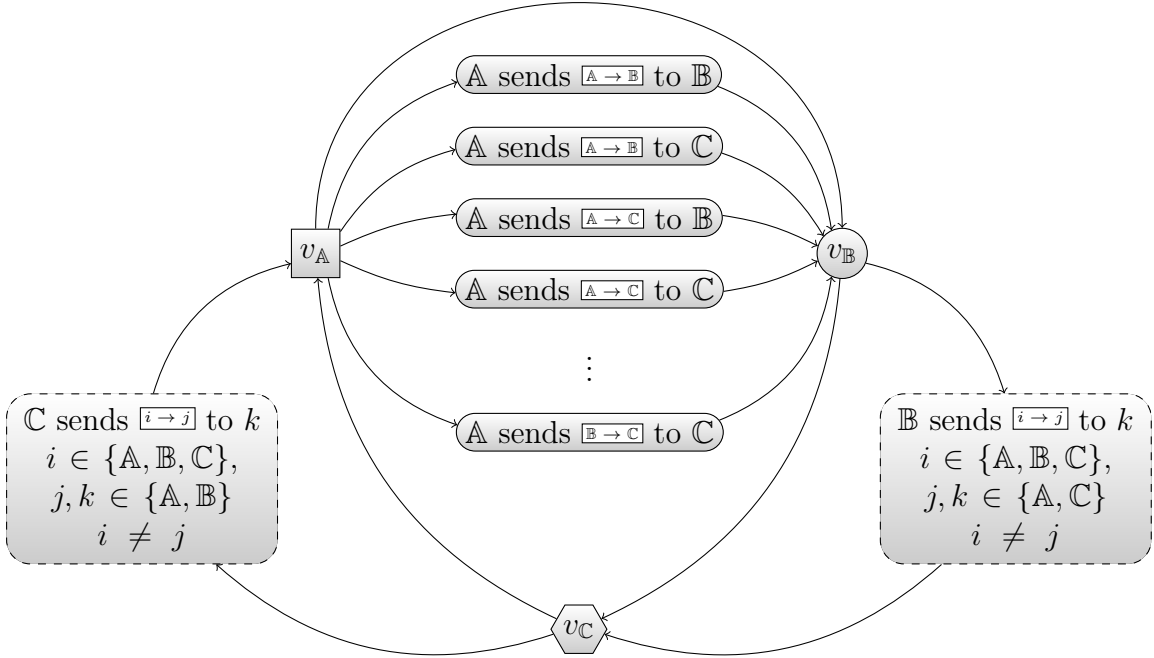

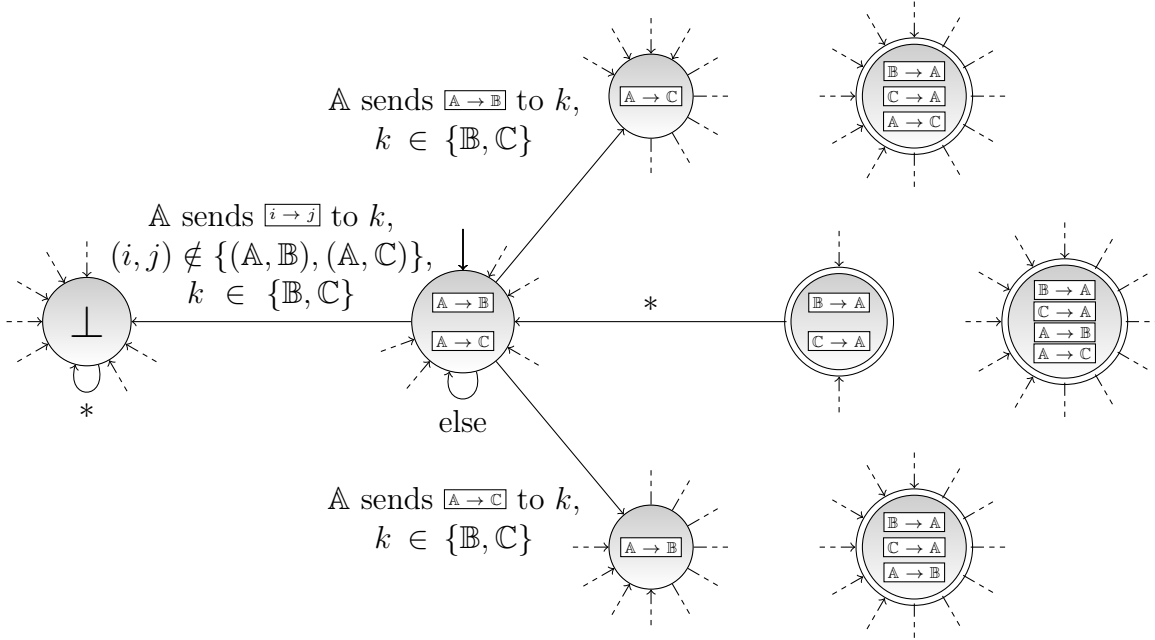
\begin{figure*}
    \centering
    \ifArxiv
    \begin{tikzpicture}[auto]
    \tikzstyle{parState}=[automatonState,text width=1cm,minimum width=1cm,inner sep=1pt,text centered]
    \tikzstyle{accepting}+=[double distance=2pt]
        \node[parState,font=\Large] (fail) at (-5,0) {$\bot$};
        \node[parState,initial,initial above] (ab-ac) at (0,0) {$\iab$\\$\iac$};
        \node[parState] (ac) at (2.5,3) {$\iac$};
        \node[parState] (ab) at (2.5,-3) {$\iab$};
        \node[parState,accepting] (ba-ca) at (5,0) {$\iba$\\$\ica$};
        \node[parState,accepting] (ba-ca-ac) at (6,3) {$\iba$\\[-5pt]$\ica$\\[-5pt]$\iac$};
        \node[parState,accepting] (ba-ca-ab) at (6,-3) {$\iba$\\[-5pt]$\ica$\\[-5pt]$\iab$};
        \node[parState,accepting] (ba-ca-ab-ac) at (8,0) {$\iba$\\[-6.5pt]$\ica$\\[-6.5pt]$\iab$\\[-6.5pt]$\iac$};

        \path[->] (ab-ac) edge node[swap,text width=4.375cm,text centered] {$\alice$ sends $\obj{i}{j}$ to $k$,\\$(i,j)\notin\{(\alice,\bob),(\alice,\charlie)\}$,\\$k\in\{\bob,\charlie\}$} (fail);
        \path[->] (fail) edge[loop below,distance=0.5cm] node {$*$} (fail);
        \path[->] (ab-ac) edge node[pos=0.75,text width=3.375cm,text centered] {$\alice$ sends $\iab$ to $k$,\\$k\in\{\bob,\charlie\}$} (ac);
        \path[->] (ab-ac) edge node[swap,pos=0.75,text width=3.375cm,text centered] {$\alice$ sends $\iac$ to $k$,\\$k\in\{\bob,\charlie\}$} (ab);
        \path[->] (ab-ac) edge[loop below,distance=0.5cm] node {else} (ab-ac);
        \path[->] (ba-ca) edge node[swap] {$*$} (ab-ac);

        \outVanishingArrows{ac}{-90,-60,...,30}
        \inVanishingArrows{ac}{60,90,...,150}
        \outVanishingArrows{ab}{90,60,...,-60}
        \inVanishingArrows{ab}{-90,-120,...,-180}
        \inVanishingArrows{fail}{-60,-30,90,120,150,180,210}
        \outVanishingArrows{ba-ca-ac}{-90,-60,...,60}
        \inVanishingArrows[shorten <=1pt]{ba-ca-ac}{90,120,...,240}
        \outVanishingArrows{ba-ca-ab}{-90,-60,...,60}
        \inVanishingArrows[shorten <=1pt]{ba-ca-ab}{90,120,...,240}
        \outVanishingArrows{ba-ca-ab-ac}{-90,-60,...,60}
        \inVanishingArrows[shorten <=1pt]{ba-ca-ab-ac}{90,120,...,240}
        \inVanishingArrows{ab-ac}{200,230,30,-30,60}
        \inVanishingArrows[shorten <=1pt]{ba-ca}{90,-90}
    \end{tikzpicture}
    \else
    \begin{tikzpicture}[auto,yscale=0.666]
    \tikzstyle{parState}=[automatonState,text width=1cm,minimum width=1cm,inner sep=0pt,text centered]
    \tikzstyle{accepting}+=[double distance=2pt]
        \node[parState,font=\Large] (fail) at (-5,0) {$\bot$};
        \node[parState,initial,initial above] (ab-ac) at (0,0) {$\iab$\\$\iac$};
        \node[parState] (ac) at (3,2.5) {$\iac$};
        \node[parState] (ab) at (3,-2.5) {$\iab$};
        \node[parState,accepting] (ba-ca) at (5,0) {$\iba$\\$\ica$};
        \node[parState,accepting] (ba-ca-ac) at (6.5,2.5) {$\iba$\\[-5pt]$\ica$\\[-5pt]$\iac$};
        \node[parState,accepting] (ba-ca-ab) at (6.5,-2.5) {$\iba$\\[-5pt]$\ica$\\[-5pt]$\iab$};
        \node[parState,accepting] (ba-ca-ab-ac) at (8,0) {$\iba$\\[-6.5pt]$\ica$\\[-6.5pt]$\iab$\\[-6.5pt]$\iac$};

        \path[->] (ab-ac) edge node[swap,text width=4.375cm,text centered] {$\alice$ sends $\obj{i}{j}$ to $k$,\\$(i,j)\notin\{(\alice,\bob),(\alice,\charlie)\}$,\\$k\in\{\bob,\charlie\}$} (fail);
        \path[->] (fail) edge[loop below,distance=0.5cm] node {$*$} (fail);
        \path[->] (ab-ac) edge node[pos=0.85,text width=3.375cm,text centered] {$\alice$ sends $\iab$ to $k$,\\$k\in\{\bob,\charlie\}$} (ac);
        \path[->] (ab-ac) edge node[swap,pos=0.85,text width=3.375cm,text centered] {$\alice$ sends $\iac$ to $k$,\\$k\in\{\bob,\charlie\}$} (ab);
        \path[->] (ab-ac) edge[loop below,distance=0.5cm] node {else} (ab-ac);
        \path[->] (ba-ca) edge node[swap] {$*$} (ab-ac);

        \outVanishingArrows{ac}{-90,-60,...,30}
        \inVanishingArrows{ac}{60,90,...,150}
        \outVanishingArrows{ab}{90,60,...,-60}
        \inVanishingArrows{ab}{-90,-120,...,-180}
        \inVanishingArrows{fail}{-60,-30,90,120,150,180,210}
        \outVanishingArrows{ba-ca-ac}{-90,-60,...,60}
        \inVanishingArrows[shorten <=1pt]{ba-ca-ac}{90,120,...,240}
        \outVanishingArrows{ba-ca-ab}{-90,-60,...,60}
        \inVanishingArrows[shorten <=1pt]{ba-ca-ab}{90,120,...,240}
        \outVanishingArrows{ba-ca-ab-ac}{-90,-60,...,60}
        \inVanishingArrows[shorten <=1pt]{ba-ca-ab-ac}{90,120,...,240}
        \inVanishingArrows{ab-ac}{200,230,-50,60}
        \inVanishingArrows[shorten <=1pt]{ba-ca}{90,-90}
    \end{tikzpicture}
    \fi
    \caption[B\"uchi automaton $\auto_\alice$ in the tripartite exchange of items protocol challenge.]{B\"uchi automaton $\auto_\alice$ in the tripartite exchange of items protocol challenge (partial depiction). Doubly rounded states are accepting.}
    \label{fig:infiniteExchangeParityAuto}
\end{figure*}

\section{Algorithms}\label{sec:algorithms}

In this section, we study decision problems about SSEs for games on graphs.
Therefore, from now on, by \emph{game}, we always mean game on graph.
We focus on the decision problems that arise from the links we have drawn with security protocols.
First, we study the SSE-checking problem, which appears when a given protocol has to be formally verified.
Second, we study the fixed-payoff SSE existence problem, which appears when one wants to know whether a positive answer exist for a given protocol challenge.

Since in all the examples we have seen before, the wrongness conditions of the agents and the compliance hypotheses, and therefore the objectives of the players in the corresponding games, were $\omega$-regular objectives that could be described by LTL formulae, we study those problems in two classes of games with $\omega$-regular objectives: \emph{$\omega$-regular games} in which the objectives are defined by a parity automaton (a canonical and succinct way to represent $\omega$-regular objectives), and \emph{parity games}, in which the objectives are simple enough to be described by labeling directly the vertices of the game instead of the states of an automaton.
We do not consider objectives that are directly described by an LTL formula, because such a representation is so succinct that classical algorithms to handle it (model-checking, synthesis) have a very high complexity, that would hide the complexity of the interesting part of our problems.

We also consider only perfect information games, even though actual protocols are of course often hiding information to agents.
We justify that choice with the following arguments.

\begin{itemize}
    \item The most important hidden or revealed information (typically, a cryptographic key) can be abstracted, and therefore captured by the arena itself, rather than by imperfect information.
    
    \item Protocol safety cannot rely on obscurity.
    Thus, relying on randomized strategies and imperfect information only makes an attack unlikely, instead of impossible.
    Therefore, from the attackers' perspective, information does not matter: one must consider the case where they would chose the right actions to succeed in their attack, be it by chance.
    In other words, they must be assumed to have perfect information.

    \item Thus, not relying on hidden information is only a constraint that the protocols must satisfy: it does not intervene in checking that a protocol is safe.
    The only decision problem where considering imperfect information might be a useful is therefore the SSE existence problem, if one wants to avoid the synthesis of protocols where the agents' actions rely on information they cannot be sure of.
    That is a track for possible future works ---~however, one can already assume that complexities will be much worse.
\end{itemize}

        \subsection{Definition of the problems}

For a given play $\alpha$, we write $\Occ(\alpha)$ (resp. $\Inf(\alpha)$) for the set of vertices that appear (resp. appear infinitely often) in $\alpha$.

\begin{defi}[Parity condition]\label{def:parityCondition}
    Let $(V, E)$, and let $\kappa: V \to \NN$ be a mapping, called \emph{color mapping}.
    The set $\Parity(\kappa)$ is the set of infinite paths $\alpha$ in $(V, E)$, such that $\min_{v \in \Inf(\alpha)} \kappa(v)$ is an even integer.
\end{defi}

\begin{defi}[Parity automaton]
    A {\em parity automaton} over alphabet $X$ is a tuple $\auto=(X,S,s_0,T,\kappa)$ where $S$ is a set of states, $T:S\times X \to S$ is a deterministic transition function, and $\kappa: S \to \NN$ is a color mapping.
    Given a word $w=w_0w_1\cdots\in X^\omega$, the run of $\auto$ on $w$ is $\rho_\auto(w)=s_0 s_1 \cdots$ where for $i\in \NN$, $s_{i+1}=T(s_i,w_i)$.
    The set of infinite words accepted by $\auto$ is $\Lang(\auto)= \left\{w\in X^\omega \mmid \rho_\auto(w)\in\Parity(\kappa)\right\}$.
\end{defi}

It is well-known that the set of languages recognized by parity automata is exactly the set of $\omega$-regular languages.
Now, we consider two classes of games, that fit with the games that are in practice obtained from protocol challenges: games in which the payoff awarded to a player is defined by a parity condition, or by a parity automaton.

\begin{defi}[Parity game]
    A \emph{parity game} is a multi-player game $\Game_{\|v_0}$ such that there exists an integer $k$ and for each player $i$, a color mapping $\kappa_{i}: V \to \{0, \dots, k-1\}$, such that for each play $\pi$, we have $\mu_i(\pi) = 1$ iff $\pi\in \Parity(\kappa_{i})\}$.
\end{defi}

\begin{defi}[$\omega$-regular game]
    An \emph{$\omega$-regular game} is a multi-player game $\Game_{\|v_0}$ such that there exists an integer $k$, and for each player $i$, a parity automaton $\auto_{i}$ over alphabet $V$ using at most $k$ colors, such that  for each play $\pi$, we have $\mu_i(\pi) = 1$ iff $\pi \in \Lang(\auto_{i})$.
\end{defi}

Remark that a parity game is a particular case of an $\omega$-regular game, where all automata $\auto_{i}$ have the same structure: one state $s_v$ per vertex $v$ of the game to remember the last visited vertex.
The transition function is $T(s_v,v')=s_{v'}$.
For each automaton $\auto_{ij}$, the color of $s_v$ is $\kappa_{ij}(v)$.

\begin{exa}
The model for the tripartite exchange of items as described in Section~\ref{sec:motivatingExampleInfiniteFiniteModel} uses B\"uchi automata, which are parity automata using only two colors: $0$ for accepting states and $1$ for other states.

It is possible to add an explicit Infrastructure agent, with the compliance hypothesis stating that every message sent is eventually received (following the model of resilient channels); or, equivalently, every message is infinitely not in waiting (since a message cannot be received if it has not been sent before).
The winning condition for the Infrastructure can therefore be encoded into a parity automaton $\auto_\II$ (of size $2^m\cdot m$ states, where $m$ is the number of different messages) with two priorities $0$ and $1$ (a B\"uchi automaton).
The parity automata for other players have to be modified in order to take into account that a player can win if the compliance hypothesis is not satisfied.
That means, \emph{e.g.} for Alice, combining the co-B\"uchi automaton obtained by negating the winning condition of $\auto_\II$ with the B\"uchi automaton of \figurename~\ref{fig:infiniteExchangeParityAuto}, into a product automaton with parities $0$, $1$, and $2$.
\end{exa}

We can now define the decision problems we will study.
First, given a protocol, we want to be able to verify that it is safe --- from a game perspective, that amounts to decide, given the description of a strategy profile, whether all strategy profiles compatible with that description are SSEs.
Two approaches are possible regarding how the strategy profile is described: with one unique machine (monolithic), or one per agent (composite).

\begin{pb}[SSE-checking problem]
    Given a parity game (resp. $\omega$-regular game) $\Game_{\|v_0}$ and a Mealy machine $\Mach$ on $\Game$, is every strategy profile compatible with $\Mach$ an SSE?
\end{pb}

\begin{pb}[Compositional SSE-checking problem]
    Given a parity game (resp. $\omega$-regular game) $\Game_{\|v_0}$ and a family of Mealy machines $(\Mach_i)_{i \in \Pi}$ on $\Game$, is every strategy profile $\bsigma$, with $\sigma_i \in \Comp(\Mach_i)$ for each $i$, an SSE?
\end{pb}

In terms of protocols, solving the checking problem means verifying whether a protocol is safe.
Checking that a protocol is a positive answer to a protocol challenge also requires checking (independently) whether all implementations comply with the correctness condition, which is done through classical automata techniques (and with lower complexities than the problems at hand).
The compositional version can be used when protocols are given by a set of local descriptions of the procedure for each agent.

Later, we will also study a more complex problem: deciding the existence of a safe protocol in a given protocol challenge.
That amounts to search for an SSE with a given outcome in a given game.

\begin{pb}[Fixed-payoff SSE existence problem]
    Given a parity game (resp. $\omega$-regular game) $\Game_{\|v_0}$ and a payoff vector $\bx$, does there exist an SSE $\bsigma$ in $\Game_{\|v_0}$ such that $\mu\< \bsigma \> = \bx$?
\end{pb}

Since we need to ensure that the produced protocol, i.e. the SSE, satisfies the correctness property, one can specify additional constraints on $\bsigma$.
We assume that those constraints are given in the same formalism as the payoffs, i.e. through a parity condition or an automaton.

\begin{itemize}
    \item For a parity game, the correctness condition is given by a coloring mapping $\kappa_\Corr$.
    A path $\alpha$ satisfies $\Corr$, written $\alpha \vDash \Corr$, if $\alpha \in\Parity(\kappa_\Corr)$.
    \item For an $\omega$-regular game, the correctness condition is given by a parity automaton $\auto_\Corr$.
    A path $\alpha$ satisfies $\Corr$, written $\alpha \vDash \Corr$, if $\alpha \in \Lang(\auto_\Corr)$.
\end{itemize}
\begin{pb}[Fixed-payoff SSE constrained existence problem]
    Given a parity game (resp. $\omega$-regular game) $\Game_{\|v_0}$ and a payoff vector $\bx$, does there exist an SSE $\bsigma$ in $\Game_{\|v_0}$ such that $\mu\< \bsigma \> = \bx$ and $\< \bsigma \> \vDash \Corr$?
\end{pb}

The constrained existence problem is a more general case of the existence problem (with trivial parity coloring that goes to $0$ and single-state automaton with color $0$), and does not actually bring any additional complexity to the algorithms: from a complexity point of view, this is akin to adding an extra player; note that this player would need to be treated slightly differently as it does not count as being harmed by a deviation, and therefore we choose to treat it separately in the \ifPagesLimited{construction (see Definition~\ref{def:devGame})}\else{proofs  (Section~\ref{sec:existenceSSE})}\fi.

Solving the constrained existence problem allows to check whether there is a potential positive answer to a protocol challenge.

\ifArxiv\relax\else

        \subsection{Tools}

In this subsection, we give the main algorithmic tools that will be used to solve those decision problems with optimal complexities.
The proofs and the actual algorithms can be found in the extended version of the paper~\cite{arxivVersion}.

Let us first present the method we will use for all the algorithms solving SSE-checking problems: construct the \emph{product graph} of the game and the Mealy machine(s), in order to search for a pair of paths (a \emph{witness pair}) that would correspond to a harmful deviation.

\begin{defi}[Product graph]\label{def:product_graph}
    Consider a graph game $\Game_{\|v_0}$, and a Mealy machine $\Mach$ on $\Game$.
    We define the product $\Game_{\|v_0} \otimes \Mach$ as the graph $(V \times S, E^\times)$ where $(u, s)(v, t) \in E^\times$ if and only if we have $uv \in E$, and if there exists $w \in V$ with $(s, u, t, w) \in \Delta$.
    If $(s, u, t, v) \in \Delta$, then the edge $(u, s)(v, t) \in E$ is called a \emph{non-deviating} edge, and otherwise, it is called an \emph{$i$-deviating} edge, where $i$ is the player controlling the vertex $u$.
    Given a path $\alpha = (u_0, s_0)(u_1, s_1) \dots$ in $\Game_{\|v_0} \otimes \Mach$, we write $\dalpha$ the projection $u_0u_1 \dots$.

    We call \emph{witness pair} in $\Game_{\|v_0} \otimes \Mach$ a pair $(\alpha, \beta)$ of paths, starting from the vertex $(v_0, s_0)$, such that:
        \begin{itemize}
            \item the path $\alpha$ has no deviating edge;

            \item if $k$ is the least index such that $\alpha_k \neq \beta_k$, then the edge $\alpha_{k-1}\beta_k$ is deviating;

            \item there is one player $i$ such that $\mu_i(\dalpha) > \mu_i(\dbeta)$;

            \item for every player $i$ such that $\beta$ contains an $i$-deviating edge, we have $\mu_i(\dalpha) \leq \mu_i(\dbeta)$.
        \end{itemize}
\end{defi}

Note that the player $i$ who has a strictly lower payoff in $\dbeta$ than in $\dalpha$ cannot deviate.
Then, as expected, the existence of a witness pair is equivalent to the existence of a harmful deviation in one of the strategy profiles compatible with the Mealy machine.

We prove in~\cite{arxivVersion} that there exists a strategy profile $\bsigma$ compatible with the machine $\Mach$ that is not an SSE if and only if there exists a witness pair in $\Game_{\|v_0} \otimes \Mach$.
Such a pair can be found in non-deterministic polynomial time if $\Mach$ is an entry of the problem, and in polynomial space if the strategies are given in a composite way, hence the complexities that come in the next subsection.

Let us now move to another tool, that will be used to solve the fixed-payoff (constrained) existence problem: the \emph{deviator game}.
In that game structure, one player, \emph{Prover}, tries to prove that an SSE generating the desired payoff exists, while another one, \emph{Challenger}, tries to prove that the strategy profile she constructs is actually not an SSE.
That game is the \emph{deviator game}, very similar to a construction with the same name proposed in~\cite{DBLP:conf/fossacs/Brenguier16}, in a slightly different context.

\begin{defi}[Deviator game]\label{def:devGame}
    Let $\Game_{\|v_0}$ be a graph game, let $\bx$ be a payoff vector, and $\Corr$ a correctness condition.
    The \emph{deviator game} is the initialized graph game:
    \[\Dev_{\bx,\Corr} \Game_{\|v_0} = \left( \{\PP, \CC\}, V^\cd, E^\cd, \left(V^\cd_\PP, V^\cd_\CC\right), \mu^\cd\right)_{\|v_0^\cd},\]
    where:

     \begin{itemize}
         \item the player $\PP$ is called \emph{Prover}, and the player $\CC$ \emph{Challenger}.

         \item Prover controls the set $V^\cd_\PP = V \times 2^\Pi$, and Challenger the set $V^\cd_\CC = E \times 2^\Pi$.

         \item The initial vertex is $v_0^\cd = (v_0, \emptyset)$.

         \item The edge set $E^\cd$ is defined as follows: from the vertex $(u, D)$, Prover can go to every vertex of the form $(uv, D) \in E$ (she \emph{proposes} the edge $uv$).
         From the vertex $(uv, D)$, Challenger can go to the vertex $(v, D)$ (he \emph{accepts} the edge $uv$), or to every vertex $(w, D \cup \{i\})$, with $w \neq v$ and $uw \in E$, and where $i$ is the player controlling $u$ (he \emph{deviates} from Prover's proposal and $i$ is added to the set of deviators).

         \item Given a play $\chi$ is this game, we write $\dchi$ the play in $\Game$ constructed by the actions of Prover and Challenger: if $\chi = (u_0, D_0)(u_0v_0, D_0)(u_1, D_1) (u_1v_1, D_1) \dots$, then we define $\dchi = u_0 u_1 \dots$.
         We also define $\D(\chi) = \bigcup_k D_k$, the set of players who deviated along the play $\chi$.

         \item Then, $\mu^\cd$ is the Boolean payoff function defined as follows. A play $\chi$ is won by Challenger if and only if either:
         \begin{itemize}
             \item we have $\D(\chi) = \emptyset$ and $\mu(\dchi) \neq \bx$;
             \item we have $\D(\chi) = \emptyset$ and $\dchi \not\in \Corr$;
             \item or for every player $i \in \D(\chi)$, we have $\mu_i(\dchi) \geq x_i$, and there exists a player $j \notin \Pi$ such that $\mu_j(\dchi) < x_j$.
         \end{itemize}
     \end{itemize}
\end{defi}


We prove in~\cite{arxivVersion} that Prover has a winning strategy in the game $\Dev_{\bx,\Corr} \Game_{\|v_0}$ if and only if there exists an SSE $\bsigma$ in $\Game_{\|v_0}$ with $\mu\< \bsigma \> = \bx$ and $\< \bsigma \>\vDash \Corr$.
Thus, solving the fixed-payoff (constrained) existence problem amounts to solving the corresponding deviator game.
The size of that game is exponential in the number of players, but a cautious study of its shape will enable us to solve it in polynomial space when the objectives are parity conditions, and in exponential time when they are $\omega$-regular conditions.

\fi

        \subsection{Results}
        
We give tight complexity bounds to all those problems, and also characterize their complexities when fixing the number of players or the number of colors (i.e. the degree of complexity of the wrongness conditions), which is in practice often small compared to the size of the game or of the automaton.
We recall that a problem is \emph{fixed-parameter tractable} ($\FPT$) when the parameter $p$ is fixed if it can be solved in time at most $O\left(f(p) P(n)\right)$, where $f: \NN \to \NN$ is some mapping, $P$ is a polynomial, and $n$ is the size of the instance.
Thus, a $\FPT$ problem is in practice polynomial, if we assume that $p$ remains small.

It is \emph{slice-wise polynomial} ($\XP$) when the parameter $p$ is fixed if it can be solved in time at most $O\left(f(p) n^{g(p)}\right)$, where $f,g: \NN \to \NN$ are some mapping, and $n$ is the size of the instance.
The class of $\XP$ problems strictly contains the $\FPT$ problems.
In between these classes lie the \emph{weft} hierarchy; while $\W[0]=\FPT$, $\W[1]$-hard problems are not thought\footnote{It is unknown whether $W[1]=\FPT$, but $W[1]\neq\FPT \Rightarrow \Poly\neq \NP$.} to be solvable with an $\FPT$ algorithm for said parameter. 

These results are gathered in \tablename~\ref{tab:complexityResults}; all the proofs are available in~\cite{arxivVersion}.
An important additional information is that the lower bounds for the SSE-checking problems all hold even when the given Mealy machines are deterministic --- or, in other words, that non-determinism does not bring additional complexity.

\begin{table*}[ht]
    \centering
    \begin{tabular}{|c|c|c|c|c|c|}\cline{3-6}
       \multicolumn{2}{c|}{} & \multicolumn{2}{c|}{Parity} & \multicolumn{2}{c|}{$\omega$-regular} \\ \cline{3-6}
       \multicolumn{2}{c|}{}                        & Gen. case    & Fixed params & Gen. case    & Fixed params \\ \hline
       \multirow{2}{*}{\ifArxiv\!Checking\!\else Checking\fi}       & Monolithic & $\coNP$-c.   & $n$: $\FPT$              & \multirow{2}{*}{$\PSpace$-c.} & \multirow{2}{*}{$n$: In $\Poly$, $\W[1]$-h.}\\ \cline{2-4}
                                       & Composite  & $\PSpace$-c. & $n$: In $\Poly$, $\W[1]$-h. &                               & \\ \hline
       \multicolumn{2}{|c|}{Fixed-payoff}           & \multirow{2}{*}{$\PSpace$-c.} & $n,k$: $\FPT$       & \multirow{2}{*}{$\ExpTime$-c.} & $n,k$: In $\Poly$, $\XP$-c. \\
       \multicolumn{2}{|c|}{\ifArxiv\!\!(constrained) existence\!\!\else (constrained) existence\fi}              &              & $n=2$: Parity$^\dagger$-h.     &               & $k=2$: $\PSpace$-h. \\ \hline
    \end{tabular}
    \medskip
    
    {\small
    \begin{tabular}{rl}
        $n$:&Number of players \\
        $k$:&Number of colors \\
        $^\dagger$Parity:& 2-players Parity Games (in $\NP\cap \coNP$)
    \end{tabular}}
    \caption{Complexity results for SSE problems.}
    \label{tab:complexityResults}
\end{table*}

Let us note in particular that all those problems are decidable in complexity at most $\ExpTime$.
Therefore, we have defined a model that is expressive enough to capture usual protocols and protocol challenges, while opening new possibilities using rationality hypotheses; and that is, simultaneously, simple enough so that the natural decision problems that arise are all decidable in a reasonable complexity, in particular when the number of agents is fixed, as in these cases most of our algorithms execute in polynomial time.

\ifPagesLimited\relax\else\ifPagesLimited
\section{Algorithms: Theorem details and proofs}\label{app:algorithms}
\fi

        \subsection{SSE-checking problems}
            \subsubsection{A tool: the product graph}

Let us first present the method we will use for all the algorithms solving SSE-checking problems: construct the \emph{product graph} of the game and the Mealy machine(s), in order to search for a pair of paths (a \emph{witness pair}) that would correspond to a harmful deviation.

\begin{defi}[Product graph]\label{def:product_graph}
    Consider a graph game $\Game_{\|v_0}$, and a Mealy machine $\Mach$ on $\Game$.
    We define the product $\Game_{\|v_0} \otimes \Mach$ as the graph $(V \times S, E^\times)$ where $(u, s)(v, t) \in E^\times$ if and only if we have $uv \in E$, and if there exists $w \in V$ with $(s, u, t, w) \in \Delta$.
    If $(s, u, t, v) \in \Delta$, then the edge $(u, s)(v, t) \in E$ is called a \emph{non-deviating} edge, and otherwise, it is called an \emph{$i$-deviating} edge, where $i$ is the player controlling the vertex $u$.
    Given a path $\alpha = (u_0, s_0)(u_1, s_1) \dots$ in $\Game_{\|v_0} \otimes \Mach$, we write $\dalpha$ the projection $u_0u_1 \dots$.

    We call \emph{witness pair} in $\Game_{\|v_0} \otimes \Mach$ a pair $(\alpha, \beta)$ of paths, starting from the vertex $(v_0, s_0)$, such that:
        \begin{itemize}
            \item the path $\alpha$ has no deviating edge;

            \item if $k$ is the least index such that $\alpha_k \neq \beta_k$, then the edge $\alpha_{k-1}\beta_k$ is deviating;

            \item there is one player $i$ such that $\mu_i(\dalpha) > \mu_i(\dbeta)$;

            \item for every player $i$ such that $\beta$ contains an $i$-deviating edge, we have $\mu_i(\dalpha) \leq \mu_i(\dbeta)$.
        \end{itemize}
\end{defi}

Note that the player $i$ who has a strictly lower payoff in $\dbeta$ than in $\dalpha$ cannot deviate.
Then, as expected, the existence of a witness pair is equivalent to the existence of a harmful deviation in one of the strategy profiles compatible with the Mealy machine.

\begin{thm} \label{thm:product_graph}
    Let $\Game_{\|v_0}$ be a graph game, and let $\Mach$ be a Mealy machine on $\Game$.
    There exists a strategy profile $\bsigma$ compatible with the machine $\Mach$ that is not an SSE if and only if there exists a witness pair in $\Game_{\|v_0} \otimes \Mach$.
\end{thm}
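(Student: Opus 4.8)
The plan is to prove the two directions of the equivalence by translating, back and forth, between a pair (compatible profile $\bsigma$, harmful deviation $\bsigma'_C$) on one side and a witness pair $(\alpha,\beta)$ on the other. The guiding dictionary is: the honest outcome $\<\bsigma\>$ corresponds to the non-deviating path $\alpha$, the deviated outcome $\<(\bsigma_{-C},\bsigma'_C)\>$ corresponds to $\beta$, and the coalition $C$ corresponds to the set $C_\beta = \{i \in \Pi : \beta \text{ contains an } i\text{-deviating edge}\}$ of players who actually break away from the machine along $\beta$. The note following Definition~\ref{def:product_graph} already records the key compatibility between the two sides: a player who is strictly harmed ($\mu_i(\dalpha) > \mu_i(\dbeta)$) cannot be a deviator, hence lies outside $C_\beta$, exactly as the harmed player in a harmful deviation lies outside $C$.

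For the direction from a witness pair to a bad profile, I would first read a compatible profile off $\alpha$. Since $\alpha$ has no deviating edge, its projection $\dalpha$ follows at each step an output of $\Delta$; using the state component of $\alpha$ as the value $q_h$ on the prefixes of $\dalpha$, and extending the state labeling to all remaining histories by completeness of $\Mach$ (every state/vertex pair admits a transition), I obtain a profile $\bsigma \in \Comp_{\|v_0}(\Mach)$ with $\<\bsigma\> = \dalpha$. Next I would build the deviation: set $C = C_\beta$, let each $i \in C$ play along $\dbeta$ the moves prescribed by $\beta$ at $i$-vertices (and arbitrarily elsewhere), keeping $\bsigma_{-C}$. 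Because $\alpha$ and $\beta$ share their length-$k$ prefix and diverge through a deviating edge $\alpha_{k-1}\beta_k$ owned by some $i\in C$, and because every $j\notin C$ makes only non-deviating (hence machine-prescribed) moves along $\beta$, the profile $(\bsigma_{-C},\bsigma'_C)$ produces exactly $\dbeta$. The third witness condition yields a player with $\mu_i(\dalpha) > \mu_i(\dbeta)$, who lies outside $C$ by the fourth condition, while the fourth condition gives $\mu_i(\dalpha)\le \mu_i(\dbeta)$ for all $i\in C$. These are precisely the inequalities defining a harmful deviation, so $\bsigma$ is not an SSE.

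For the converse, I would start from a compatible $\bsigma$ admitting a harmful deviation $\bsigma'_C$ and lift both outcomes into the product. The compatibility labeling $h\mapsto q_h$ of $\bsigma$ turns $\<\bsigma\>$ into a path $\alpha$ whose edges all carry the machine's output, i.e. are non-deviating. I would lift $\<(\bsigma_{-C},\bsigma'_C)\>$ into a path $\beta$ by choosing, step by step, machine transitions reading the visited vertices and agreeing with the labeling of $\alpha$ on the common prefix (such transitions exist by completeness of $\Mach$). Along $\beta$, an edge at an $i$-vertex is deviating exactly when $i$ departs from the machine output, so $C_\beta \subseteq C$, and the first place where $(\bsigma_{-C},\bsigma'_C)$ leaves $\<\bsigma\>$ is owned by such a player, giving the required deviating edge at the first divergence. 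The harmed player $j\notin C$ satisfies $\mu_j(\dalpha) > \mu_j(\dbeta)$ with $j \notin C_\beta$, and every $i \in C_\beta$ satisfies $\mu_i(\dalpha) \le \mu_i(\dbeta)$, so $(\alpha,\beta)$ meets all four witness conditions.

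I expect the main obstacle to be the bookkeeping between the global objects (strategies, defined on all histories) and the single trajectories (paths in the product): one must check that reading moves off $\alpha$ and $\beta$ never assigns two different moves to the same history — which holds because $\dalpha$ and $\dbeta$ share only their length-$k$ prefix, on which they agree — and that the nondeterminism of $\Mach$ does not obstruct the constructions, which is handled by always invoking completeness of $\Delta$ and by forcing the state labelings of $\alpha$ and $\beta$ to coincide before the deviation. The remaining verifications (that $\bsigma$ is genuinely compatible, that the reconstructed deviation modifies only $C$, and that the payoff inequalities line up) are then routine.
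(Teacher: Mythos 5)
Your overall plan is the paper's: translate back and forth between harmful deviations and witness pairs, identifying the coalition with the set of players owning deviating edges. The trouble is that both of your constructions silently rely on the principle that ``machine-compatible'' pins down behaviour, which is false precisely when $\Mach$ is non-deterministic --- and non-determinism is the case the theorem exists for, since a protocol is a \emph{set} of profiles. Concretely, in the witness-pair-to-profile direction you read $\bsigma$ off $\alpha$ alone, extend it arbitrarily by completeness, keep $\bsigma_{-C}$, and then claim $\<\bsigma_{-C},\bsigma'_C\> = \dbeta$ ``because every $j\notin C$ makes only non-deviating (hence machine-prescribed) moves along $\beta$.'' That inference is a genuine gap: a non-deviating edge of $\beta$ only certifies that the move agrees with \emph{some} transition of $\Delta$ consistent with $\beta$'s state labelling, not with the moves your particular $\bsigma_{-C}$ happens to make; on the histories of $\dbeta$ past the divergence, your $\bsigma_{-C}$ is exactly the arbitrary extension, so the deviated outcome need not be $\dbeta$, and the payoff inequalities you derived then concern the wrong play. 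The paper avoids this by defining the single labelling on the prefixes of \emph{both} projections, $q_{\dalpha_{\leq \ell}} = s_\ell$ and $q_{\dbeta_{\leq \ell}} = t_\ell$, so that the non-deviators' strategies follow $\beta$ by construction rather than by wishful appeal to compatibility.

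Your converse direction has the mirror-image gap: you lift $\chi = \<\bsigma_{-C},\bsigma'_C\>$ by choosing transitions ``step by step \dots\ by completeness.'' Nothing in that choice guarantees that, at a vertex owned by some $j \notin C$, the chosen transition outputs the move $\sigma_j$ actually makes; if it does not, $\beta$ acquires a spurious $j$-deviating edge, and if $j$ is the harmed player this kills the fourth witness condition (and, similarly, nothing forces the edge at the first divergence to be deviating, as the second condition demands). The paper's proof needs no completeness argument here at all: it lifts $\chi$ using the \emph{same} compatibility labelling $q$ that witnesses $\bsigma \in \Comp(\Mach)$, which is already defined on every history; with that lift, an edge of $\beta$ is deviating exactly when the controlling player departs from $\bsigma$, i.e.\ only at players of $C$. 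Both of your gaps are repairable, but the repair \emph{is} the paper's device of running one labelling over the two plays simultaneously; with a deterministic $\Mach$ your shortcuts would be harmless, yet determinism is the one case in which the theorem is easy. (A further caveat, which affects the paper as well: your claim that $\dalpha$ and $\dbeta$ can share only their length-$k$ prefix is unjustified, since a deviating edge of the product may change the machine state without changing the game vertex; this subtle interaction between non-determinism and the product construction deserves explicit treatment rather than a parenthetical.)
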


\begin{proof}~
    \begin{itemize}
        \item Let $(\alpha, \beta)$ be such a witness pair, let $i$ be a player such that $\mu_i(\dalpha) > \mu_i(\dbeta)$, and let $C$ be the set of players $j$ such that $\beta$ has an $j$-deviating edge.
        Let us write $\alpha = (\pi_0, s_0) (\pi_1, s_1) \dots$ and $\beta = (\chi_0, t_0) (\chi_1, t_1) \dots$.
        Let $k$ be the least index such that $\alpha_k \neq \beta_k$.
        Let us define a strategy profile $\bsigma \in \Comp(\Mach)$ by defining the mapping $h \mapsto q_h$ as follows: for each $\l$, we define $q_{\dalpha_{\leq \l}} = s_\l$, and $q_{\dbeta_{\leq \l}} = t_\l$.
        We define the other states $q_h$ arbitrarily.

        Then, we have $\dalpha = \< \bsigma \>$, and $\dbeta$ is compatible with the strategy profile $\bsigma_{-C}$.
        Moreover, we have $\mu_C(\dbeta) \geq \mu_C(\dalpha)$, hence the coalition $C$ has a harmful deviation in the strategy profile $\bsigma$.

        \item Conversely, if a given coalition $C$ has a harmful deviation $\bsigma'_C$ to a strategy profile $\bsigma \in \Comp(\Mach)$, then let $h \mapsto q_h$ be the mapping corresponding to $\bsigma$.
        Let us define $\alpha = (\pi_0, q_{\pi_0}) (\pi_1, q_{\pi_0\pi_1}) (\pi_2, q_{\pi_0\pi_1\pi_2}) \dots$, and $\beta = (\chi_0, q_{\chi_0}) (\chi_1, q_{\chi_0\chi_1}) \dots$, where $\pi = \< \bsigma \>$ and $\chi = \< \bsigma_{-C}, \bsigma'_C \>$.
        Then, since $\bsigma'_C$ is a harmful deviation to $\bsigma$, there is a player $i \in C$ that has a strictly worse payoff in $\dalpha = \pi$ than in $\dbeta = \chi$, and every player $j$ such that there is a $j$-deviating edge along $\beta$ belongs to $C$, and has therefore at least the same payoff in $\dbeta$ than in $\dalpha$.
        The path $\alpha$ has also no deviating edge, and the path $\beta$ gets separated from it with a deviating edge: the paths $\alpha$ and $\beta$ form a witness pair.
        \qedhere
    \end{itemize}
\end{proof}

            \subsubsection{Parity games}

Using that preliminary result, we first provide an optimal algorithm in the case of parity games.
That algorithm consists in guessing and checking the optimal pair in polynomial time.

\begin{thm}\label{thm:SSEcheckingParity}
    In parity games, the SSE-checking problem is $\coNP$-complete, even if the Mealy machine is deterministic. 
    When the number of players is fixed, then this problem is fixed-parameter tractable.
\end{thm}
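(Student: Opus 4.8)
\emph{Reduction to witness pairs and $\coNP$ membership.} By Theorem~\ref{thm:product_graph}, some strategy profile compatible with $\Mach$ fails to be an SSE if and only if $\Game_{\|v_0}\otimes\Mach$ contains a witness pair, so the SSE-checking problem is exactly the complement of witness-pair existence and it suffices to place the latter in $\NP$. The plan is to argue that a witness pair, if one exists, can always be taken in \emph{lasso form}: a common prefix from $(v_0,s_0)$, a branching along a deviating edge, after which $\dalpha$ and $\dbeta$ each settle into a reachable cycle. Indeed, $\mu_i(\dalpha)$ and $\mu_i(\dbeta)$ depend only on $\Inf(\dalpha)$ and $\Inf(\dbeta)$, so each tail may be replaced by a lasso whose cycle visits exactly the vertices the original path visits infinitely often, using only edges used infinitely often. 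This preserves all four witness conditions: the infinitely-minimal colours, hence the parity payoffs and their comparisons, are unchanged, and the set of players that deviate along the reshaped $\beta$ can only shrink, so the constraint ``$\mu_j(\dalpha)\le\mu_j(\dbeta)$ for every $\beta$-deviator $j$'' still holds. The resulting witness has size polynomial in $|V|$, $|S|$ and the number of players; one guesses it, reads off the coalition of $\beta$-deviators and the colours in each cycle, and verifies the four conditions in polynomial time. This gives $\NP$ membership of witness existence, hence $\coNP$ membership of SSE-checking.

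\emph{Lower bound.} For $\coNP$-hardness I would reduce from \Sat, building from a formula $\phi$ over $x_1,\dots,x_m$ a parity game together with a \emph{deterministic} Mealy machine whose unique compatible profile $\bsigma$ gives payoff $1$ to every player, arranged so that a harmful deviation exists precisely when $\phi$ is satisfiable. The players would be one agent per variable plus a distinguished victim (and some bookkeeping agents), a coalition deviation encoding a truth assignment: the deviating path cycles through gadgets testing the clauses, and the victim's parity objective is violated exactly when the encoded assignment satisfies $\phi$. The point is that each variable player can always keep its own payoff at $1$ while deviating, so a witness pair exists iff $\phi\in\Sat$; equivalently the profile is an SSE iff $\phi\notin\Sat$, and since $\Mach$ is deterministic the bound holds for deterministic machines. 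The delicate part is engineering the colourings so that (i)~every coalition move corresponds to a consistent assignment (each variable a single truth value) and (ii)~no deviating player is ever harmed; this is what forces the construction to use an unbounded number of players, consistent with the tractability below.

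\emph{Fixed number of players.} When the number $n$ of players is a parameter I would replace the nondeterministic guess by enumeration of the combinatorial choices a witness pair can make: the single improving player $i$, the coalition $C$ of $\beta$-deviators with $i\notin C$, and, for each $j\in C$, which disjunct of $\mu_j(\dalpha)\le\mu_j(\dbeta)$ (namely $\mu_j(\dalpha)=0$ or $\mu_j(\dbeta)=1$) is realised. There are $2^{O(n)}$ such choices. For each fixed choice the search reduces to finding two synchronised lassos in a polynomial-size product of two copies of $\Game_{\|v_0}\otimes\Mach$ (a mode bit records whether the split has occurred, $\beta$ is restricted to $C$-deviating and non-deviating edges, $\alpha$ to non-deviating edges), subject to a \emph{conjunction of parity conditions} on the two projections: $\mu_i(\dalpha)=1$, $\mu_i(\dbeta)=0$, and one fixed parity constraint per $j\in C$. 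A conjunction of parity conditions is a Streett condition, and its nonemptiness amounts to finding a suitable reachable cycle, computable in polynomial time by a standard good-SCC computation. The total running time is thus $2^{O(n)}$ times a polynomial in $|V|$, $|S|$ and the number of colours, which gives fixed-parameter tractability.

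\emph{Main obstacle.} I expect the $\coNP$-hardness reduction to be the crux: membership and the $\FPT$ argument follow fairly directly from the witness-pair characterisation together with standard pumping and Streett-emptiness facts, whereas designing parity gadgets that simultaneously encode consistent assignments and guarantee that no coalition member is ever harmed is what requires the most care.
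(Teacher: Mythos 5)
Your $\coNP$-membership argument (lasso-shaped witness pairs in $\Game_{\|v_0}\otimes\Mach$, guess-and-check) and your fixed-parameter algorithm (enumerate the $2^{O(n)}$ choices of harmed player, deviating coalition and per-player payoff comparisons, then a polynomial-time Streett/good-cycle emptiness check) are correct and essentially the paper's own route: the paper enumerates payoff vectors and coalitions and runs Streett emptiness on the restricted products $\left[\Game_{\|v_0}\otimes\Mach\right]_\emptyset$ and $\left[\Game_{\|v_0}\otimes\Mach\right]_C$, which is the same computation as your synchronized two-copy search.

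The genuine gap is the lower bound, which you leave as a plan, and the plan as stated would not go through. With \emph{one agent per variable} and a \emph{distinguished victim}, the truth value a variable player chooses is an event of the finite prefix of the play; parity objectives are prefix-independent, so no coloring for that player can distinguish ``I set $x_i$ to true'' from ``I set $x_i$ to false'', and hence nothing in the payoffs can tie the clause-by-clause literal choices made in the cycling part of the deviation to the assignment chosen at the start --- unless you record the assignment in the arena (exponential blowup) or have it re-asserted infinitely often. So conditions (i) and (ii) of your own checklist cannot be met by colorings alone in that player structure; this is precisely the obstacle you flag but do not resolve, and without it there is no $\coNP$-hardness proof.

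The paper's construction (\figurename~\ref{fig:sat2sseCheck}) resolves it with a different player structure: \emph{two players per variable, one per literal}, plus Solver ($2n+1$ players). The deviating coalition itself encodes the assignment --- it consists of Solver together with exactly the literal players who agree not to escape to the sink $\blacktriangledown$ --- and the checking module is traversed infinitely often, Solver picking one literal per clause, with each picked literal giving an odd color to its \emph{negation} player. Consistency then comes for free from the security requirement: picking a literal whose negation lies in the coalition would harm a coalition member, which a harmful deviation forbids; and when $\varphi$ is unsatisfiable, every traversal of the clauses must pick some complementary pair, so some variable has both its literal players losing, one of whom is necessarily in the coalition. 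In other words, the ``victim'' is not a separate bookkeeping player but the false-literal players themselves, and coalition membership is the carrier of the assignment. That is the missing idea your sketch needs.
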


\begin{proof}~

    \begin{itemize}
        \item \emph{Fixed-parameter tractability.}

        By Theorem~\ref{thm:product_graph}, this problem reduces to deciding the non-existence of a witness pair in the graph $\Game_{\|v_0} \otimes \Mach$.
        We show that when the number of players is fixed, this amounts to solving a polynomial number of Streett emptiness problems.

        Let us first recall that a Streett condition is defined by a finite set $(R_j, G_j)_{j \in J}$ of pairs of subsets $R_j, G_j \subseteq V$, such that a play or path $\pi$ satisfies the condition $\pi \in \Streett\left((R_j, G_j)_j\right)$ if and only if for every $j$, if the set $R_j$ is visited infinitely often along $\pi$, then so is the set $G_j$.

        Let us now remark that when the number $n$ of players is fixed, there is a fixed number of possible payoff vectors (namely $2^n$).
        Similarly, there is also a fixed number of possible coalitions that can deviate in a witness pair (namely $2^n-2$ as the empty coalition and the coalition of all players cannot yield a witness pair).
        The algorithm enumerates all these possible payoff vectors and coalitions.

        For $F \subseteq \Pi$ let $\left[\Game_{\|v_0} \otimes \Mach\right]_C$ be the game $\Game_{\|v_0} \otimes \Mach$ where all $i$-deviating edges for $i \notin C$ and all unreachable states have been removed: that is the game where all players in $\Pi\setminus C$ remain faithful to strategies provided by $\Mach$.
        In particular, in $\left[\Game_{\|v_0} \otimes \Mach\right]_\emptyset$ there are no deviating edges, so in any witness pair the first component $\alpha$ is a path in $\left[\Game_{\|v_0} \otimes \Mach\right]_\emptyset$.
        And for a path $\beta$ where $C$ is the set of players $i$ such that $\beta$ contains an $i$-deviating edge, $\beta$ is a path in $\left[\Game_{\|v_0} \otimes \Mach\right]_{C}$.

        To check whether a payoff $\bx$ can be achieved when only players of $C$ can deviate from the strategies of $\Mach$, one can check the emptiness in $\left[\Game_{\|v_0} \otimes \Mach\right]_{C}$ of the language of infinite words specified by the Streett acceptance condition built as follows.
        Let $k$ be the maximal color appearing in the codomain of the coloring mappings.
        We start with an empty set of Streett pairs.
        \begin{itemize}
        \item For a player $i$ such that $x_i = 1$ (i.e. $i$ wins), and for $0\leq j\leq \lfloor\frac{k}{2}\rfloor$ add the Street pair $(R_j,G_j)$ defined as:
        \begin{itemize}
            \item $R_j= \left\{v \mmid \kappa_i(v)=2j+1\right\}$
            \item $G_j= \left\{v \mmid \exists j'\leq j, \kappa_i(v)=2j'\right\}$
        \end{itemize}
        \item For a player $i$ such that $x_i=0$ (i.e. $i$ loses), and for $0\leq j\leq \lceil\frac{k}{2}\rceil$ add the Street pair $(R_j,G_j)$ defined as:
        \begin{itemize}
            \item $R_j= \left\{v \mmid \kappa_i(v)=2j\right\}$
            \item $G_j= \left\{v \mmid \exists j'\leq j, \kappa_i(v)=2j'-1\right\}$
        \end{itemize}
        \end{itemize}
        This Streett condition simply contains the conjunction of the underlying parity conditions (or their negation).
        There are $n\cdot k$ pairs in this condition, and each pair is of size at most $|\Game|\cdot |\Mach|$ (as a vertex can only belong in one set for each pair).
        The emptiness check of such a Street automaton can be performed in time $O(n\cdot k\cdot |\Game|^2\cdot |\Mach|^2)$~\cite[Theorem~5.2]{LatvalaHeljanko99}.
        
        For a given vector $\bx$ and a coalition $\emptyset \subsetneq C \subsetneq \Pi$, let $\bx^{\uparrow_C}$ be the set of payoff vectors $\by$ such that for all $i\in C$, we have $y_i \geq x_i$ and for some $j\notin C$, we have $y_j < x_j$.
        We perform emptiness check for payoff vector $\bx$ in $\left[\Game_{\|v_0} \otimes \Mach\right]_\emptyset$ and for each $\by \in \bx^{\uparrow_C}$, we perform emptiness check for payoff $\by$ in $\left[\Game_{\|v_0} \otimes \Mach\right]_C$.
        If both these checks fail, i.e. if such $\bx$ and $\by$ can be achieved, then there is a witness pair and $\Mach$ does define profiles that are not SSE.

        The overall complexity is in $O(2^{2n}\cdot n\cdot k \cdot |\Game|^2 \cdot |\Mach|^2)$.
        It is therefore fixed-parameter tractable in the number of players.
        
        \item \emph{$\coNP$-easiness.}

        We now prove that if a witness pair $(\alpha, \beta)$ exists in the graph $\Game_{\|v_0} \otimes \Mach$, then there also exists a witness pair $(hc^\omega, h'd^\omega)$ of lassos whose sizes are polynomial in the size of $\Game_{\|v_0} \otimes \Mach$, and therefore in the size of the input.
        Let $n$ be the number of vertices in the graph $\Game_{\|v_0} \otimes \Mach$.

        Indeed, given the witness pair $(\alpha, \beta)$, let $\ell$ be the least index such that $\alpha_\ell \neq \beta_\ell$.
        The vertex $\alpha_{\ell-1}$ is accessible from $\alpha_0$: therefore, there exists a path $h_0$ from $\alpha_0$ to $\alpha_{\ell-1}$, with no deviating edge, that has length at most $n$.
        Then, let $m > \ell$ be an index such that every vertex in $\alpha_{\geq m}$ is visited infinitely often by $\alpha$.
        In particular, it is the case of the vertex $\alpha_m$, and there exists an index $m'$ such that $\alpha_{m'} = \alpha_m$, and $\Occ(\alpha_m \dots \alpha_{m' - 1}) = \Inf(\alpha)$.
        There is, therefore, a path from $\alpha_m$ to itself that visits all the vertices of the set $\Inf(\alpha)$ and only them, and that uses no deviating edges: therefore, by removing the cycles along that path, there exists a path $c$ that satisfies the same properties and that has length at most $n^2$.
        Moreover, there is a path $h_1$ from $\alpha_\ell$ to $\alpha_m$ with no deviating edges of size at most $n$.
        Then, the lasso $\alpha' = h_0 h_1 c^\omega$ is such that $\mu(\dalpha') = \mu(\dalpha)$, and it size is at most $2n + n^2$.
        We define analogously a lasso $\beta' = h_0 h_2 d^\omega$ such that $\mu(\dbeta') = \mu(\dbeta)$, and such that for every $i$, if there is an $i$-deviating edge in $\beta'$, then there is an $i$-deviating edge in $\beta$.
        Then, the pair $(\alpha', \beta')$ is a witness pair.

        Therefore, such a pair can be guessed and checked in polynomial time: negative instances of the SSE-checking problem can be recognized in non-deterministic polynomial time.

        \item \emph{$\coNP$-hardness.}
        
        We encode the negation of the satisfiability problem.
        Namely, for a propositional formula $\varphi$, we build a game $\Game$ and a machine $\Mach$ such that $\varphi$ is satisfiable if and only if there is a profile in $\Mach$ that is not an SSE.

        Let $\varphi=\bigwedge_{i=1}^{m'} \bigvee_{j=1}^m \l_{ij}$ with $\l_{ij}\in \textit{Lit}=\{x_1,\neg x_1,\dots,x_n,\neg x_n\}$ be a propositional formula.
        We build a game with $2n+1$ players: one player called \emph{Solver}, and written $\SS$, who tries to find a valuation, and the \emph{literal players}, i.e. $x_1, \neg x_1, \dots, x_n, \neg x_n$.
        That game is depicted in \figurename~\ref{fig:sat2sseCheck}.
        Solver controls circular vertices, while rectangular vertices are controlled by the player identified in the bottom right corner (if any; vertices with a single successor could belong to anyone).
        It is made of two parts: in the first part Solver chooses a valuation by selecting vertices (the \emph{setting} module), and in the second part the formula is checked by traversing all clauses and selecting one literal per clause (the \emph{checking} module).
        Note that each litteral player can, when given the opportunity to play, choose to exit a sink vertex $\blacktriangledown$ (duplicated for clarity); in that case the checking module is never reached.
        
        The coloring functions are defined as follows:
        \begin{itemize}
            \item For $\ell\in\textit{Lit}$:
            \begin{itemize}
                \item $\kappa_{\ell}(\blacktriangledown)=0$;
                \item for every $\l_{ij}=\neg\ell$ in the checking module, $\kappa_\ell(\l_{ij})=1$;
                \item $\kappa_\ell(\checkmark)=2$.
                \item for any other vertex $v$, $\kappa_\ell(v)=3$.
            \end{itemize}
            \item For Solver: for any vertex $v$, $\kappa_\SS(v)=0$.
        \end{itemize}
        So a player $\ell\in\textit{Lit}$ can achieve payoff $1$ by either reaching $\blacktriangledown$ or avoiding infinite visits to its negation in the checking module.
        On the other hand, Solver always achieves $1$.

        Let $\Mach_{\textit{Lit}}$ be the deterministic Mealy machine with a single state that encodes the strategy profile that has all players in $\textit{Lit}$ go to $\blacktriangledown$, and Solver do arbitrary actions.
        Let $\bsigma$ be a profile defined by this machine.
        Note that with $\bsigma$ all players get payoff $1$.
        We show that $\bsigma$ is an SSE if and only if $\varphi$ is not satisfiable.
        \medskip

        \begin{description}
        \item[Assume that $\varphi$ is satisfiable.]
        Then we construct a deviation to $\bsigma$.
        Let $\zeta$ be a valuation such that $\zeta \vDash \varphi$.
        Let $C$ be the coalition of players made of Solver and players selected by $\zeta$ as follows: if $\zeta(x_i)=\top$ then $x_i\in C$, if $\zeta(x_i)=\bot$ then $\neg x_i\in C$.
        Note that this coalition contains exactly one literal per variable.
        Let $\bsigma'$ be the profile that deviates from $\bsigma$ for players in $C$ as follows:
        \begin{itemize}
            \item for $x_i$ (resp. $\neg x_i$) $\in \textit{Lit} \cap C$, from $x_i^s$ (resp. $\neg x_i^s$) choose $?x_{i+1}$ (or $C_1$ if $i=n$)
            \item from $?x_i$, Solver chooses $x_i^s$ if $\zeta(x_i)=\top$ and $\neg x_i^s$ if $\zeta(x_i)=\bot$.
            \item from $C_i$, Solver chooses a literal that is $\top$ in $\zeta$ (since $\zeta\vDash \varphi$, this is always possible).
        \end{itemize}
        Note that with this profile no player outside of $C$ actually plays, so the $\blacktriangledown$ vertex is not reached.
        In addition, no negation of a literal in $C$ is ever reached in the checking module, so all players in $C$ still get payoff $1$.
        Consider a literal vertex $\l$ visited infinitely in the checking module (this assumes that $\varphi$ is not an empty formula).
        This literal being $\top$ in $\zeta$, its negation must be $\bot$ and therefore correspond to a player $\ell$ not in coalition $C$.
        This player will see a priority $1$ infinitely often and therefore lose, obtaining payoff $0$, which is strictly less than what was obtained with $\bsigma$.
        Therefore the deviation $\bsigma'$ of coalition $C$ is harmful and $\bsigma$ is not an SSE.

        \item[Now assume that $\varphi$ is not satisfiable.]
        Let $C$ be a coalition of players and $\bsigma'$ a profile that deviates for players in $C$.
        Since Solver cannot decrease her payoff, any coalition without Solver can also include her.
        So we can assume that $\SS\in C$.
        Furthermore, if for a variable $x_i$, neither player $x_i$ nor $\neg x_i$ are in $C$, then the play reaches $\blacktriangledown$ and the payoff remains the same for all players.
        So for the deviation to be harmful, at least one player per variable must be in $C$, and the corresponding vertex must be reached in the setting module (by decision of Solver) and they must actually deviate to avoid $\blacktriangledown$.
        Now since $\varphi$ is not satisfiable, in every loop of the checking module there must be at least a literal and its negation that are visited, so for some $i$ both $x_i$ and $\neg x_i$ see a color $1$ in this loop.
        As there are infinitely many turns of loops in the checking module, there is an index $i$ for which both players $x_i$ and $\neg x_i$ get payoff $0$.
        Since (at least) one of them is in the coalition $C$, the deviation would harm a deviating player, therefore $\bsigma$ is an SSE.\qedhere
        \end{description}
    \end{itemize}
\end{proof}
\begin{figure*}
    \centering
    
    \begin{subcaptionblock}{\textwidth}
    \centering
    \begin{tikzpicture}[node distance=50pt]
        \node[solverState,initial] at (0,0) (q1) {$?x_1$};
        \node[literalState=$x_1$,above right of=q1] (p1) {$x_1^s$};
        \node[literalState=$\neg x_1$,below right of=q1] (n1) {$\neg x_1^s$};
        \node[solverState,below right of=p1] (q2) {$?x_2$};
        \node[literalState=$x_2$,above right of=q2] (p2) {$x_2^s$};
        \node[literalState=$\neg x_2$,below right of=q2] (n2) {$\neg x_2^s$};
        \node[below right of=p2] (q3) {$\cdots$};
        \node[literalState=$x_n$,above right of=q3] (pn) {$x_n^s$};
        \node[literalState=$\neg x_n$,below right of=q3] (nn) {$\neg x_n^s$};

        \node[solverState,below right of=pn] (c1) {$C_1$};
        \node[nobodyState,above right of=c1] (l11) {$\l_{1,1}$}; 
        \node[nobodyState,below right of=c1] (l1m1) {$\l_{1,m_1}$};  
        \node (c1dots) at (barycentric cs:l11=1,l1m1=1) {$\myvdots$};
        \node[solverState,below right of=l11] (c2) {$C_2$};
        \node[above right of=c2] (l21) {$\myiddots$};
        \node[below right of=c2] (l2m2) {$\myddots$};
        \node (c2dots) at (barycentric cs:l21=1,l2m2=1) {$\myvdots$};
        \node[right of=l21,node distance=2.5cm] (lp1) {$\myddots$};
        \node[right of=l2m2,node distance=2.5cm] (lpmp) {$\myiddots$};
        \node (cpdots) at (barycentric cs:lp1=1,lpmp=1) {$\myvdots$};
        \node[yshift=7pt] (ldots1) at (barycentric cs:l21=1,lp1=1) {$\cdots$};
        \node[yshift=-7pt] (ldotsm) at (barycentric cs:l2m2=1,lpmp=1) {$\cdots$};
        \node (ldots) at (barycentric cs:ldots1=1,ldotsm=1) {$\cdots$};
        \node[nobodyState,below right of=lp1] (ph) {$\checkmark$}; 

        \node[nobodyState,above of=p2] (sinkTop) {$\blacktriangledown$}; 
        \node[nobodyState,below of=n2] (sinkBot) {$\blacktriangledown$}; 

        \begin{pgfonlayer}{arrows}
            \path[->] (q1) edge (p1) edge (n1);
            \path[->] (p1) edge (q2) edge[bend left,out=-30] (sinkTop);
            \path[->] (n1) edge (q2) edge[bend right,out=30] (sinkBot);
            \path[->] (q2) edge (p2) edge (n2);
            \path[->] (p2) edge (q3) edge (sinkTop);
            \path[->] (n2) edge (q3) edge (sinkBot);
            \path[->] (q3) edge (pn) edge (nn);
            \path[->] (pn) edge (c1) edge[bend right,out=30] (sinkTop);
            \path[->] (nn) edge (c1) edge[bend left,out=-30] (sinkBot);
    
            \path[->] (c1) edge (l11) edge (c1dots) edge (l1m1);
            \path[<-] (c2) edge (l11) edge (c1dots) edge (l1m1);
            \path[->] (c2) edge (l21) edge (c2dots) edge (l2m2);
            \path[<-] (ph) edge (lp1) edge (cpdots) edge (lpmp);

            \path[->] (ph) edge[out=-90,in=-90,distance=4cm] (c1);
    
            \path[->] (sinkTop) edge[loop above] (sinkTop);
            \path[->] (sinkBot) edge[loop below] (sinkBot);
            \end{pgfonlayer}
    \end{tikzpicture}
    \caption{Game $\Game_\varphi$ to encode the satisfiability of propositional formula $\varphi$ into a game.}
    \label{fig:sat2sseCheckArena}
    \end{subcaptionblock}
    
    \begin{subcaptionblock}{0.4\textwidth}
    \centering
    \begin{tikzpicture}
        \node[state,initial] (q) {};
        \path[->] (q) edge [loop above] node {$x_i^s | \blacktriangledown$} (q);
        \path[->] (q) edge [loop below] node {$\neg x_i^s | \blacktriangledown$} (q);
        \path[->] (q) edge [loop right] node {else$| \ast$} (q);
    \end{tikzpicture}
    \caption{Mealy machine $\Mach_{\textit{Lit}}$.}
    \label{fig:sat2sseCheckMachine}
    \end{subcaptionblock}
    \caption{Encoding of SAT into an MVP SSE-checking problem.}
    \label{fig:sat2sseCheck}
\end{figure*}

        \subsubsection{\texorpdfstring{$\omega$}{ω}-regular games}

In the case of $\omega$-regular games, the existence of a witness pair with a polynomial description is no longer guaranteed in negative instances.
The witness pair must therefore be guessed step by step, yielding a $\PSpace$ algorithm.

\begin{thm}\label{thm:SSEcheckingOmegaRegular}
    In $\omega$-regular games, the SSE-checking problem is $\PSpace$-complete, even when the Mealy machine is deterministic.
    When the number of players is fixed, the problem can be solved in polynomial time but is $\W[1]$-hard.
\end{thm}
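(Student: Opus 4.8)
The plan is to leverage Theorem~\ref{thm:product_graph}, which reduces SSE-checking to the \emph{non}-existence of a witness pair in $\Game_{\|v_0}\otimes\Mach$, and then to determine the complexity of deciding whether a witness pair exists for $\omega$-regular objectives. The crucial difference with the parity case (Theorem~\ref{thm:SSEcheckingParity}) is that evaluating the payoffs $\mu_i(\dalpha)$ and $\mu_i(\dbeta)$ now requires running the parity automata $\auto_i$ along the projected plays, and the conjunction of all these automata has a state space that is the product $\prod_i|\auto_i|$, exponential in the number of players. This is exactly what destroys the polynomial witness of the parity case and forces a step-by-step search.

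For membership in $\PSpace$, I would build --- conceptually, not explicitly --- the product of the game, the Mealy machine, and, for \emph{both} the honest path $\alpha$ and the deviation $\beta$, the parity automata of all players, augmented with the monotone set of deviators accumulated along $\beta$. A single configuration of this product has polynomial description, even though the product itself is exponential. The witness-pair requirements (existence of a harmed player $i$ with $\dalpha\in\Lang(\auto_i)$ and $\dbeta\notin\Lang(\auto_i)$, and $\mu_j(\dalpha)\le\mu_j(\dbeta)$ for every deviating player $j$) translate into a single $\omega$-regular, Streett-like acceptance condition over this joint product. I would guess the harmed player and the final deviator set, then decide nonemptiness by guessing a lasso on the fly: walk the product storing only the current configuration and, between a guessed checkpoint and its reoccurrence, the minimal colors seen by each tracked automaton, so as to certify the parity conditions on the loop. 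This runs in nondeterministic polynomial space; by Savitch's theorem and closure of $\PSpace$ under complement, SSE-checking is in $\PSpace$. When the number of players is fixed, $\prod_i|\auto_i|$ becomes polynomial, so the whole product graph is polynomial; then, exactly as in the proof of Theorem~\ref{thm:SSEcheckingParity}, I would enumerate the constantly many payoff vectors and coalitions and solve one Streett emptiness instance per pair, which is polynomial in the number of pairs (here $O(nk)$) and in the graph size, giving a polynomial-time algorithm.

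For the lower bounds I would reduce from problems whose hardness stems from the product of automata. For $\PSpace$-hardness, I would start from a $\PSpace$-complete intersection-type problem (nonemptiness of an intersection of automata, or \QSat) and design a game with a \emph{deterministic} Mealy machine whose unique honest play $\alpha$ makes every player win; a harmful deviation $\beta$ would then exist iff there is a single play simultaneously accepted by the automata of all deviators yet rejected by the designated harmed player's automaton, that is, iff the encoded intersection is nonempty. Since $\alpha$ is fixed, determinism is preserved and the bound holds even for deterministic machines. For the $\W[1]$-hardness with the number of players as parameter, I would recast the same idea as an \FPT-reduction from a $\W[1]$-hard parameterized problem, the natural candidate being intersection-nonemptiness parameterized by the number of automata, mapping each automaton to one player's objective so that the parameter is (a function of) the number of players.

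The main obstacle I anticipate is the membership argument: one must track just enough information --- the automaton states of \emph{both} runs together with the deviator set --- to evaluate the coupled witness-pair condition, while keeping each stored configuration polynomial and correctly certifying an $\omega$-regular acceptance condition on an exponentially long, never-materialised lasso. The comparison $\mu_j(\dalpha)\le\mu_j(\dbeta)$ for all deviators $j$ links the two runs and must be expressed as one acceptance condition over the joint product rather than checked separately. On the hardness side, the delicate point is calibrating the reduction so that it lands exactly at $\W[1]$, and not higher in the weft hierarchy, under the number-of-players parameterisation, which constrains how the intersection gadget may be laid out.
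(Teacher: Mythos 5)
Your proposal is correct and follows essentially the same route as the paper's proof: $\PSpace$ membership by nondeterministically guessing a lasso witness pair on the fly in the never-materialised product of the game, the Mealy machine, and the parity automata (then invoking $\NPSpace=\PSpace$ and closure under complement), polynomial time for fixed $n$ via the explicit synchronized product with the $\auto_i$'s followed by the Streett-emptiness procedure of the parity case, and both $\PSpace$-hardness and $\W[1]$-hardness via a single reduction from intersection nonemptiness of $n$ deterministic finite automata, with a deterministic machine whose honest play makes every player win and where a harmful deviation exists exactly when some word lies in all the languages. The only noteworthy divergence is a small refinement on your side: by anchoring the guessed loop at a full configuration that includes the automaton states and certifying acceptance via the minimal colors seen along the loop, you make explicit a point the paper's ``memorize which states of each $\auto_i$ are visited'' step leaves slightly informal.
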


\begin{proof}~
    \begin{itemize}
        \item \emph{$\PSpace$-easiness.}

        We give a non-deterministic algorithm that decides the SSE-checking problem using polynomial space.
        By Theorem~\ref{thm:product_graph}, that problem reduces to search for a witness pair of paths in the product graph $\Game_{\|v_0} \otimes \Mach$.
        If such a pair $(\alpha, \beta)$ exists, then there also exists a witness pair $(h c^\omega, h' d^\omega)$ of lassos.
        Indeed, let $m$ be an index such that $\alpha_{\leq m} \neq \beta_{\leq m}$, and such that every vertex in $\alpha_{\geq m}$ is visited infinitely often by that path.
        Then, in particular, there is an index $m'$ such that $\alpha_{m'} = \alpha_m$.
        Let us then define $h = \alpha_{< m}$ and $c = \alpha_m \dots \alpha_{m'-1}$, and let us define similarly $h'$ and $d$.
        Since the projections of $hc^\omega$ and $h' d^\omega$ in $\Game_{\|v_0}$ give to all players the same payoffs as $\alpha$ and $\beta$, respectively, the pair $(hc^\omega, h'd^\omega)$ is also a witness pair.
        Let us therefore present an algorithm searching for such lassos.

        That algorithm can guess those lassos ``on the fly'', as follows:
        
        \begin{itemize}
            \item guess, vertex by vertex, the common prefix of $h$ and $h'$, with only non-deviating edges, and while doing so, run the automata $\auto_{i}$ (memorize nothing but the current vertex, and the current state in each $\auto_{i}$);

            \item guess the time of the first deviation: then, guess a non-deviating edge (to $u$) and a deviating edge (to $v$), and memorize $v$, and the current states of each automaton $\auto_{i}$ when reaching $v$;

            \item guess the end of $hc^\omega$: from $u$, guess vertex by vertex a path with no deviating edges, then guess which of those vertices will be repeated, memorize it, and guess the path from it to itself, still without deviating edge, and memorizing which states of each $\auto_{i}$ are visited while doing so;

            \item proceed similarly to guess the end of $h'd^\omega$ from $v$, using also deviating edges, and also memorizing which players deviate;

            \item then, use the data that has been memorized to compute $\mu_i(hc^\omega)$ and $\mu_i(h'd^\omega)$ for each $i$, and then check whether the pair $(hc^\omega, h'd^\omega)$ is a witness.
        \end{itemize}

        \item \emph{$\PSpace$-hardness.}

        We proceed by reduction from the following $\PSpace$-hard problem~\cite{Kozen77}: given $n$ deterministic finite automata $\auto_1, \dots, \auto_n$ on the alphabet $\{a, b\}$, is the intersection $\bigcap_i \Lang(\auto_i)$ empty?
        Given $n$ such finite automata, we construct a $\omega$-regular game $\Game_{\|v_0}$ and a Mealy machine $\Mach$ such that every strategy profile compatible with $\Mach$ is an SSE in $\Game_{\|v_0}$ if and only if $\bigcap_i \Lang(\auto_i) = \emptyset$.

        The game $\Game_{\|v_0}$ has $n+2$ players, namely players $1, \dots, n$, \emph{Solver}, written $\SS$, and \emph{Innocent}, written $\II$.
        For each $i \in \{1, \dots, n\}$, there is a vertex $i$ controlled by player $i$, with two outgoing edges: one to the sink vertex $\btd$, and one to the vertex $i+1$ (or $c$ if $i=n$).
        Solver controls two vertices, named $a$ and $b$, each of them having three outgoing edges: one to $a$, one to $b$, and one to a last sink vertex $\bt$.
        The vertices $\btd$ and $\bt$ have both only one outgoing edge, a self-loop.
        The initial vertex is $v_0 = 1$.
        That game is depicted by Figure~\ref{fig:autoInterEmptiness2omegarRsseCheck_arena}: each square vertex belongs to the number player indicated on the vertex, the round vertices belong to Solver, and the hexagon vertices to $\II$ (who actually has no choice at all).

        The parity automata that recognize the winning plays of each player are depicted by Figures~\ref{fig:autoInterEmptiness2omegarRsseCheck_solver}, \ref{fig:autoInterEmptiness2omegarRsseCheck_innocent} and~\ref{fig:autoInterEmptiness2omegarRsseCheck_i}.
        The number inside each state is its color.
        In Figure~\ref{fig:autoInterEmptiness2omegarRsseCheck_i}, each state of the automaton $\auto_i$ is given the color $1$, and an outgoing transition with label $\bt$ is added to each of them: to the sink state of color $0$ if it is an accepting state, to the sink state of color $1$ otherwise.
        Thus, Solver always wins; Innocent loses if and only if the vertex $c$ is reached; and each player $i$ wins if and only if either the vertex $\btd$ is reached, or the vertex $c$ is reached, Solver writes a word that is accepted by $\auto_i$, and then goes to $\bt$.

        Finally, the Mealy machine $\Mach$ is the deterministic one-state machine that makes all the number players go to the vertex $\btd$, and that makes Solver do arbitrary actions (Innocent can only choose self loops on sink vertices).

\begin{figure*}
    \centering

\begin{subcaptionblock}{0.9\textwidth}
    \centering
    \begin{tikzpicture}[node distance=2cm]
        \node[literalState=$1$,initial] (1) at (0,0) {$1$};
        \node[literalState=$2$,right of=1] (2) {$2$};
        \node[right of=2] (dots) {$\cdots$};
        \node[literalState=$n$,right of=dots] (n) {$n$};
        \node[state, below right of=n] (a) {$a$};
        \node[innocentState, below right of=a] (bt) {$\bt$};
        \node[state, above right of=bt] (b) {$b$};
        \node[innocentState,below of=dots] (btd) {$\btd$};
        \node[state, above right of=a] (c) {$c$};
\begin{pgfonlayer}{arrows}
        \path[->] (1) edge (2);
        \path[->] (2) edge (dots);
        \path[->] (dots) edge (q);
        \path[->] (n) edge (c);
        \path[->] (c) edge[bend right] (a);
        \path[->] (c) edge[bend left] (b);
        \path[->] (a) edge[bend right] (b);
        \path[->] (b) edge[bend right] (a);
        \path[->] (a) edge[loop left] (a);
        \path[->] (b) edge[loop right] (b);
        \path[->] (1) edge[bend right] (btd);
        \path[->] (2) edge[bend right, in=150] (btd);
        \path[->] (n) edge[bend left] (btd);
        \path[->] (btd) edge[loop below] (btd);
        \path[->] (a) edge[bend right] (bt);
        \path[->] (b) edge[bend left] (bt);
        \path[->] (bt) edge[loop below] (bt);
        \end{pgfonlayer}
    \end{tikzpicture}
    \caption{The arena.}
    \label{fig:autoInterEmptiness2omegarRsseCheck_arena}
    \end{subcaptionblock}

    \begin{subcaptionblock}{0.35\textwidth}
    \centering
    \begin{tikzpicture}[auto,node distance=3cm]
        \node[automatonState,initial] (init) at (0,0) {$0$};
        \path[->] (init) edge[loop above] node{$*$} (init);
    \end{tikzpicture}
    \caption{The parity automaton $\auto_{\SS}$.}
    \label{fig:autoInterEmptiness2omegarRsseCheck_solver}
    \end{subcaptionblock}
\hspace{2cm}
    \begin{subcaptionblock}{0.35\textwidth}
    \centering
    \begin{tikzpicture}[auto,node distance=3cm]
        \node[automatonState,initial] (init) at (0,0) {$0$};
        \node[automatonState,right of=init] (win) {$1$};
        \path[->] (init) edge[loop above] node{$1,\dots,n,\btd$} (init);
        \path[->] (init) edge node{$c$} (win);
        \path[->] (win) edge[loop above] node{$a,b,\bt$} (win);
    \end{tikzpicture}
    \caption{The parity automaton $\auto_{\II}$.}
    \label{fig:autoInterEmptiness2omegarRsseCheck_innocent}
    \end{subcaptionblock}
 \bigskip
 
    \begin{subcaptionblock}{0.7\textwidth}
    \centering
    \begin{tikzpicture}[auto,node distance=3cm]
        \node[automatonState,initial] (init) at (0,0) {$1$};
        \node[automatonState,below right of=init] (win) {$0$};
        \node[automatonState,rectangle,rounded corners=7pt,dotted,minimum size=3cm,above right of=win] (autom) {$\auto_{i}$};
        \node[automatonState,double] (qacc) at (3.5, -1) {1};
        \node[automatonState] (someq) at (5, -1) {1};
        \node[automatonState,below right of=autom] (lose) {$1$};

        \path[->] (init) edge[loop above] node{$1,\dots,n$} (init);
        \path[->] (init) edge node{$c$} (autom);
        \path[->] (init) edge node{$\btd$} (win);
        \path[->] (qacc) edge node{$\bt$} (win);
        \path[->] (someq) edge node{$\bt$} (lose);
        \path[->] (win) edge[loop below] node{$*$} (win);
        \path[->] (lose) edge[loop below] node{$*$} (lose);
    \end{tikzpicture}
    \caption{The parity automaton $\auto_{i}$ for player $i \in \{1, \dots, n\}$}
    \label{fig:autoInterEmptiness2omegarRsseCheck_i}
    \end{subcaptionblock}

    \caption{Encoding of the intersection emptiness problem into an $\omega$-regular SSE-checking problem.}
    \label{fig:autoInterEmptiness2omegarRsseCheck}
\end{figure*}

Thus, let us assume that there is a strategy profile $\bsigma$ that is compatible with $\Mach$, and that is not an SSE.
The play $\< \bsigma \>$ reaches the vertex $\btd$, and is therefore won by all the players.
Any harmful deviation must therefore be losing for some player who is not deviating.
A number player who is not deviating makes the play reach vertex $\btd$ and therefore still wins and is not harmed by the deviation.
Since Solver cannot lose, the only potentially harmed player is $\II$; that means the deviating play reach the vertex $c$.
As noted above, deviating coalition must necessarily include all the number players to avoid vertex $\btd$.
Consequently, a harmful deviation $\bsigma'$ of $\bsigma$ is necessarily a deviation of all the players $\{1,\cdots,n,\SS\}$, and must generate a play that is won by all of them.
Such a play must have the form $\< \bsigma' \> = 1 2 \dots n w \bt^\omega$, where $w$ is a word accepted by every automaton $\auto_i$: the intersection $\bigcap_i \Lang (\auto_i)$ is then nonempty.

Conversely, if that intersection is nonempty, then a harmful deviation for the coalition $\{1,\cdots,n,\SS\}$ consists in going to the vertex $c$, writing a word $w \in \bigcap_i \Lang (\auto_i)$, and then going to the sink vertex $\bt$.

Thus, the emptiness problem of the intersection of finite automata reduces to the SSE-checking problem in $\omega$-regular games, with a deterministic Mealy machine: that problem is therefore $\PSpace$-hard.

    \item\emph{Easiness (fixed number of players).}

    Let $\Game^\star_{\|v_0}$ be the game graph $\Game_{\|v_0}$ in synchronized product with the parity automata $\auto_1, \dots,\auto_n$.
    Let $\kappa_i'$ be the coloring function for player $i$ defined using the corresponding color in the original automaton $\auto_i$: $\kappa_i'(v,s_1,\dots,s_n)=\kappa_i(s_i)$.
    This game has a size $O(m^n)$ with \[m=\max\left\{2,|V|,\max_{i\in\Pi}\left|Q_i\right|\right\}\] the maximal size of the graph/automata provided as input (assumed to be at least 2 for complexity purposes).
    Using the procedure of Theorem~\ref{thm:SSEcheckingParity} on $\Game^\star_{\|v_0}$ yields an algorithm with complexity $O(m^{5n} \cdot |\Mach|^2 \cdot n \cdot k)$ with $k$ the maximal color in all parity conditions.
    This is therefore polynomial in the size of the input when $n$ is fixed (it falls into the $\XP$ complexity class with $n$ as a parameter).
    
    \item\emph{Hardness (fixed number of players).}
    The complexity of the above algorithm is not Fixed-Parameter-Tractable, as the base $m$ is not fixed.
    In fact, this problem is $\W[1]$-hard:  as shown above for the general case, we can reduce the emptiness problem for $n$ deterministic finite automata to SSE checking.
    This problem was shown to be $\W[1]$-hard~\cite[Corollary 5.9]{FernauHoffmannWehar2021}, which concludes the proof.
    \qedhere
    \end{itemize}
\end{proof}

\subsection{Compositional SSE-checking problems}

We now study the same problem, but where the strategy profile of the players is described \emph{compositionally}, that is, with one machine per player.
In many cases, that description is more succinct: it is therefore not surprising that the complexity is higher in the case of parity games.
Interestingly, it is not the case in $\omega$-regular games, where we can use the same algorithm as before.

    \subsubsection{In \texorpdfstring{$\omega$}{ω}-regular games}

\begin{thm} \label{thm:compCheckingOmegaReg}
    In $\omega$-regular games, the compositional SSE-checking problem is $\PSpace$-complete, even when the Mealy machines are deterministic.
    With a fixed number of players, the problem can be solved in polynomial time but is $\W[1]$-hard.
\end{thm}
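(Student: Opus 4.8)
The plan is to reduce the compositional problem to the monolithic one already handled in Theorem~\ref{thm:SSEcheckingOmegaRegular}, by forming the synchronous product $\Mach = \bigotimes_{i \in \Pi} \Mach_i$ of the individual machines. This product reads every vertex of $V$; on a vertex $u \in V_i$ it advances the state of each $\Mach_j$ and outputs the successor prescribed by $\Mach_i$. One checks that the set of profiles compatible with $\Mach$ is exactly $\{\bsigma : \sigma_i \in \Comp(\Mach_i) \text{ for each } i\}$, and that an edge of the product graph $\Game_{\|v_0} \otimes \Mach$ is $i$-deviating precisely when player $i$ departs from $\Mach_i$. Hence, by Theorem~\ref{thm:product_graph}, some profile of the family fails to be an SSE if and only if $\Game_{\|v_0} \otimes \Mach$ contains a witness pair.

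For $\PSpace$-easiness, the crucial observation is that we never materialise $\Mach$, whose state space $\prod_i Q_i$ is exponential in the number of players. Instead we run verbatim the on-the-fly nondeterministic procedure of Theorem~\ref{thm:SSEcheckingOmegaRegular}, which guesses the two lassos $hc^\omega$ and $h'd^\omega$ of a witness pair vertex by vertex. At each step it only needs to remember the current vertex, the current state of each $\Mach_i$ (one per player, for each of the two guessed paths), the current state of each objective automaton $\auto_i$, and the set of players that have already deviated along the second path. All of this is a tuple of states of the input objects, so it takes only polynomial space; the final payoff comparison is then carried out from the memorised automaton states exactly as in the monolithic case. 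This yields a $\PSpace$ algorithm.

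Hardness in both regimes comes essentially for free. In the reduction of Theorem~\ref{thm:SSEcheckingOmegaRegular} from the emptiness of $\bigcap_i \Lang(\auto_i)$, the Mealy machine used is a deterministic one-state machine; it decomposes into one deterministic one-state machine per player (each number player sends to $\btd$, Solver plays arbitrarily, Innocent is trivial), whose product recovers the original machine and whose family of compatible profiles is unchanged. The very same instance is thus a valid compositional instance, so the compositional problem inherits $\PSpace$-hardness in general and, since that reduction is from the $\W[1]$-hard $n$-automaton intersection emptiness problem with $n$ as parameter, also $\W[1]$-hardness for a fixed number of players. For the matching fixed-$n$ upper bound, when $n$ is constant the product machine $\Mach$ has size at most $(\max_i |\Mach_i|)^n$, which is polynomial, so building it explicitly and running the polynomial fixed-$n$ monolithic algorithm of Theorem~\ref{thm:SSEcheckingOmegaRegular} solves the problem in polynomial time.

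The only step requiring genuine care is the product construction itself: one must verify that compatibility with the family $(\Mach_i)_{i}$ coincides exactly with compatibility with $\bigotimes_i \Mach_i$, paying attention to vertices not owned by a given player (where the corresponding machine only updates its state, without producing an output), and that the per-player notion of deviating edge is preserved by the product. Once this correspondence is established, every remaining step is a direct reuse of the monolithic machinery, which is precisely why the compositional problem has the same complexity as the monolithic one in the $\omega$-regular setting.
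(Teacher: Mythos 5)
Your proposal is correct and follows essentially the same route as the paper's own proof: the product machine $\Mach$ with $\Comp(\Mach)=\prod_i\Comp(\Mach_i)$, the on-the-fly reuse of the nondeterministic $\PSpace$ algorithm from Theorem~\ref{thm:SSEcheckingOmegaRegular} without ever materialising the product, the explicit product construction for the fixed-$n$ polynomial upper bound, and the decomposition of the one-state machine in the earlier hardness reduction into per-player deterministic machines to inherit both $\PSpace$-hardness and $\W[1]$-hardness. No substantive differences to report.
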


\begin{proof}~
    \begin{itemize}
        \item \emph{Easiness.}

        Given a machine profile $(\Mach_i)_{i \in \Pi}$, let us consider the product machine $\Mach$ whose state space is the product of the $\Mach_i$'s state spaces, and whose transitions are naturally defined in order to have $\Comp(\Mach) = \prod_{i \in \Pi} \Comp(\Mach_i)$.

        Then, deciding the compositional SSE-checking problem on the input $\Game_{\|v_0}, (\Mach_i)_i$ is equivalent to decide the SSE-checking problem on the input $\Game_{\|v_0}, \prod_i \Mach_i$.
        That can be done using the non-deterministic algorithm that was presented in the proof of Theorem~\ref{thm:SSEcheckingOmegaRegular}, which does not require the effective construction of $\Game_{\|v_0} \otimes \Mach$, nor the construction of $\Mach$ itself.

        For a fixed number of players, combining the individual machines into one yields a machine of size $\prod_i|\Mach_i|$.
        Applying the same algorithm as in  the proof of Theorem~\ref{thm:SSEcheckingOmegaRegular} also provides a polynomial time algorithm.
        Once again, the complexity is exponential in $n$ and the base depends on the size of the inputs (game, Mealy machines and parity automata, in this case), so that is an $\XP$ but not an $\FPT$ algorithm.

        \item \emph{Hardness.}

        The reduction from deterministic finite automata that was given in the proof of Theorem~\ref{thm:SSEcheckingOmegaRegular} is still functional, provided that the Mealy machine $\Mach$ is replaced by a family of (deterministic) machines $\Mach_1, \dots, \Mach_n$ (which all go to the vertex $\btd$), and arbitrary one-state deterministic machines $\Mach_\SS$ and $\Mach_\II$.
        Therefore the general case is also $\PSpace$-hard and $\W[1]$-hard for a fixed number of players.
\qedhere
    \end{itemize}
\end{proof}

    \subsubsection{In parity games}

In parity games, the situation is quite similar: the complexity that was brought by the multiplicity of automata in $\omega$-regular games is now imposed by the Mealy machines.

\begin{thm}
    In parity games, the compositional SSE-checking problem is $\PSpace$-complete, even when the Mealy machines are deterministic.
    When the number of players is fixed, the problem becomes $\Poly$-easy and $\W[1]$-hard.
\end{thm}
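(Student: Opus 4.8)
The plan is to read off the upper bounds from the $\omega$-regular case and to prove the lower bounds by hiding the automata inside the per-player Mealy machines rather than inside the objectives.

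\emph{Easiness.} A parity game is exactly the special case of an $\omega$-regular game in which each objective automaton has one state per vertex (as noted after the definition of $\omega$-regular games). Hence any instance of the compositional SSE-checking problem on a parity game is, verbatim, an instance on an $\omega$-regular game, and Theorem~\ref{thm:compCheckingOmegaReg} already supplies a $\PSpace$ procedure in general and a polynomial-time one when the number of players is fixed. Concretely, this is the on-the-fly search for a witness pair (Theorem~\ref{thm:product_graph}) in $\Game_{\|v_0}\otimes\prod_i\Mach_i$: the product machine is never built, one only keeps the current state of each $\Mach_i$ (polynomially many bits in total) together with the colours accumulated along the two guessed lassos, so the search uses polynomial space, and polynomial time once $n$ is fixed since then $\prod_i\Mach_i$ has polynomial size and the monolithic procedure of Theorem~\ref{thm:SSEcheckingParity} is $\FPT$ in the number of players.

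\emph{Hardness.} For $\PSpace$-hardness I would reduce from the $\PSpace$-complete emptiness problem for the intersection of $n$ deterministic finite automata $\auto_1,\dots,\auto_n$ over $\{a,b\}$~\cite{Kozen77}. Build a parity game with players $\SS$ (Solver) and $1,\dots,n$, one per automaton. Solver repeatedly writes a letter; after each letter every player $i$ moves to a vertex naming a state of $\auto_i$, and the arena carries a one-bit flag, reset at the start of each round, recording whether all states named so far in the round are accepting. When the flag is still set after player $n$ has moved, Solver may jump to a sink $w^\star$. The only non-trivial colouring makes every player $i$ lose precisely when $w^\star$ is reached, while $\SS$ always wins; the arena stays polynomial because the states themselves are remembered inside the machines, not in the vertices. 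The honest profile is deterministic: $\Mach_\SS$ writes $a$ forever and never jumps to $w^\star$, and each $\Mach_i$ keeps the true state of $\auto_i$ (updated only on Solver's letters) and always names it. Thus the honest play never reaches $w^\star$ and everybody wins, whereas a word $u\in\bigcap_i\Lang(\auto_i)$ yields a harmful deviation in which $\SS$ alone writes $u$, the faithful players name genuine accepting states, and $\SS$ jumps to $w^\star$, harming every player $i$ while keeping payoff $1$.

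\emph{The main obstacle} is the converse direction, and it is exactly where parity objectives are weaker than $\omega$-regular ones: a colouring of the arena cannot observe the automata states, which live inside the $\Mach_i$, so a deviating player $i$ may name an accepting state that is not the true one, and no vertex colour can punish such a fake. The construction sidesteps this by making $w^\star$ harm \emph{every} automaton-player at once. In a harmful deviation all deviators must keep payoff $1$; hence no player $i$ can be a deviator, for reaching $w^\star$ would make that very player lose, and not reaching it harms nobody. Consequently only $\SS$ deviates, players $1,\dots,n$ stay faithful and name true states, and $w^\star$ is reachable iff some prefix written by $\SS$ is simultaneously accepting in all $\auto_i$, i.e. iff $\bigcap_i\Lang(\auto_i)\neq\emptyset$. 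So a harmful deviation exists iff the intersection is non-empty, and since all machines are deterministic the hardness holds even then, matching the upper bound; this reduction uses $n+1$ players and polynomial deterministic machines, so it is parameter-preserving, and as intersection emptiness is $\W[1]$-hard in $n$~\cite{FernauHoffmannWehar2021}, the compositional SSE-checking problem in parity games is $\Poly$-easy but $\W[1]$-hard in the number of players.
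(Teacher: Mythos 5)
Your proposal is correct and follows essentially the same route as the paper: the upper bounds are read off verbatim by viewing a parity game as an $\omega$-regular game and invoking the compositional $\omega$-regular theorem, and the lower bounds come from the same reduction from DFA intersection emptiness, with the automata simulated inside deterministic per-player Mealy machines, a Solver who always wins, and a bad sink that harms every automaton-player simultaneously --- the key point being, exactly as in the paper, that this forces any deviating coalition to be $\{\SS\}$ alone, so faithful machines track true automaton states and the sink is reachable iff the intersection is nonempty. The only difference is cosmetic: the paper's gadget has Solver write the whole word and then lets each faithful player either pass control onward (if their automaton accepted) or bail to a harmless sink, whereas you have players name automaton states each round with an arena flag aggregating acceptance; both realize the same mechanism, work with deterministic machines, and preserve the parameter for the $\W[1]$-hardness claim.
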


\begin{proof}~
    \begin{itemize}
        \item \emph{Easiness.}

        Every parity game is an $\omega$-regular game (with one-state parity automata), hence the $\PSpace$-easiness and $\Poly$-easiness for a fixed number of players are an immediate consequence of Theorem~\ref{thm:compCheckingOmegaReg}.

        \item \emph{Hardness.}

        We proceed, once again, by reduction from the emptiness problem of the intersection of $n$ deterministic finite automata.
        This shows both the $\PSpace$-hardness and the $\W[1]$-hardness when the number of players is fixed.
        Let $\auto_1, \dots, \auto_n$ be $n$ such automata on the alphabet $\{a, b\}$.

        Let $\Game_{\|a}$ be the game depicted by Figure~\ref{fig:autoInterEmptiness2CompParitySseCheck_arena}.
        The players are Solver, controlling the round vertices, and $1, \dots, n$, controlling the square vertices (the player controlling each square vertex is indicated in its bottom right corner).
        The color mappings $\kappa_{1}, \dots, \kappa_{n}, \kappa_{\SS}$ are defined so that Solver always wins, and each number player wins if and only if the vertex $\btd$ is not reached.
        Namely $\kappa_{\SS}(v)=0$ for every vertex $v$, and for $i\in \{1,\dots,n\}$, $\kappa_{i}(\btd)=1$ and $\kappa_{i}(v)=0$ for any other vertex $v$.        

        \begin{figure*}
    \centering

\begin{subcaptionblock}{0.9\textwidth}
    \centering
    \begin{tikzpicture}[node distance=2cm]
        \node[state, initial, initial above] (a) at (0,0) {$a$};
        \node[state, right of=a] (b) {$b$};
        \node[literalState=$1$,below of=b] (1) {$1$};
        \node[literalState=$2$,right of=1] (2) {$2$};
        \node[right of=2] (dots) {$\cdots$};
        \node[literalState=$n$,right of=dots] (n) {$n$};
        \node[nobodyState, right of=n] (bt) {$\btd$}; 
        \node[nobodyState,below of=dots] (btd) {$\bt$};
\begin{pgfonlayer}{arrows}
        \path[->] (1) edge (2);
        \path[->] (2) edge (dots);
        \path[->] (dots) edge (n);
        \path[->] (n) edge (bt);
        \path[->] (a) edge[bend right] (b);
        \path[->] (b) edge[bend right] (a);
        \path[->] (a) edge[bend right] (1);
        \path[->] (b) edge (1);
        \path[->] (a) edge[loop left] (a);
        \path[->] (b) edge[loop right] (b);
        \path[->] (1) edge[bend right] (btd);
        \path[->] (2) edge[bend right, in=150] (btd);
        \path[->] (n) edge[bend left] (btd);
        \path[->] (btd) edge[loop below] (btd);
        \path[->] (n) edge (bt);
        \path[->] (bt) edge[loop right] (bt);
        \end{pgfonlayer}
    \end{tikzpicture}
    \caption{The arena.}
    \label{fig:autoInterEmptiness2CompParitySseCheck_arena}
    \end{subcaptionblock}

    \begin{subcaptionblock}{0.25\textwidth}
    \centering
    \begin{tikzpicture}[auto,node distance=3cm]
        \node[automatonState,initial] (init) at (0,0) {$1$};
        \path[->] (init) edge[loop above] node{$a|a, b|b, *$} (init);
    \end{tikzpicture}
    \caption{The machine $\Mach_{\SS}$.}
    \label{fig:autoInterEmptiness2CompParitySseCheck_solver}
    \end{subcaptionblock}
    \begin{subcaptionblock}{0.74\textwidth}
    \centering
    \begin{tikzpicture}[auto,node distance=2.5cm]
        \node[automatonState,rectangle,rounded corners=7pt,dotted,minimum size=3cm,initial] (autom) {$\auto_{i}$};
        \node[automatonState,double] (qacc) at (1, 1) {\phantom{b}};
        \node[automatonState] (someq) at (1, -1) {\phantom{b}};
        \node[automatonState,right of=qacc] (1) {\phantom{b}};
        \node[right of=1] (dots) {$\dots$};
        \node[automatonState,right of=dots] (i) {\phantom{b}};
        \node[automatonState,right of=someq] (1b) {\phantom{b}};
        \node[right of=1b] (dotsb) {$\dots$};
        \node[automatonState,right of=dotsb] (ib) {\phantom{b}};

        \path[->] (qacc) edge node{$1$} (1);
        \path[->] (1) edge node{$2$} (dots);
        \path[->] (dots) edge node{$i|i+1$} (i);
        \path[->] (i) edge[loop right] node{$*$} (i);
        \path[->] (someq) edge node{$1$} (1b);
        \path[->] (1b) edge node{$2$} (dotsb);
        \path[->] (dotsb) edge node{$i|\bt$} (ib);
        \path[->] (ib) edge[loop right] node{$*$} (ib);
    \end{tikzpicture}
    \caption{The machine $\Mach_i$ for player $i \in \{1, \dots, n\}$}
    \label{fig:autoInterEmptiness2CompParitySseCheck_i}
    \end{subcaptionblock}

    \caption{Encoding of the intersection emptiness problem into a compositional Parity SSE-checking problem.}
    \label{fig:autoInterEmptiness2CompParitySseCheck}
\end{figure*}

    Now, the deterministic Mealy machine $\Mach_\SS$ is depicted by Figure~\ref{fig:autoInterEmptiness2CompParitySseCheck_solver}: it captures a strategy that consists in looping on the vertex $a$ forever.
    Similarly, for each player $i \in \{1, \dots, n\}$, the Mealy machine $\Mach_i$ is depicted by Figure~\ref{fig:autoInterEmptiness2CompParitySseCheck_i} (when $i=n$, the reader should identify $i+1$ with $\btd$).
    It captures a strategy that consists in always going to the vertex $\bt$, except if the word constructed by the actions of Solver was a word recognized by the automaton $\auto_i$.

    Those Mealy machines define a strategy profile $\bsigma$.
    Let us prove that $\bsigma$ is an SSE if and only if the intersection $\bigcap_i \Lang(\auto_i)$ is empty.

    \begin{itemize}
        \item \emph{If $\bsigma$ is not an SSE.}

        Then, some coalition can induce a harmful deviation $\bsigma'$: since Solver cannot lose, he cannot be the player harmed by that deviation and that must be one number players $i$.
        That means $\<\bsigma'\>$ reaches vertex $\btd$, so all number players are actually harmed.
        Since having no deviation would not harm anyone, the coalition must be the singleton $\{\SS\}$.

        There is, then, a play $\pi$ that is compatible with the strategy profile $\bsigma_{-\SS}$, and that is lost by number players: that play has necessarily the form $w12\dots n \btd^\omega$, where $w \in \{a, b\}^*$.
        Moreover, for that play to be compatible with each strategy $\sigma_i$, the word $w$ must be accepted by the automaton $\auto_i$.
        Then, the intersection $\bigcap_i \Lang(\auto_i)$ is nonempty.

        \item \emph{If the intersection $\bigcap_i \Lang(\auto_i)$ is nonempty.}

        Then, let $w \in \bigcap_i \Lang(\auto_i)$: the play $\pi = w 12 \dots n \btd^\omega$ is compatible with the strategy profile $\bsigma_{-\SS}$.
        It is losing for all number players (and winning for Solver), hence the coalition $\{\SS\}$ has a profitable deviation. \qedhere
    \end{itemize}
    \end{itemize}
\end{proof}

\subsection{Fixed-payoff SSE (constrained) existence problem}\label{sec:existenceSSE}

We now switch to the fixed-payoff SSE (constrained) existence problem.
Unsurprisingly, that problem is often harder to solve: while the SSE-checking problems amounted to search for a harmful deviation, that problem consists in searching for a strategy profile that admits \emph{no} such deviation.

        \subsubsection{A tool: the deviator game}

Instead of the product graph, we need a new tool to solve that problem: it will take the form of a new game structure, in which one player, \emph{Prover}, tries to prove that an SSE generating the desired payoff exists, while another one, \emph{Challenger}, tries to prove that the strategy profile she constructs is actually not an SSE.
That game is the \emph{deviator game}, very similar to a construction with the same name proposed in~\cite{DBLP:conf/fossacs/Brenguier16}, in a slightly different context.

\begin{defi}[Deviator game]
    Let $\Game_{\|v_0}$ be a graph game, let $\bx$ be a payoff vector, and $\Corr$ a correctness condition.
    The \emph{deviator game} is the initialized graph game:
    \[\Dev_{\bx,\Corr} \Game_{\|v_0} = \left( \{\PP, \CC\}, V^\cd, E^\cd, \left(V^\cd_\PP, V^\cd_\CC\right), \mu^\cd\right)_{\|v_0^\cd},\]
    where:

     \begin{itemize}
         \item the player $\PP$ is called \emph{Prover}, and the player $\CC$ \emph{Challenger}.

         \item Prover controls the set $V^\cd_\PP = V \times 2^\Pi$, and Challenger the set $V^\cd_\CC = E \times 2^\Pi$.

         \item The initial vertex is $v_0^\cd = (v_0, \emptyset)$.

         \item The edge set $E^\cd$ is defined as follows: from the vertex $(u, D)$, Prover can go to every vertex of the form $(uv, D) \in E$ (she \emph{proposes} the edge $uv$).
         From the vertex $(uv, D)$, Challenger can go to the vertex $(v, D)$ (he \emph{accepts} the edge $uv$), or to every vertex $(w, D \cup \{i\})$, with $w \neq v$ and $uw \in E$, and where $i$ is the player controlling $u$ (he \emph{deviates} from Prover's proposal and $i$ is added to the set of deviators).

         \item Given a play $\chi$ is this game, we write $\dchi$ the play in $\Game$ constructed by the actions of Prover and Challenger: if $\chi = (u_0, D_0)(u_0v_0, D_0)(u_1, D_1) (u_1v_1, D_1) \dots$, then we define $\dchi = u_0 u_1 \dots$.
         We also define $\D(\chi) = \bigcup_k D_k$, the set of players who deviated along the play $\chi$.

         \item Then, $\mu^\cd$ is the Boolean payoff function defined as follows. A play $\chi$ is won by Challenger if and only if either:
         \begin{itemize}
             \item we have $\D(\chi) = \emptyset$ and $\mu(\dchi) \neq \bx$;
             \item we have $\D(\chi) = \emptyset$ and $\dchi \not\in \Corr$;
             \item or for every player $i \in \D(\chi)$, we have $\mu_i(\dchi) \geq x_i$, and there exists a player $j \notin \Pi$ such that $\mu_j(\dchi) < x_j$.
         \end{itemize}
     \end{itemize}
\end{defi}

\begin{thm} \label{thm_condev}
    Prover has a winning strategy in the game $\Dev_{\bx,\Corr} \Game_{\|v_0}$ if and only if there exists an SSE $\bsigma$ in $\Game_{\|v_0}$ with $\mu\< \bsigma \> = \bx$ and $\< \bsigma \>\vDash \Corr$.
\end{thm}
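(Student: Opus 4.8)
The plan is to prove both implications by exhibiting a two-way correspondence between winning strategies of Prover in $\Dev_{\bx,\Corr}\Game_{\|v_0}$ and SSEs of $\Game_{\|v_0}$ achieving payoff $\bx$ and satisfying $\Corr$, translating plays in one game into plays in the other so that the relevant outcomes are preserved. Throughout I read the third Challenger-winning clause in its intended sense: every deviator $i\in\D(\chi)$ satisfies $\mu_i(\dchi)\geq x_i$, and some non-deviator $j\notin\D(\chi)$ satisfies $\mu_j(\dchi)<x_j$; that is, $\dchi$ realizes a \emph{harmful} deviation of the coalition $\D(\chi)$ relative to the target $\bx$.

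For the implication $(\Leftarrow)$, given an SSE $\bsigma$ with $\mu\<\bsigma\>=\bx$ and $\<\bsigma\>\vDash\Corr$, I let Prover always propose the edge that $\bsigma$ prescribes at the current $\Game$-history, which Prover recovers as the projection $\dchi$ of the deviator-game history played so far. Along any resulting play $\chi$, writing $C=\D(\chi)$, exactly the players outside $C$ follow $\bsigma$ (no deviation ever occurs at their vertices), so $\dchi=\<\bsigma_{-C},\bsigma'_C\>$ for the deviation $\bsigma'_C$ induced by Challenger's choices. If $C=\emptyset$ then $\dchi=\<\bsigma\>$, so $\mu(\dchi)=\bx$ and $\dchi\in\Corr$, and Challenger meets none of his winning clauses. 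If $C\neq\emptyset$, the SSE property forbids $\bsigma'_C$ from being harmful, which is precisely the negation of the third clause, so Challenger again loses. Hence this Prover strategy is winning.

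The implication $(\Rightarrow)$ is the delicate direction. From a winning Prover strategy $\sigma_\PP$ I build a profile $\bsigma$ in $\Game$ via a simulation map $\Phi$ sending each $\Game$-history $h$ to the unique $\sigma_\PP$-consistent deviator-game history projecting to $h$: inductively, at Prover's vertex $\Phi(h)$ one reads the proposed edge $uv^*$, and extending $h$ by a vertex $w$ is recorded as an \emph{accept} if $w=v^*$ and otherwise as a \emph{deviation} of the owner of $u$, adding it to the deviator set. I then define $\bsigma$ to always follow Prover's proposal, $\bsigma_i(h)=v^*$. Running $\bsigma$ against itself keeps Challenger accepting, so $\dchi=\<\bsigma\>$ has empty deviator set; since $\sigma_\PP$ wins, Challenger fails his first two clauses, giving $\mu\<\bsigma\>=\bx$ and $\<\bsigma\>\vDash\Corr$.

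It remains to check that $\bsigma$ is an SSE, and here lies the main obstacle. Suppose a coalition $C$ had a harmful deviation $\bsigma'_C$ with outcome $\dchi^*=\<\bsigma_{-C},\bsigma'_C\>$; feeding $\dchi^*$ through $\Phi$ yields a $\sigma_\PP$-consistent play $\chi^*$ in which Challenger deviates exactly at the $C$-vertices where $\bsigma'_C$ departs from Prover's proposal. The subtlety is that the realized deviator set $D^*=\D(\chi^*)$ can be a \emph{strict} subset of $C$, whenever $\bsigma'_C$ happens to agree with $\bsigma$ at some $C$-vertices. This is harmless once one observes the right monotonicity: $D^*\subseteq C$ because non-members of $C$ follow Prover and never deviate, so every $i\in D^*$ still satisfies $\mu_i(\dchi^*)\geq x_i$, while the harmed witness $j\notin C$ satisfies $j\notin D^*$ and $\mu_j(\dchi^*)<x_j$. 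Thus Challenger's third clause fires on $\chi^*$, contradicting that $\sigma_\PP$ is winning. Hence no harmful deviation exists and $\bsigma$ is the required SSE.
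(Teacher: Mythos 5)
Your proof is correct, and in fact there is no in-paper proof to compare it against: the paper states Theorem~\ref{thm_condev} and immediately moves on to the algorithmic sections that rely on it (the short version defers the argument to an extended version, and no proof is attached to the statement in this source). Your argument supplies the missing justification and is the natural one the construction is designed to support: Prover plays the candidate SSE in one direction, and in the other a profile is extracted from a winning Prover strategy via the history-lifting map $\Phi$. Two points you handle are genuinely the crux. First, your repaired reading of the third winning clause (the paper's ``$j \notin \Pi$'' must be read as $j \notin \D(\chi)$) is the intended one; it is confirmed by the way the same condition is re-expressed inside the proof of Theorem~\ref{thm:sseExistencePspaceEasy}, where Challenger's win with a nonempty deviator set $D$ requires every deviator to retain payoff at least $x_i$ while some player --- forced by that very requirement to be a non-deviator --- falls below $x_j$. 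Second, the $(\Rightarrow)$ direction does need exactly the observation you flag: a harmful deviation of a coalition $C$ lifts to a Challenger play whose \emph{recorded} deviator set $\D(\chi^*)$ may be a proper subset of $C$, since Challenger's bookkeeping only registers players at vertices where the play actually departs from Prover's proposal; your monotonicity argument ($\D(\chi^*) \subseteq C$ gives $\mu_i \geq x_i$ for every recorded deviator, and the harmed witness $j \notin C$ is a fortiori outside $\D(\chi^*)$) is what makes the minimal-coalition bookkeeping of the deviator game sound. One small thing worth making explicit in the $(\Leftarrow)$ direction: your identification of ``harmful deviation from $\bsigma$'' with Challenger's third clause silently uses $\mu\<\bsigma\> = \bx$, so that comparisons against the equilibrium payoff and against the target vector $\bx$ coincide; without that hypothesis the two conditions differ, which is precisely why the deviator game is parameterized by $\bx$.
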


        \subsubsection{In parity games}

Let us first consider the case of parity games: then, similarly to what was done in~\cite{DBLP:conf/fossacs/Brenguier16}, a cautious way to solve the deviator game leads to an algorithm that uses only polynomial space.

\begin{thm}\label{thm:sseExistencePspaceEasy}
    In parity games, the fixed-payoff SSE (constrained) existence problem is $\PSpace$-complete.
    It is also $\FPT$ if both the number of colors and the number of players are fixed, but not if only one of them is ---~unless solving two-player zero-sum parity games is $\Poly$-easy.
\end{thm}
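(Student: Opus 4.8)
The plan is to apply Theorem~\ref{thm_condev}, which turns the fixed-payoff SSE (constrained) existence problem into the question of whether Prover wins the two-player zero-sum game $\Dev_{\bx,\Corr}\Game_{\|v_0}$. The whole proof then reduces to analysing this deviator game. Its arena has size $O(|E|\cdot 2^{|\Pi|})$, exponential in the number $n$ of players, and its objective is subtle: once the \emph{monotone} deviator set $\D(\chi)$ stabilises to some $D$, Challenger wins according to a fixed Boolean combination of the parity conditions $\Parity(\kappa_i)$ (and of $\Corr$) determined by $D$ and $\bx$.

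For $\PSpace$-easiness I would exploit this monotonicity. To decide whether Prover wins from a vertex $(v,D)$, treat every deviating move of Challenger, which necessarily leads to a strictly larger deviator set $(w,D\cup\{i\})$, as a terminal whose owner is computed recursively. This leaves a game on the polynomial arena $V$ in which, if the play never deviates again, Challenger wins exactly according to the fixed Boolean combination of parities associated with $D$. That residual game is a Muller (Emerson--Lei) game on a polynomial arena, hence solvable in $\PSpace$. Since the deviator set can only grow, any chain of recursive calls has length at most $n$, and as $\PSpace$ is closed under polynomially deep self-composition the entire procedure runs in polynomial space. This bookkeeping---never materialising the $2^{|\Pi|}$ layers simultaneously, following the analogous construction of~\cite{DBLP:conf/fossacs/Brenguier16}---is the main obstacle of the proof. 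For $\PSpace$-hardness I would reduce from $\QSat$: each propositional variable is handled by a dedicated player, the quantifier alternation is mirrored by the alternation between ``there is a profile realising $\bx$'' and ``for every coalition deviation'' that is intrinsic to SSE existence, and the matrix is checked on a small gadget arena whose colouring forces the target payoff $\bx$ precisely when the formula is true. I would design this reduction to use only a bounded number of colours.

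For the parameterised upper bound, when both $n$ and $k$ are fixed the deviator game has size polynomial in $|\Game|$ (the factor $2^n$ is now constant), and each residual per-layer Muller condition ranges over at most $nk$ colours, a constant; hence each layer can be solved (e.g.\ after converting it to a fixed-colour parity game through a deterministic automaton of size depending only on $n,k$) in polynomial time, and with the constant-depth recursion the total running time is $f(n,k)\cdot\mathrm{poly}(|\Game|)$, i.e.\ the problem is $\FPT$.

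For the negative part I would treat the two slices separately, both reaching the two-player parity barrier. Fixing only $n$ (even $n=2$) leaves $k$ unbounded: I would give a direct reduction from a two-player zero-sum parity game to the $n=2$ fixed-payoff SSE existence problem, the two agents being the two players and the target payoff $\bx$ (say $(1,0)$) asserting that the protagonist wins; an SSE with that payoff exists exactly when the protagonist has a winning strategy, since the only candidate harmful deviation is the antagonist trying to drop the protagonist's payoff. Thus an algorithm running in polynomial time for fixed $n$ would solve two-player parity games in polynomial time. Fixing only $k$ leaves $n$ unbounded; because the $\QSat$ reduction above uses only boundedly many colours, the problem remains $\PSpace$-hard for fixed $k$, so an $\FPT$ (hence, for fixed $k$, polynomial) algorithm would give $\Poly=\PSpace$ and in particular place two-player parity games (which lie in $\NP\cap\coNP\subseteq\PSpace$) in $\Poly$. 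Hence in neither slice can the problem be $\FPT$ unless two-player zero-sum parity games are $\Poly$-easy.
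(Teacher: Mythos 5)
Your overall architecture coincides with the paper's: Theorem~\ref{thm_condev} to pass to the deviator game; a recursion over the monotone deviator set, solving each layer as an Emerson--Lei game in polynomial space (the paper does exactly this, invoking~\cite{HunterDawar05}); a reduction from two-player zero-sum parity games for the fixed-$n$ slice (your variant, which keeps the antagonist's complementary objective and asks for payoff $(1,0)$, works just as well as the paper's, which makes Adam's objective trivially true and asks for $(1,1)$); and a bounded-colour \QSat\ reduction for both $\PSpace$-hardness and the fixed-$k$ slice.

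There is, however, a genuine gap in that \QSat\ reduction. You claim the quantifier alternation ``is mirrored by the alternation between `there is a profile realising $\bx$' and `for every coalition deviation' that is intrinsic to SSE existence.'' That is a single $\exists\forall$ alternation; it cannot simulate the $n$ nested alternations $\exists x_1 \forall x_2 \cdots \forall x_n$ of a \QSat\ instance. The missing idea, which the paper supplies, is to embed the quantifier prefix into the arena itself via two designated players: Solver ($\solver$), who controls the existentially quantified variables, and Opponent ($\opponent$), who controls the universally quantified ones, alternating in a ``setting'' module (\figurename~\ref{fig:qsat2SSEGame}). The coalition structure then finishes the job: any harmful deviation can be assumed to include Solver but can never include Opponent (Opponent loses whenever the checking module is entered), so Solver's deviation must secure the matrix against Opponent's strategy, which is fixed inside the profile but universally quantified because profiles are; history-dependence of strategies yields the inner alternations. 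Note also that the paper's equivalence is the complement of what you sketch: $\varphi$ is true iff there is \emph{no} SSE with payoff $(1,\dots,1)$ --- harmless since $\PSpace$ is closed under complement, but it shows the gadget cannot simply ``force the target payoff precisely when the formula is true.'' A smaller caveat: in your $\FPT$ argument, converting each layer to a parity game and solving it ``in polynomial time'' only yields an $\XP$ bound with classical parity algorithms, whose exponent depends on the number of priorities and hence on $n$ and $k$; to genuinely get $f(n,k)\cdot\mathrm{poly}(|\Game|)$ you need either an Emerson--Lei solver with $\FPT$ complexity, as the paper uses from~\cite{BruyereHautemRaskin18}, or a parity solver that is $\FPT$ in the number of priorities.
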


\begin{proof}~
    \begin{itemize}
        \item \emph{$\PSpace$-easiness.}

    By Theorem~\ref{thm_condev}, deciding the fixed-payoff SSE existence problem amounts to deciding which player, among Prover and Challenger, has a winning strategy in the corresponding deviator game.
    That game has an exponential size, but has a specific shape that makes it possible to solve it region by region, without using exponential space.

    Indeed, let us define recursively an algorithm that decides, given $\Game_{\|v_0}$, $\bx$ and a vertex $(u_0, D)$ of the game $\Dev_{\bx,\Corr} \Game$, whether Prover has a winning strategy from that vertex.
    Let $W = \{i\in\Pi ~|~ x_i=1\}$ be the set of players winning according to payoff vector $\bx$.
    First, construct the game $\Gameb^D_{\|(u_0, D)}$ as follows: construct the region of $\Dev_{\bx,\Corr} \Game$ that is made of vertices of the form $(u, D)$ or $(uv, D)$.
    Define on that region the color mappings so that player $i$'s payoff, in that region, is always equal to their payoff in the corresponding play in $\Game$: i.e., for all $i \in \Pi$, define $\kappa'_{i}(uv, D) = \kappa'_{i}(u, D) = \kappa_{i}(u)$ (and similarly for the correctness condition: $\kappa'_{\Corr}(uv, D) = \kappa'_{\Corr}(u, D) = \kappa_{i}(u)$).
    
    For each edge $(uv, D)(w, D \cup \{i\})$ that leaves that region, add to the constructed game the edge $(uv, D)(w, D \cup \{i\})$ and the loop $(w, D \cup \{i\})(w, D \cup \{i\})$, and decide recursively whether Prover has a winning strategy from the vertex $(w, D \cup \{i\})$.
    If she does, call the vertex $(w, D \cup \{i\})$ a \emph{Prover's leaf}, and define the color mappings so that only players in $W$ would win on that loop: for each $i' \in W$, define $\kappa'_{i'}(w, D \cup \{i\}) = 0$; for $i'\notin W$, $\kappa'_{i'}(w, D \cup \{i\}) = 1$; and $\kappa'_{\Corr}(w, D \cup \{i\}) = 0$.
    If she does not, call the vertex $(w, D \cup \{i\})$ a \emph{Challenger's leaf}, and define the color mappings so that only players in $D \cup \{i\}$ would win on that loop: for each $i' \in D \cup \{i\}$, define $\kappa'_{i'}(w, D \cup \{i\}) = 0$; for $i \notin D \cup \{i\}$, $\kappa'_{i'}(w, D \cup \{i\}) = 1$; and $\kappa'_{\Corr}(w, D \cup \{i\}) = 1$.
    
    Now, define the payoff functions as follows.
    If $D \neq \emptyset$, then a play $\pi$ is won by Challenger if and only if:
    \[\pi \in \left( \bigcup_{i \in W} \overline{\Parity(\kappa'_{i})} \right) \cap \left( \bigcap_{i \in D\cap W} \Parity(\kappa'_{i}) \right)\]
    (i.e., if $\pi$ gives to at least one player $i$ a payoff worse than $x_i$, and to every player who deviated at least the same payoff as in $\bx$).
    If $D = \emptyset$, then a play $\pi$ is won by Challenger if and only if:
    \[\pi \in \bigcup_{i \in W} \overline{\Parity(\kappa'_{i})} \cup \bigcup_{i \notin W} \Parity(\kappa'_{i})\cup \overline{\Parity(\kappa_\Corr')}\]
    (i.e., if the payoff vector generated by Prover, without any deviation, is not equal to $\bx$ or the correctness constraint is not met).
    Note that in both cases, Challenger wins if he reaches a Challenger's leaf, and loses if he reaches a Prover's leaf.
    
    Prover has a winning strategy in the game $\Gameb^D_{\|(u_0, D)}$ if and only if she has a winning strategy from the vertex $(u_0, D)$ in the game $\Dev_{\bx,\Corr} \Game$.
    Indeed, if Prover has a winning strategy in $\Gameb^D_{\|(u_0, D)}$, then she can follow that strategy in $\Dev_{\bx,\Corr} \Game$.
    Then, she either stays in the region where the deviating coalition is $D$, or she reaches a vertex of the form $(w, D \cup \{i\})$ from which she has a winning strategy, that she can then follow.
    Conversely, if she has a winning strategy from $(u_0, D)$ in $\Dev_{\bx,\Corr} \Game$, she can follow it in $\Gameb^D_{\|(u_0, D)}$: she will then either stay in the region where the deviating coalition is $D$, or reach a Prover's leaf (since the strategy she is following afterwards in $\Dev_{\bx,\Corr} \Game$ is still winning), which makes her win.

    This game can be seen as an Emerson-Lei game~\cite{EmersonLei87}: Prover's and Challenger's winning conditions are Boolean combinations of Büchi conditions.
    Indeed, if $\B(v^\cd)$ denotes, for each vertex $v^\cd$, the set of plays that visit infinitely often the vertex $v^\cd$, then we have the equality:
    \ifArxiv
    \[\Parity(\kappa'_{i}) = \bigcup_{2m < k} \left(\bigcup_{\kappa'_{i}(v^\cd) = 2m} \B(v^\cd) \cap \neg \bigcup_{\l < 2m} \bigcup_{\kappa'_{i}(w^\cd) = \l} \B(w^\cd)\right)\]
    \else
    \begin{mathpar}
        \Parity(\kappa'_{i}) = \bigcup_{2m < k} \Bigg(\bigcup_{\kappa'_{i}(v^\cd) = 2m} \B(v^\cd) \cap \neg \bigcup_{\l < 2m} \bigcup_{\kappa'_{i}(w^\cd) = \l} \B(w^\cd)\Bigg)
    \end{mathpar}
    \fi
    and $\overline{\Parity(\kappa'_{i})} = \neg \Parity(\kappa'_{i})$.
    It was shown in~\cite{HunterDawar05} that such games can be solved using a space polynomial in the size of the game and of the objectives.
    Here, both are themselves polynomial in the size of $\Game$.
    Thus, the last step of our recursive algorithm consists in solving $\Gameb^D_{\|(u_0, D)}$.

    Then, applying this recursive algorithm from the vertex $(v_0, \emptyset)$ solves our problem: there is an SSE $\bsigma$ in $\Game_{\|v_0}$ such that $\mu\< \bsigma \> = \bx$ if and only if Prover has a winning strategy from that vertex.
    Moreover, that algorithm uses polynomial space: each recursive call does, and the recursion stack has size at most $|\Pi|$, i.e. polynomial.
    Therefore, the fixed-payoff SSE existence problem in parity games is $\PSpace$-easy.

        \item \emph{Fixed-parameter tractability.}

        In~\cite{BruyereHautemRaskin18}, it has been shown that an Emerson-Lei game $\Game$ with state space $S$ and objective $\phi$ can be solved in time:
        \[O\left( 2^{2^{|\phi|}} |\phi| + \left(2^{|\phi| 2^{|\phi|}} |S|\right)^5\right).\]
        In our case, the size of the objective $\phi$ depends only on $k$ and $n$, while $|S| = 2^n |V|$, hence the desired result.
        
        \item \emph{$\PSpace$-hardness (with fixed number of colors).}

        We reduce the Quantified Satisfaction Problem (QSAT) to the (unconstrained) SSE fixed-payoff existence problem in co-Büchi games: that will then prove $\PSpace$-hardness for the SSE fixed-payoff existence problem, even when $m$ is fixed ---~hence it cannot be $\FPT$ with $m$ as parameter.
        
    Let \[\varphi=\exists x_1 \forall x_2  \cdots \exists x_{n-1} \forall x_n \bigwedge_{i=1}^p \bigvee_{j=1}^{m_i} \l_{i,j}\] be an instance of QSAT in conjunctive normal form where for all $i,j$, $\l_{i,j}$ is a literal of the form $x_k$ or $\neg x_k$ for some $k\in\{1,\dots,n\}$.
    We write $\psi(x_1,\dots,x_n)=\bigwedge_{i=1}^p \bigvee_{j=1}^{m_i} \l_{i,j}$ the non-quantified part of the formula.
    \medskip
    
    We construct a game with $2n+2$ players: one for each literal and two extra players \emph{Solver} (written $\solver$) and \emph{Opponent} (written $\opponent$) as depicted in \figurename~\ref{fig:qsat2SSEGame}.
    Round states belong to Solver, hexagonal states belong to Opponent, rectangular states belong to the literal indicated in the bottom right corner (if any: some states only have a single outgoing edge and therefore could belong to anyone without loss of generality).
    Red clouds indicate the list of players for which a given state is in the co-B\"uchi set: infinitely many visits to these states means losing (payoff $0$) while only finitely many visits means winning (payoff $1$).
    
    It is composed of three parts: in the first part, the \emph{setting} module, a valuation can be chosen by visiting $?x_i$ (setting $x_i=\top$) or $\neg x_i$ (setting $x_i=\bot$), the choice belonging to Solver if $x_i$ is existentially quantified and to Opponent if it is universally quantified.
    When the value of a variable is chosen, the corresponding literal player can choose to go to a sink state $\blacktriangledown$ (duplicated in the figure for clarity) or continue the game.
    
    If that does not happen then the game moves into the second part, the \emph{checking} module, where the formula is ``checked'': each disjunctive clause $C_i$ has a state owned by Solver who must choose a literal $\l_{i,j}$ from the clause.
    State $\l_{i,j}$ is in the co-B\"uchi set of the player $\neg \l_{i,j}$, and would lose if said state was visited infinitely often.
    When this ``checking'' is done, the last part, called \emph{punishing} module, visits the co-B\"uchi set of one literal per variable, the choice of which being left to Solver.

    The game continues back to the checking module.
    As the last state of these modules is in the co-B\"uchi set of Opponent, it will lose if the game enters these modules.
    Remark that there are no co-B\"uchi states for Solver who therefore always wins.
    \medskip

    We will show that $\varphi$ holds if and only if there can be no SSE with payoff vector $(1)_{i \in \Pi}$.
    The intuition is that a coalition made of Solver and all true literals can deviate to make Opponent and all false literals lose while not penalizing the players in the coalition.

    \begin{itemize}
        \item First, assume $\varphi$ does not hold.
    Let $\bsigma$ be the following profile of strategies: each literal goes to $\blacktriangledown$ whenever possible in the setting module, Solver plays any strategy, and Opponent plays the optimal strategy in response to Solver's choice of valuation so that $\psi(x_1,\dots,x_n)$ is false.
    The outcome of this profile is either path $?x_1\cdot x_1^s \cdot\blacktriangledown^\omega$ or $?x_1\cdot\neg x_1^s\cdot\blacktriangledown^\omega$, both with payoff vector $(1)_{i \in \Pi}$.
    
    Let us prove that it is an SSE.
    Let $C$ be a coalition of players and assume that there exists a harmful deviation for this coalition.
    As Solver always wins, she can be assumed to be in the coalition.
    And since no one loses in the setting module or $\btd$, any harmful deviation must reach reach the checking module.
    Because that means that Opponent would then lose, Opponent cannot be in the coalition.
    For each variable $x_i$, players $x_i$ and $\neg x_i$ cannot be both in the coalition as (at least) one of them loses through the infinite visits in the punishing module.
    To actually reach the checking module, all literal players whose state was visited in the setting module must have deviated from $\bsigma$, therefore are in the coalition.
    As a result the coalition is made of Solver and exactly one literal per variable, defining a valuation.
    Because Opponent played optimally, that valuation does not satisfy $\psi(x_1,\dots,x_n)$, so there is a clause $C_i$ where all literals $\l_{i,1}, \dots, \l_{i,m_i}$ are false, meaning that the player corresponding to their negation are all in the coalition.
    The choice by Solver of any of these literals therefore enforces a visit to the co-B\"uchi set of (at least) one player in the coalition.
    As this choice is infinitely repeated, that makes this player lose and therefore it should not be part of the coalition, which is a contradiction.
    So $\bsigma$ is an SSE.
    
    \item Now assume that $\varphi$ holds.
    Let $\bsigma$ be a strategy profile with payoff $(1,\dots,1,1,1)$.
    As noted above, that payoff requires the outcome of $\bsigma$ to end in the $\blacktriangledown$ state.
    Consider the following strategy for Solver: in the setting module, choose the literal that ensures satisfaction of $\psi(x_1,\dots,x_n)$, based on the already known choices of Opponent.
    This is possible since $\varphi$ holds.
    Let $C$ be the coalition made of Solver and all literal players corresponding to states $x_i^s$ or $\neg x_i^s$ visited in the setting module.
    These players deviate by reaching state $?x_{i+1}$ (or $C_1$ if $i=n$) instead of going to $\blacktriangledown$.
    Since the checking module is reached, Opponent (at least) will lose and be harmed by the deviation.
    The valuation thus built ensures that $\psi(x_1,\dots,x_n)$ is satisfied, so in the checking module, for every clause there is a true literal.
    The strategy of Solver consists in consistently choosing the state for these literals.
    Note that that means visiting a co-B\"uchi set for a player that is not in the coalition.
    In the punishing module, Solver visits states corresponding to players in the coalition (i.e. the exact same one that were visited in the setting module: if $x_i^s$ was visited, consistently choose $x_i^p$, if $\neg x_i^s$ was visited, consistently choose $\neg x_i^p$.
    This ensures that the co-B\"uchi set of no player in the coalition $C$ is ever visited, hence all players in this coalition still win, therefore the deviation is harmful and $\bsigma$ is not an SSE.
    \end{itemize}

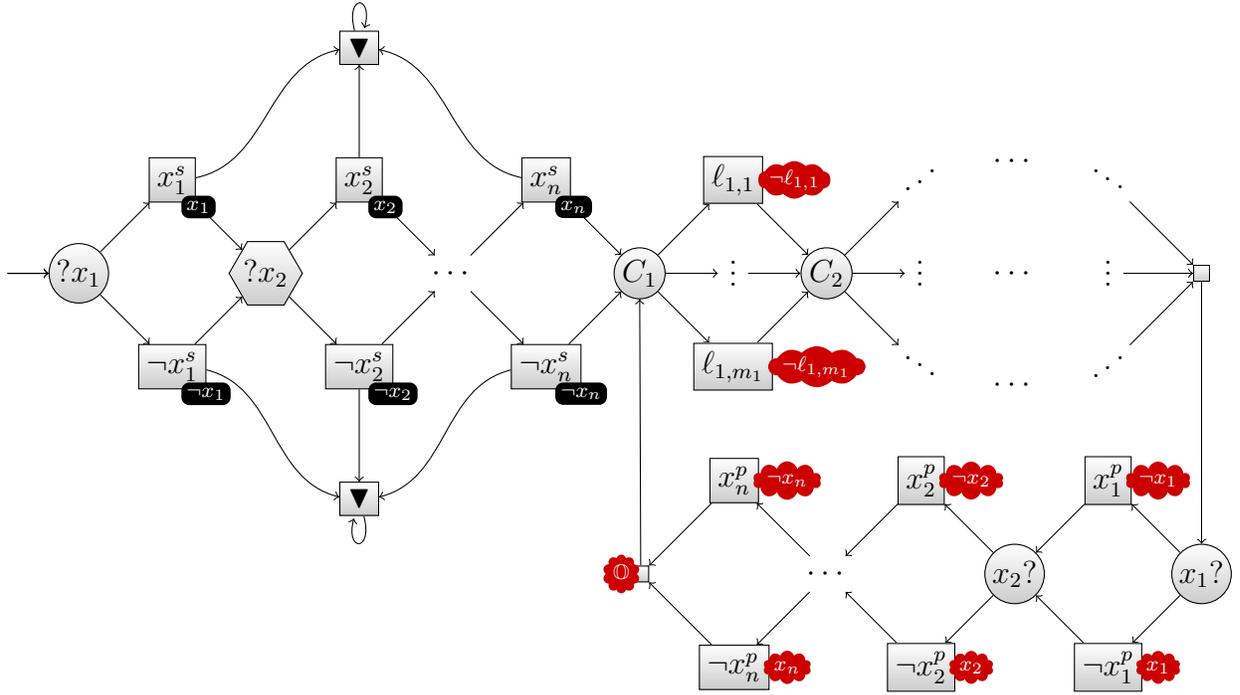
\begin{figure*}
    \centering
    \begin{tikzpicture}[node distance=50pt]
        \node[solverState,initial] at (0,0) (q1) {$?x_1$};
        \node[literalState=$x_1$,above right of=q1] (p1) {$x_1^s$};
        \node[literalState=$\neg x_1$,below right of=q1] (n1) {$\neg x_1^s$};
        \node[opponentState,below right of=p1] (q2) {$?x_2$};
        \node[literalState=$x_2$,above right of=q2] (p2) {$x_2^s$};
        \node[literalState=$\neg x_2$,below right of=q2] (n2) {$\neg x_2^s$};
        \node[below right of=p2] (q3) {$\cdots$};
        \node[literalState=$x_n$,above right of=q3] (pn) {$x_n^s$};
        \node[literalState=$\neg x_n$,below right of=q3] (nn) {$\neg x_n^s$};

        \node[solverState,below right of=pn] (c1) {$C_1$};
        \node[nobodyState,above right of=c1] (l11) {$\l_{1,1}$}; \cobuchiLabel[inner sep=0pt,cloud puff arc=95]{l11}{$\neg \l_{1,1}$}
        \node[nobodyState,below right of=c1] (l1m1) {$\l_{1,m_1}$};  \cobuchiLabel[inner sep=0pt,cloud puff arc=95]{l1m1}{$\neg \l_{1,m_1}$}
        \node (c1dots) at (barycentric cs:l11=1,l1m1=1) {$\myvdots$};
        \node[solverState,below right of=l11] (c2) {$C_2$};
        \node[above right of=c2] (l21) {$\myiddots$};
        \node[below right of=c2] (l2m2) {$\myddots$};
        \node (c2dots) at (barycentric cs:l21=1,l2m2=1) {$\myvdots$};
        \node[right of=l21,node distance=2.5cm] (lp1) {$\myddots$};
        \node[right of=l2m2,node distance=2.5cm] (lpmp) {$\myiddots$};
        \node (cpdots) at (barycentric cs:lp1=1,lpmp=1) {$\myvdots$};
        \node[yshift=7pt] (ldots1) at (barycentric cs:l21=1,lp1=1) {$\cdots$};
        \node[yshift=-7pt] (ldotsm) at (barycentric cs:l2m2=1,lpmp=1) {$\cdots$};
        \node (ldots) at (barycentric cs:ldots1=1,ldotsm=1) {$\cdots$};
        \node[nobodyState,below right of=lp1] (ph) {};

        \node[solverState,below of=ph,node distance=4cm] (v1) {$x_1?$};
        \node[nobodyState,above left of=v1] (vp1) {$x_1^p$}; \cobuchiLabel{vp1}{$\neg x_1$}
        \node[nobodyState,below left of=v1] (vn1) {$\neg x_1^p$}; \cobuchiLabel{vn1}{$x_1$}
        \node[solverState,below left of=vp1] (v2) {$x_2?$};
        \node[nobodyState,above left of=v2] (vp2) {$x_2^p$}; \cobuchiLabel{vp2}{$\neg x_2$}
        \node[nobodyState,below left of=v2] (vn2) {$\neg x_2^p$}; \cobuchiLabel{vn2}{$x_2$}
        \node[below left of=vp2] (vn) {$\cdots$};
        \node[nobodyState,above left of=vn] (vpn) {$x_n^p$};  \cobuchiLabel{vpn}{$\neg x_n$}
        \node[nobodyState,below left of=vn] (vnn) {$\neg x_n^p$}; \cobuchiLabel{vnn}{$x_n$}
        \node[nobodyState,below left of=vpn] (ph2) {}; \cobuchiLabelLeft{ph2}{$\opponent$}

        \node[nobodyState,above of=p2] (sinkTop) {$\blacktriangledown$}; 
        \node[nobodyState,below of=n2] (sinkBot) {$\blacktriangledown$}; 

        \begin{pgfonlayer}{arrows}
            \path[->] (q1) edge (p1) edge (n1);
            \path[->] (p1) edge (q2) edge[bend left,out=-30] (sinkTop);
            \path[->] (n1) edge (q2) edge[bend right,out=30] (sinkBot);
            \path[->] (q2) edge (p2) edge (n2);
            \path[->] (p2) edge (q3) edge (sinkTop);
            \path[->] (n2) edge (q3) edge (sinkBot);
            \path[->] (q3) edge (pn) edge (nn);
            \path[->] (pn) edge (c1) edge[bend right,out=30] (sinkTop);
            \path[->] (nn) edge (c1) edge[bend left,out=-30] (sinkBot);
    
            \path[->] (c1) edge (l11) edge (c1dots) edge (l1m1);
            \path[<-] (c2) edge (l11) edge (c1dots) edge (l1m1);
            \path[->] (c2) edge (l21) edge (c2dots) edge (l2m2);
            \path[<-] (ph) edge (lp1) edge (cpdots) edge (lpmp);

            \path[->] (ph) edge (v1);

            \path[->] (v1) edge (vp1) edge (vn1);
            \path[<-] (v2) edge (vp1) edge (vn1);
            \path[->] (v2) edge (vp2) edge (vn2);
            \path[<-] (vn) edge (vp2) edge (vn2);
            \path[->] (vn) edge (vpn) edge (vnn);
            \path[<-] (ph2) edge (vpn) edge (vnn);

            \path[->] (ph2) edge (c1);
    
            \path[->] (sinkTop) edge[loop above] (sinkTop);
            \path[->] (sinkBot) edge[loop below] (sinkBot);
            \end{pgfonlayer}
    \end{tikzpicture}
    \caption{Game to encode QSAT  instance $\exists x_1 \forall x_2  \cdots \exists x_{n-1} \forall x_n \bigwedge_{i=1}^p \bigvee_{j=1}^{m_i} \l_{i,j}$ into the existence of an SSE with payoff $(1,\dots,1,1,1)$. Round states belong to Solver, hexagonal states belong to Opponent, rectangular states belong to the literal indicated in the bottom right corner (if any). The red cloud lists the set of players for which this state is in the co-B\"uchi condition.}
    \label{fig:qsat2SSEGame}
\end{figure*}

    \item \emph{$\PSpace$-hardness (with fixed number of players).}

    We show here that when $n$ is fixed to $2$, the SSE fixed-payoff existence problem is at least as hard as solving a two-player zero-sum parity game.
    The latter problem is known to belong to $\NP \cap \coNP$; it is not known whether it can be solved in deterministic polynomial time.

    Let $\Game$ be a parity game $\Game_{\|v_0}$ with two players, \emph{Eve} (written $\eve$) and \emph{Adam} (written $\adam$), and such that $\kappa_\adam = \kappa_\eve + 1$ (i.e., Adam's winning condition is the complement of Eve's one).

    Let us now consider the game $\Game'_{\|v_0}$, with the same arena, and with color mappings $\kappa'_\eve = \kappa_\eve$ and $\kappa_\adam$ constantly equal to $2$.

    Let us assume Eve has a winning strategy $\sigma_\eve$ in $\Game_{\|v_0}$: then, for any strategy $\sigma_\adam$, the strategy profile $\bsigma$ is, in $\Game'_{\|v_0}$, an SSE where both Eve and Adam get the payoff $1$.
    Indeed, Adam gets the payoff $1$ as he always does, Eve gets the payoff $1$ because $\sigma_\eve$ wins against $\sigma_\adam$, and that strategy profile is an SSE because no player can deviate to harm the other one: Adam wins every play, and Eve is playing a winning strategy.

    Conversely, if $\bsigma$ is an SSE in $\Game'_{\|v_0}$ in which both Eve and Adam get the payoff $1$, then consider some alternative strategy $\sigma'_\adam$ for Adam: like every play, the play $\< \sigma_\eve, \sigma'_\adam \>$ is won by Adam in $\Game'$.
    Consequently, it is also won by Eve, otherwise it would be a harmful deviation for Adam.
    It is therefore also won by Eve in $\Game$.
    This proves that $\sigma_\eve$ is a winning strategy in $\Game_{\|v_0}$.

    As a consequence, the problem of solving a two-player zero-sum parity game reduces to the SSE fixed-payoff existence problem in parity games with only two players; and therefore, the SSE fixed-payoff existence problem in parity games cannot be solved in polynomial time if only $n$ is fixed (unless parity games are solvable in $\Poly$).
    \qedhere
\end{itemize}
\end{proof}

\subsubsection{In \texorpdfstring{$\omega$}{ω}-regular games}

When the payoff functions are defined by parity automata, those ones must be incorporated in the arena of the game, entailing an exponential blowup.
However, there is no exponential blowup in the formula that defines the winning conditions of Prover and Challenger, hence the game can be solved by using algorithms that exists in the literature, in exponential time.

\begin{thm}
    In $\omega$-regular games, the fixed-payoff SSE (constrained) existence problem is $\ExpTime$-complete.
    When both  the number of players and colors are fixed, the problem becomes $\Poly$-easy but is $\XP$-complete.
    When only the number of colors is fixed, the problem remains $\PSpace$-hard.%
\end{thm}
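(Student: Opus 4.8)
The plan is to reduce, exactly as in the parity case, to solving the deviator game via Theorem~\ref{thm_condev}: a positive instance exists if and only if Prover wins $\Dev_{\bx,\Corr}\Game_{\|v_0}$. The only difference with parity games is that the payoffs $\mu_i$ are now read off the parity automata $(\auto_i)_{i\in\Pi}$ (and $\auto_\Corr$) rather than directly off the vertices, so I would first \emph{internalise} these automata into the arena by taking the synchronised product of the deviator game with all the $\auto_i$ and $\auto_\Corr$. This product arena has size $O\!\left(2^{n}\,|V|\prod_i|Q_i|\right)$, which is exponential in the number $n$ of players but only \emph{polynomial} once $n$ is fixed. Crucially, the blow-up is confined to the arena: once the automaton states are available, each player's objective becomes a plain parity condition on its own component, and Challenger's winning condition is, for each stabilised deviator set $D$, the fixed Boolean combination $\bigwedge_{i\in D,\,x_i=1}\Parity(\kappa_i)\ \wedge\ \bigvee_{j\notin D,\,x_j=1}\overline{\Parity(\kappa_j)}$ --- a generalised parity (Emerson--Lei) condition whose description stays polynomial in $n$ and $k$.

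For the upper bounds I would then solve this game region by region in the set of deviators, exactly as in the proof of Theorem~\ref{thm:sseExistencePspaceEasy} but now over the product arena. Each region is a two-player zero-sum generalised parity game; such games are solved in time polynomial in the arena for a fixed number of parity objectives and priorities, and in general in time singly exponential in $n$ and $k$~\cite{EmersonLei87,HunterDawar05,BruyereHautemRaskin18}. Since the arena is $2^{\mathrm{poly}(n)}$ and there are $O(nk)$ colours, the total running time is singly exponential, giving membership in $\ExpTime$. When both $n$ and $k$ are fixed, the product arena is polynomial (of degree depending on $n$) and the condition has bounded size, so the same procedure runs in time $m^{g(n,k)}$: polynomial for each fixed $(n,k)$, hence in $\XP$ for the parameters $(n,k)$.

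For the lower bounds I would argue three claims separately. The $\PSpace$-hardness for fixed $k$ comes for free: a parity game is an $\omega$-regular game, and the QSAT reduction establishing $\PSpace$-hardness of the parity case (Theorem~\ref{thm:sseExistencePspaceEasy}) already targets co-B\"uchi games, i.e.\ $k=2$, with the number of players left unbounded. For the $\ExpTime$-hardness of the general case I would reduce from a conjunctive automata game --- equivalently, from the acceptance problem of an alternating polynomial-space Turing machine --- encoding the $n$ deterministic automata as the objectives of $n$ distinct players, so that deciding the existence of the prescribed SSE forces the solver to reason about their implicit (exponential) product. The deviator-game semantics is what makes this faithful: giving payoff $1$ to every player on the target outcome and arranging the wronging structure so that a harmful deviation exists exactly when the ``intersection'' player can be made to lose while the whole coalition is preserved. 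Read with $n$ and $k$ bounded, the very same reduction family (building on the intersection/emptiness hardness of~\cite{FernauHoffmannWehar2021}) yields $\XP$-hardness, so together with the $\XP$ membership above the parameterised problem is $\XP$-complete.

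The main obstacle I anticipate is the $\ExpTime$-hardness reduction. Making the exponential product of the automata genuinely \emph{necessary} --- rather than an artefact of the encoding --- requires a gadget in which every candidate SSE with the prescribed payoff vector corresponds to a winning strategy in the conjunctive automata game, with no spurious equilibria introduced by the secure-and-strong flavour of the solution concept. Controlling which coalitions can form, and ensuring that the only profitable-and-harmless deviations are exactly those simulating the opponent in the automata game, is the delicate part; the easiness directions, by contrast, are essentially the parity-case argument transported across the automaton product.
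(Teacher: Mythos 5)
Your plan follows the paper's proof almost exactly: the synchronised product of the deviator game with the automata $(\auto_i)_{i\in\Pi}$ and $\auto_\Corr$ is precisely the paper's \emph{extended deviator game}, with an Emerson--Lei winning condition of size polynomial in $n$ and $k$ over an arena of size $m^{O(n)}$, and your hardness plan (conjunction-of-parity-automata games for $\ExpTime$/$\XP$, plus fixed-colour $\PSpace$-hardness) is also the paper's. One genuine simplification on your side: inheriting the fixed-$k$ $\PSpace$-hardness directly from the co-Büchi QSAT reduction of Theorem~\ref{thm:sseExistencePspaceEasy} is legitimate (a parity game is an $\omega$-regular game with the same number of colours) and shorter than the paper's separate reduction from DFA intersection emptiness.

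There are, however, two gaps, both at exactly the quantitative steps. First, the $\ExpTime$ upper bound: none of the algorithms you cite delivers it. Running the region-by-region procedure of Theorem~\ref{thm:sseExistencePspaceEasy}, which rests on the $\PSpace$ Emerson--Lei algorithm of~\cite{HunterDawar05}, over the exponential-size product gives only $\ExpSpace$ --- the paper states this explicitly --- and the algorithm of~\cite{BruyereHautemRaskin18} runs in time doubly exponential in the formula size, hence doubly exponential in $\mathrm{poly}(n,k)$ here, which is again not $\ExpTime$ when $n$ and $k$ are unbounded. Moreover, Challenger's condition is a genuine Boolean combination of parity conditions (a disjunction of complements intersected with a conjunction), not a conjunction of parities, so generalised-parity solvers do not apply as stated. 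What is needed, and what the paper uses, is an Emerson--Lei solver with running time of the form $d!\cdot|S|^{O(d)}$ in the number $d=O(nk)$ of colours (\cite[Corollary 24]{HausmannLehautPiterman23}); only with such a bound does $|S|=m^{O(n)}$ yield a singly exponential total. Second, parameterised hardness: \cite{FernauHoffmannWehar2021} concerns intersection of deterministic \emph{word} automata, which is classically $\PSpace$-complete and yields $\W[1]$-hardness, so it supports neither the $\ExpTime$-hardness nor the $\XP$-hardness you claim. The paper instead starts from intersection emptiness of deterministic top-down \emph{tree} automata, which is $\ExpTime$-complete and $\XP$-complete in the number of automata, transfers this to two-player zero-sum games whose objective is a conjunction of parity automata (Lemma~\ref{lem:conjunctionParityGameExptimeHard}), and only then builds the SSE gadget (prepended opt-out path to $\blacktriangledown$, Adam rewarded on $\blacktriangledown$, Eve always winning) that you sketch. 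With these two repairs your argument coincides with the paper's.
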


\begin{proof}~
    \begin{itemize}
        \item \emph{$\ExpTime$-easiness and $\XP$-easiness (fixed number of players and colors).}
        
Let $\Game_{\|v_0}$ be an $\omega$-regular game, and let $\bx \in \NN$.
    Let $W = \{i\in\Pi ~|~ x_i=1\}$ be the set of players winning according to payoff vector $\bx$.
Let $\left(\auto_{i}\right)_{i\in\Pi\cup\{\Corr\}}$ be the parity automata defining the payoff of player $i$ and correctness constraint over $\Game_{\|v_0}$.
Let $S_{i}$ (resp. $T_{i}$, $\kappa_{i}$, $s_{0,i}$) be the set of states (resp. transition function, coloring mapping, initial state) of $\auto_{i}$.
We assume all these automata use at most $k$ colors, i.e. the co-domain of every $\kappa_{i}$ is included in $\{0,\dots,k\}$.
Let \[m=\max\left(2,|V|,\max_{i\in\Pi\cup\{\Corr\}}\left|Q_{i}\right|\right)\] be the maximal size of the graph/automata provided as input (assumed to be at least 2 for complexity purposes).

We build the Extended deviator Game as the product of the Deviator Game with all automata $\auto_{i}$.
Formally:
    The \emph{extended deviator game} is the initialized graph game:
    \[\EDev_{\bx} \Game_{\|v_0} = \left( \{\PP, \CC\}, V^\ecd, E^\ecd, \left(V^\ecd_\PP, V^\ecd_\CC\right), \mu^\ecd\right)_{\|v_0^\ecd},\]
    where:

     \begin{itemize}
         \item the players are Prover and Challenger.

         \item Prover controls the set $V^\ecd_\PP$ and Challenger the set $V^\ecd_\CC$ with
         \begin{mathpar}
             V^\ecd_\PP = V \times 2^\Pi \times \prod_{i\in\Pi\cup\{\Corr\}} Q_{i}
             \and\text{and}\and
             V^\ecd_\CC= E \times 2^\Pi \times \prod_{i\in\Pi\cup\{\Corr\}} Q_{i}
         \end{mathpar}
         In the sequel we assume, for convenience, that we have $\Pi = \{1, \dots, n\}$, and we write $\bs = (s_{1},\dots,s_{n},s_\Corr)$ for elements of $\prod_{i\in\Pi\cup\{\Corr\}} Q_{i}$.

         \item The initial vertex is $v_0^\ecd = (v_0, \emptyset, s_{0,1},\dots,s_{0,n},s_{0,\Corr})=(v_0,\emptyset,\bs_0)$.

         \item The edge set $E^\ecd$ is defined as follows: from the vertex $(u, D, s_{1},\dots, s_{n},s_\Corr)$, Prover can go to every vertex of the form $(uv, D, s_{1},\dots, s_{n},s_\Corr)$ with $uv \in E$ (she \emph{proposes} the edge $uv$).
         From the vertex $(uv, D, s_{1},\dots, s_{n},s_\Corr)$, Challenger can go to the vertex $(v, D, s_{1}',\dots, s_{n}',s_\Corr')$ where $s_{i}'=T_{i}(s_{i},v)$ (he \emph{accepts} the edge $uv$), or to every vertex $(w, D \cup \{\eta\}, s_{1}',\dots, s_{n}',s_\Corr')$, with $w \neq v$, $uw \in E$, and $s_{i}'=T_{i}(s_{i},w)$, and where $\eta$ is the player controlling $u$ (he \emph{deviates} from Prover's proposal).
         
         \item We define the coloring functions $\kappa'_{i}$ for $i\in \Pi$ as $\kappa'_{i}(u, D, s_{1},\dots, s_{n},s_\Corr)=\kappa_{i}(s_{i})$ and $\kappa'_{i}(uv, D, s_{1},\dots, s_{n},s_\Corr)=k+1$ (thus effectively ignoring proposition vertices).

         \item Given a play $\chi=(u_0, D_0, \bs_0)(u_0v_0, D_0,\bs_0)(u_1, D_1,\bs_1) (u_1v_1, D_1,\bs_1)\dots$ in this game, we write $\D(\chi) = \bigcup_k D_k$, the set of players who deviated along the play $\chi$.

         \item Then, $\mu^\ecd$ is the Boolean payoff function defined as follows.
         A play $\chi$ is won by Challenger if and only if either:
         \begin{itemize}
    \item If $D(\chi) \neq \emptyset$, then a play $\chi$ is won by Challenger if and only if:
    \begin{mathpar}
        \chi \in \left( \bigcup_{i \in W} \overline{\Parity(\kappa'_{i})} \right) \cap \left( \bigcap_{i \in D(\chi)\cap W} \Parity(\kappa'_{i}) \right)
    \end{mathpar}
    (i.e., if $\chi$ gives to at least one player $i$ a payoff lower than $x_i$, and to every player who deviated at least the same payoff as in $\bx$).
    \item If $D(\chi) = \emptyset$, then a play $\chi$ is won by Challenger if and only if:
    \begin{mathpar}
        \chi \in \left(\bigcup_{i \in W} \overline{\Parity(\kappa'_{i})}\right) \cup \left(\bigcup_{i \notin W} \Parity(\kappa'_{i})\right) \cup \overline{\Parity(\kappa_\Corr')}
    \end{mathpar}
    (i.e. if $\chi$ gives to at least one player $i$ a payoff different from $x_i$ or the correction constraint is not satisfied).
    \end{itemize}
    \end{itemize}
    Note that we can convert all parity conditions into a chain of Rabin conditions: for $i\in \Pi\cup\{\Corr\}, 0\leq c \leq k$, let:
    \[\Xi^{i}_c=\left\{(u, D, s_{1},\dots, s_{n})\mmid \kappa_{i}(s_{i})\leq c\right\}.\]
    Then $\Parity(\kappa'_{i})=\bigcup_{\substack{0\leq c \leq k\\c \text{ is even}}} \Inf(\Xi^{i}_c) \cap \Fin(\Xi^{i}_{c-1})$ is the set of runs where $\Xi^{i}_c$ appears infinitely often and $\Xi^{i}_{c-1}$ only finitely often for an even color $c$.
    Similarly, $\overline{\Parity(\kappa'_{i})}=\bigcup_{\substack{0\leq c \leq k\\c \text{ is odd}}} \Inf(\Xi^{i}_c) \cap \Fin(\Xi^{i}_{c-1})$.
    Remark that we can (and do) ignore here the color $k+1$ added to ignore intermediate proposing vertices.
    We add extra sets:
    \begin{mathpar}
        \noDev=\left\{(u, \emptyset, \bs)\mmid u\in V, \bs \in \prod_{i\in\Pi\cup{\Corr}} Q_{i}\right\}
        \and\text{and}\and
        \Delta_i=\left\{(u, D, \bs)\mmid u\in V, i \in D, \bs \in \prod_{i\in\Pi\cup\Corr} Q_{i}\right\}
    \end{mathpar} that tracks whether no one (resp. player $i$) deviated.
    The winning condition for Challenger therefore becomes:
    \ifArxiv
    \begin{mathpar}
       \left(\Fin(\noDev) \cap \left( \bigcup_{\substack{i \in W\\0\leq c \leq k\\c \text{ is odd}}} \Inf(\Xi^{i}_c) \cap \Fin(\Xi^{i}_{c-1}) \right) \cap \left( \bigcap_{i\cap W} \Fin(\Delta_i) \cup \bigcup_{\substack{0\leq c \leq k\\c \text{ is even}}} \Inf(\Xi^{i}_c) \cap \Fin(\Xi^{i}_{c-1}) \right) \right)
        \and\cup\and
        \left(\Inf(\noDev) \cap \left(
            \left(\bigcup_{\substack{i \in W\\0\leq c \leq k\\c \text{ is odd}}} \Inf(\Xi^{i}_c) \cap \Fin(\Xi^{i}_{c-1})\right)
            \cup \left( \bigcup_{\substack{i \notin W\\0\leq c \leq k\\c \text{ is even}}} \Inf(\Xi^{i}_c) \cap \Fin(\Xi^{i}_{c-1})\right)
            \right.\right.\\\left.\left.\phantom{\bigcup_{\substack{W\\k\\d}}} 
            \cup \left(\bigcup_{\substack{0 \leq c\leq k\\c \text{ is odd}}} \Inf(\Xi^{\Corr}_c) \cap \Fin(\Xi^{\Corr}_{c-1})\right)
        \right)\right)
    \end{mathpar}
    \else
    \begin{mathpar}
       \Biggg(\Fin(\noDev) \cap \left( \bigcup_{\substack{i \in W\\0\leq c \leq k\\c \text{ is odd}}} \Inf(\Xi^{i}_c) \cap \Fin(\Xi^{i}_{c-1}) \right)
       \and\cap\\ \left( \bigcap_{i\cap W} \Fin(\Delta_i) \cup \bigcup_{\substack{0\leq c \leq k\\c \text{ is even}}} \Inf(\Xi^{i}_c) \cap \Fin(\Xi^{i}_{c-1}) \right) \Biggg)
        \and\cup\and
        \Biggg(\Inf(\noDev) \cap \Biggg(
            \left(\bigcup_{\substack{i \in W\\0\leq c \leq k\\c \text{ is odd}}} \Inf(\Xi^{i}_c) \cap \Fin(\Xi^{i}_{c-1})\right)
            \cup \left( \bigcup_{\substack{i \notin W\\0\leq c \leq k\\c \text{ is even}}} \Inf(\Xi^{i}_c) \cap \Fin(\Xi^{i}_{c-1})\right)
            \\
            \cup \left(\bigcup_{\substack{0 \leq c\leq k\\c \text{ is odd}}} \Inf(\Xi^{\Corr}_c) \cap \Fin(\Xi^{\Corr}_{c-1})\right)
        \Biggg)\Biggg)
    \end{mathpar}
    \fi

The extended deviator game therefore has $2\cdot m\cdot 2^{n+1} \cdot m^{n}=O(m^n)$ vertices and a winning condition that is provided by an Emerson-Lei condition with $(n+1)\cdot k$ colors, and a formula of size polynomial in $n$ and $k$.
Solving this game with the $\PSpace$ algorithm used in the proof of Theorem~\ref{thm:sseExistencePspaceEasy} would provide an $\ExpSpace$ algorithm.
However, using the algorithm from~\cite[Corollary 24]{HausmannLehautPiterman23}, this game can be solved in $O((n\cdot k)!\cdot (m^n)^{n\cdot k+2})=O((n\cdot k)!\cdot m^{n^2\cdot k+2n})$.
As a result the fixed-payoff SSE existence problem for multi-valued $\omega$-regular games is in $\ExpTime$.
When both $n$ and $k$ are fixed, this complexity boils down to $\Poly$, but note that it is an $\XP$ complexity as the size of the arena and automata ($m$) appears only as the exponent's base.

    \item \emph{$\ExpTime$-hardness and $\XP$-hardness (fixed number of players and colors).}

    We reduce the problem of existence of a winning strategy for two player zero-sum games with winning condition defined by a conjunction of parity automata (see Lemma~\ref{lem:conjunctionParityGameExptimeHard} below) to the SSE fixed-payoff unconstrained existence problem.
    Let $\Game_{\|v_0}=(V,E)$ be a two player game for players Adam ($\adam$) and Eve ($\eve$): $V=V_\adam \uplus V_\eve$.
    Let $\auto_1,\dots,\auto_n$ be $n$ parity automata over alphabet $V$.
    A play $\pi$ is winning for $\eve$ if for $i\in \{1,\dots,n\}$, $\pi\in\Lang(\auto_i)$.

    We build the $n+2$-player game $\Game^\dagger$ that has an SSE with payoff $(1,\dots,1,1,1)$ if and only if $\adam$ has a winning strategy in $\Game$.
    The $n+2$ players are $\{1,\dots,n,\adam,\eve\}$ (used in that order for payoff description).
    The game arena consists of the original game $\Game_{\|v_0}$ with a prepended path where each player from $1$ to $n$ can either choose to continue or reach a sink state $\blacktriangledown$.
    Players $i$ wins (payoff $1$) if either $\auto_i$ accepts in $\Game_{\|v_0}$ or the play ends up in $\blacktriangledown$; otherwise they get payoff $0$.
    This is ensured in $\auto^\dagger_{i}$ by adding two states to $\auto_i$ to take into account this prefix (see \figurename~\ref{fig:conjParity2sseAutomataRegular}).
    Player $\adam$ gets payoff $1$ if the play ends up in $\blacktriangledown$ and gets $0$ if it reaches $\Game_{\|v_0}$.
    $\eve$ always wins in this game.

    Assume that $\adam$ has a winning strategy in $\Game_{\|v_0}$.
    Let $\bsigma$ be the following profile in $\Game^\dagger$: for $i\in \{1,\dots,n\}$, from $q_i$ go to $\blacktriangledown$; $\adam$ plays his winning strategy in $\Game_{\|v_0}$, and $\eve$ play an arbitrary strategy.
    Since $\blacktriangledown$ is reached in the play induced by $\bsigma$, the payoff vector is $(1,\dots,1,1,1)$.
    We will show that $\bsigma$ is an SSE.

    Let $C$ be a coalition of players and assume $\bsigma'$ is a harmful deviation from players in $C$.
    First, since $\eve$ cannot be harmed by a deviation (she always wins), we can assume that $\eve\in C$.
    In addition, if one of the players $i\in\{1,\dots,n\}$ does not effectively deviate, then the play ends up in $\blacktriangledown$ and the payoff vector does not change; therefore all of these players must be in $C$.
    On the other hand, if the play changes the payoff by not ending in $\blacktriangledown$, then $\adam$ strictly decreases his payoff so cannot be part of the coalition.
    Therefore, any harmful deviation must be from the coalition $C=\{1,\dots,n,\eve\}$.
    As the deviation reaches $\Game_{\|v_0}$, $\adam$ is indeed harmed; for the deviation to be deemed harmful it remains to show that no player in the coalition is themself harmed.
    However, since $\adam$ has a winning strategy and plays it in $\Game_{\|v_0}$, the play is not accepted by (at least) one automaton $\auto_i$, so $i$ loses and thus decreases their payoff.
    As a result $i$ cannot be part of the coalition, which is a contradiction, and $\bsigma$ is an SSE.

    Now assume that $\eve$ has a winning strategy in $\Game_{\|v_0}$.
    Let $\bsigma$ be a profile with payoff vector $(1,\dots,1,1,1)$.
    We will show this profile is not an SSE.
    For this payoff to occur, that means the play ends up in $\blacktriangledown$.
    Consider the coalition $C=\{1,\dots,n,\eve\}$ and the deviation (which may not be actually effective for some players) $\bsigma'$ defined as follows: for $i \in \{1,\dots,n\}$, from $q_i$ go to $q_{i+1}$ (or $v_0$ if $i=n$); $\eve$ plays optimally in $\Game_{\|v_0}$, i.e. a winning strategy.
    The play thus built will reach $\Game_{\|v_0}$ where the created play is accepted by all $\auto_i$, so the payoff for player $i$ remains $1$ while the payoff for $\adam$ strictly decreases from $1$ to $0$.
    Therefore $\bsigma'$ is a harmful deviation and $\bsigma$ is not an SSE.    

    As a result there is an SSE with payoff $(1,\dots,1,1,1)$ if and only if $\adam$ has a winning strategy in $\Game_{\|v_0}$ for the winning condition $\bigcap_{i=1}^n \Lang(\auto_i)$.
    
    \item \emph{$\PSpace$-hardness (fixed number of colors)}

    We reduce the problem of emptiness of the intersection of $n$ finite automata, which is $\PSpace$-hard. 
    The arena is the same as the one given in \figurename~\ref{fig:autoInterEmptiness2CompParitySseCheck_arena} for the encoding into compositional Parity SSE-checking problem: Solver chooses letters between $a$ and $b$ in the \emph{word} module then gets into an \emph{approval} module where all players from $1$ to $n$ can either reject, reaching $\bt$ or accept and yield choice to the next player.
    If all players accept then vertex $\btd$ is reached.
    In that case Solver wins, but a player $i$ wins if and only if the prefix in $\{a,b\}^*$ was accepted by $\auto_i$.
    If $\bt$ is reached then all players lose; note that it is also the case if the game remains forever in the word module.
    We show that this game has an SSE where all players win iff there is a word accepted by all automata $\auto_i$.

    Assume there is an SSE with payoff $1$ for everyone.
    Then this strategy profile starts by generating a word on $\{a,b\}^*$ that is accepted by all automata $\auto_i$, otherwise they cannot all win.
    So there is a word over $\{a,b\}$ accepted by all automata.

    Now assume there is a word $w\in \{a,b\}^*$ accepted by all the $\auto_i$s.
    Let $\sigma_\solver$ be the strategy for Solver that starts by playing this word before entering the approval module.
    For player $1$ to $n$, let $\sigma_i$ be the strategy that chooses $\bt$ if the prefix on $\{a,b\}$ was not accepted by $\auto_i$ and goes to vertex $i+1$ (or $\btd$ for $n$) otherwise.
    When all these strategies are played together, the play indeed reaches $\btd$ and all players win: automata $\auto_i$ each reach the top right state with priority $0$.
    So this profile indeed achieves the desired payoffs.
    It remains to be shown that it is indeed an SSE.
    Assume there is a coalition $C$ of players who can create an harmful deviation without any player in $C$ losing.
    Assume a player $i\in\{1,\dots,n\}$ is in the deviating coalition.
    If $i$ goes to $\bt$, then he loses, which contradicts the assumption that no player in $C$ loses.
    So $C=\{\solver\}$.
    A deviation of Solver means either remaining in the word module forever (then he loses along with the others), or producing a word $w'\neq w$ before reaching vertex $1$.
    If $w'$ is also accepted by all automata, then the payoff does not actually change for anyone.
    If $w'$ is not accepted by $\auto_i$, then player $i$ will choose to go to $\bt$ so everyone, including Solver loses, once again in contradiction with the assumption.

    Note that only two colors were used in the parity automata.
    Therefore the fixed-payoff SSE existence problem is $\PSpace$-hard when the number of colors is fixed.
\end{itemize}
\end{proof}
\begin{figure*}
    \centering
    \begin{tikzpicture}[node distance=2cm]
        \node[literalState=$1$,initial] (q1) at (0,0) {$q_1$};
        \node[literalState=$2$,right of=q1] (q2) {$q_2$};
        \node[right of=q2] (qdots) {$\cdots$};
        \node[literalState=$n$,right of=qdots] (qn) {$q_n$};
        \node[state,rectangle,rounded corners=7pt,dotted,minimum size=1.5cm,right of=qn,node distance=3cm] (orig) {$\Game_{\|v_0}$};
        \node[nobodyState,below of=qdots] (sink) {$\blacktriangledown$};
\begin{pgfonlayer}{arrows}
        \path[->] (q1) edge (q2);
        \path[->] (q2) edge (qdots);
        \path[->] (qdots) edge (qn);
        \path[->] (qn) edge (orig);
        \path[->] (q1) edge[bend right] (sink);
        \path[->] (q2) edge[bend right,in=150] (sink);
        \path[->] (qn) edge[bend left] (sink);
        \path[->] (sink) edge[loop below] (sink);
        \end{pgfonlayer}
    \end{tikzpicture}
    \caption{Game $\Game^\dagger$. The player owning the vertex is given in the bottom right corner of each state.}
    \label{fig:conjParity2SSEGame}
\end{figure*}
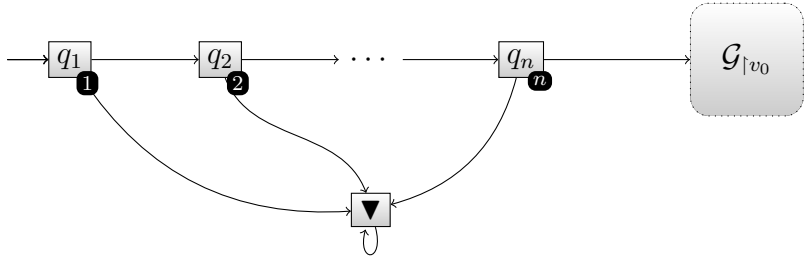

\begin{figure*}
    \centering
    \begin{subcaptionblock}{0.9\textwidth}
    \centering
    \begin{tikzpicture}[auto,node distance=3cm]
        \node[automatonState,initial] (init) at (0,0) {$0$};
        \node[automatonState,below right of=init] (sink) {$1$};
        \node[automatonState,rectangle,rounded corners=7pt,dotted,minimum size=1.5cm,above right of=sink] (autom) {$\auto_{i}$};
        \path[->] (init) edge[loop above] node{$q_1,\dots,q_n,\blacktriangledown$} (autom);
        \path[->] (init) edge node{$v_0$} (autom);
        \path[->] (init) edge node[swap]{$v\in V\setminus\{v_0\}$} (sink);
        \path[->] (autom) edge node{$q_1,\dots,q_n,\blacktriangledown$} (sink);
        \path[->] (sink) edge[loop above] node{$*$} (sink);
    \end{tikzpicture}
    \caption{Parity automata $\auto^\dagger_{i}$ for player $i$ to win. Numbers in state indicate the value of the coloring.}
    \label{fig:conjParity2sseAutomataRegular}
    \end{subcaptionblock}
    
    \begin{subcaptionblock}{0.4\textwidth}
    \centering
    \begin{tikzpicture}[auto,node distance=3cm]
        \node[automatonState,initial] (init) at (0,0) {$0$};
        \node[automatonState,right of=init] (lose) {$1$};
        \path[->] (init) edge[loop above] node{$q_1,\dots,q_n,\blacktriangledown,v\in V\setminus\{v_0\}$} (init);
        \path[->] (init) edge node{$v_0$} (lose);
        \path[->] (lose) edge[loop above] node{$*$} (lose);
    \end{tikzpicture}
    \caption{Parity automaton $\auto^\dagger_{\adam}$.}
    \label{fig:conjParity2sseAutomataAdam}
    \end{subcaptionblock}
    \hfill
    \begin{subcaptionblock}{0.4\textwidth}
    \centering
    \begin{tikzpicture}[auto,node distance=3cm]
        \node[automatonState,initial] (init) at (0,0) {$1$};
        \node[automatonState,right of=init] (lose) {$0$};
        \path[->] (init) edge[loop above] node{$q_1,\dots,q_n,\blacktriangledown,v\in V\setminus\{v_0\}$} (init);
        \path[->] (init) edge node{$v_0$} (lose);
        \path[->] (lose) edge[loop above] node{$*$} (lose);
    \end{tikzpicture}
    \caption{Parity automaton $\auto^\dagger_{\eve}$.}
    \label{fig:conjParity2sseAutomataEve}
    \end{subcaptionblock}

    \caption{Parity automata for payoffs in $\Game^\dagger$.}
    \label{fig:conjParity2sseAutomata}
\end{figure*}

\begin{figure*}
    
    \begin{subcaptionblock}{0.35\textwidth}
    \centering
    \begin{tikzpicture}[auto,node distance=3cm]
        \node[automatonState,initial] (init) at (0,0) {$1$};
        \node[automatonState] (ok) at (2,0) {$0$};
        \node[automatonState] (err) at (4,0) {$1$};
        \path[->] (init) edge[loop above] node{$a,b$} (init);
        \path[->] (ok) edge[loop above] node{$2,\cdots,n,\btd$} (ok);
        \path[->] (init) edge node {$1$} (ok);
        \path[->] (ok) edge node {$\bt$} (err);
        \path[->] (err) edge[loop above] node{$*$} (err);
        
        \node[anchor=north,text width=0.8\textwidth,font=\scriptsize] at (2,-10pt) {It is assumed seeing any other vertex leads to the rightmost state.};

    \end{tikzpicture}
    \caption{The parity automaton $\auto_{\solver}$.}
    \label{fig:autoInterEmptiness2omegaRegSseEx_solver}
    \end{subcaptionblock}
    \begin{subcaptionblock}{0.64\textwidth}
    \centering
    \begin{tikzpicture}[auto,node distance=3cm]
        \node[automatonState,rectangle,rounded corners=7pt,dotted,minimum size=3cm,initial] (autom) {$\auto_{i}$};
        \node[automatonState,double] (qacc) at (1, 1) {$1$};
        \node[automatonState] (someq) at (1, -1) {$1$};
        \node[automatonState,right of=qacc] (1) {$0$};
        \node[automatonState,right of=someq] (1b) {$1$};        
        
        \path[->] (qacc) edge node{$1$} (1);
        \path[->] (1) edge[loop right] node{$a,b,1,\dots,n,\btd$} (1);
        \path[->] (1) edge node{$\bt$} (1b);
        \path[->] (someq) edge node{$1$} (1b);
        \path[->] (1b) edge[loop right] node{$*$} (1b);
    \end{tikzpicture}
    \caption{The parity automaton $\auto_i'$ for player $i \in \{1, \dots, n\}$}
    \label{fig:autoInterEmptiness2omegaRegSseEx_i}
    \end{subcaptionblock}

    \caption{Encoding of the intersection emptiness problem into an $\omega$-regular fixed-payoff SSE existence problem.}
    \label{fig:autoInterEmptiness2omegaRegSseEx}
\end{figure*}

\begin{lm}\label{lem:conjunctionParityGameExptimeHard}
    The problem of existence of a winning strategy for two player zero-sum games with winning condition defined by a conjunction of parity automata is $\ExpTime$-hard and $\XP$-complete in the number of automata.
\end{lm}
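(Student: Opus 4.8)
The plan is to prove the two directions --- membership and hardness --- separately, reusing the product/Emerson--Lei machinery already set up for the extended deviator game.

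For membership I would take the two-player game $\Game_{\|v_0}$ and form its synchronized product with all $n$ parity automata $\auto_1,\dots,\auto_n$, exactly as in $\EDev$: this yields a two-player zero-sum game of size $O(m^{n+1})$ whose winning condition for $\eve$ is the conjunction $\bigcap_i \Parity(\kappa'_i)$, i.e.\ a single Emerson--Lei condition with $O(nk)$ colours and a defining formula of size polynomial in $n$ and $k$. Solving it with the algorithm of \cite[Corollary~24]{HausmannLehautPiterman23} costs $O((nk)!\cdot (m^{n+1})^{nk+2})$, which is $\ExpTime$ in general, and, when the number of automata is fixed (the number of colours being bounded, as it is in the instances produced below), polynomial in $m$ --- hence $\XP$ membership with the number of automata as parameter.

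For $\ExpTime$-hardness I would reduce from the acceptance problem of an alternating Turing machine $M$ running in polynomial space $s=\mathrm{poly}(|w|)$ on input $w$, which is $\ExpTime$-complete. The arena lets the play spell out the successive configurations of $M$ symbol by symbol, with $\eve$ resolving existential transitions and $\adam$ resolving universal ones; the correspondence ``$\eve$ wins $\iff$ $M$ accepts $w$'' is the standard one for alternation. The delicate point --- the one I expect to be the main obstacle --- is enforcing that the spelled-out sequence is a \emph{genuine} computation without blowing up the arena: a single automaton comparing cell $j$ of one configuration with the window around cell $j$ of the previous one would have to memorise a whole configuration, i.e.\ $2^{\Theta(s)}$ states. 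This is precisely where the conjunction of automata is essential: I would use one safety automaton $\auto_j$ per tape position $j\in\{1,\dots,s\}$, each carrying a modular counter up to $s$ and a constant-size window buffer, so that $\auto_j$ has only $O(s)$ states and rejects exactly when position $j$ is updated illegally. Their conjunction checks all positions, hence full local consistency; two further automata check the initial configuration and that an accepting configuration is reached and looped in forever. Since $\eve$ must satisfy \emph{every} automaton, any cheating makes some $\auto_j$ reject and she loses, so she can win only by exhibiting a real accepting run. Here the number of automata is $\mathrm{poly}(|w|)$, each of polynomial size and using only two colours, so the reduction is polynomial and yields $\ExpTime$-hardness.

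Finally, for $\XP$-hardness parameterized by the number of automata I would run the very same construction from an alternating machine using space $g(k)\cdot\log|w|$, encoding each configuration as $g(k)$ blocks of $\log|w|$ bits and using one consistency automaton per block (of size $\mathrm{poly}(|w|)$, again two colours); then the number of automata is $g(k)+O(1)$, a function of the parameter alone. Acceptance of such slicewise-logspace alternating machines is $\XP$-hard --- indeed complete, since alternating space $g(k)\log|w|$ corresponds to deterministic time $|w|^{O(g(k))}$ --- which transfers $\XP$-hardness, and combined with the membership argument gives $\XP$-completeness. The subtle bookkeeping, namely verifying that each block automaton stays polynomial while checking a non-local update relation and pinning down the exact $\XP$-complete source problem, is the part that requires the most care.
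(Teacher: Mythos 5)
Your plan takes a genuinely different route from the paper, which proves this lemma by reducing from intersection emptiness of deterministic top-down tree automata: there, $\eve$ chooses node labels, $\adam$ chooses branches, and each tree automaton is turned into a two-colour parity automaton that simulates its run along the chosen branch, so that both $\ExpTime$-hardness and $\XP$-hardness follow in one stroke from known completeness results~\cite[Theorem 1.7.5]{TATA07}, \cite[Theorem 4.16]{Wehar16}. Your alternating-Turing-machine reduction could in principle be made to work, but as written it has a gap that is specific to \emph{games} and that your sketch does not address: the conjunction of automata is $\eve$'s objective, so any rejection makes $\eve$ lose. Hence if $\adam$ writes any symbol of a configuration --- which is the natural reading of ``$\eve$ resolving existential transitions and $\adam$ resolving universal ones'' while the play spells out configurations --- then $\adam$ can simply write garbage, force some consistency automaton $\auto_j$ to reject, and win regardless of whether $M$ accepts $w$. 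The construction only becomes correct if $\eve$ writes \emph{every} tape symbol and $\adam$'s moves are confined to selecting which applicable universal transition to follow (with the arena tracking the control state and head position to restrict him, or with the $\auto_j$ made to accept vacuously when $\adam$ picks an inapplicable transition). This asymmetry is the crux of any such game reduction; ``the standard correspondence for alternation'' does not apply verbatim precisely because legality is enforced here by the objective rather than by the arena, and you identify the state-blowup issue as the delicate point while missing this one entirely.

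The second gap is the one you flag yourself: $\XP$-hardness rests on the claim that acceptance of alternating machines with space $g(k)\cdot\log|w|$, parameterized by $k$, is $\XP$-hard under suitable (fpt) reductions. That claim is plausible but is exactly the load-bearing step, and you neither prove it nor cite it; the paper sidesteps it by invoking Wehar's parameterized completeness theorem directly. Finally, a smaller point on membership: the Emerson--Lei bound you quote is exponential in $n\cdot k$, so it gives $\XP$ membership only when the number of colours is bounded as well; remarking that the colours are bounded ``in the instances produced below'' does not help, since membership must hold for all inputs, not just those produced by your hardness reduction (the paper's own membership argument shares this restriction, but your phrasing conflates the two directions).
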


\begin{proof}
We encode the emptiness problem for the intersection of deterministic top-down tree automata (DTDTA), which is $\ExpTime$-hard~\cite[Theorem 1.7.5]{TATA07} and $\XP$-complete in the number of automata~\cite[Theorem 4.16]{Wehar16}.

A DTDTA takes a tree labeled with function symbols from a set $F$ (each function $f$ having a specified arity $a(f)$) as an input and labels it with states of the automaton, according to rewriting rules in the transition function $\Delta$.
If $f$ is a symbol of arity $n$ then $\Delta(f,q)=(q_1,\dots,q_n)$ means that if $q$ is the state of the current node then the $n$ children will be labeled $q_1,\dots,q_n$, respectively.
Initially, the root of the tree is labeled by the initial state $q_0$.
A tree is accepted by a DTDTA if all leaves (\emph{i.e.} symbols of arity $0$) are labeled by an accepting state.

We transform a DTDA $\mathcal{T}$ into a two player game on a graph where vertices of $\adam$ are $F \uplus \{\bot\}$ and vertices of $\eve$ are $\left\{(f,i) \mmid f \in F, 1\leq i \leq a(f)\right\} \uplus\{\top\}$.
From vertex $f$ there are transitions to every $(f,i)$ and from vertex  $(f,i)$ (and $\top$) there are transitions to each $f'\in F$.
If a symbol has arity $0$ then there is a single transition from $f$ to $\bot$ and $\bot$ only has looping transitions to itself.
The intial vertex is $\top$.
So $\eve$ chooses the label of the tree node and $\adam$ chooses a branch.

We define a parity automaton matching $\Delta$ as follows.
The set of states is the same as the states of $\mathcal{T}$ plus two sink states $\blacktriangledown$ and $\blacktriangle$, with $q_0$ being the initial state.
Reading $f$ does not change the state, but reading $(f,i)$ in state $q$ makes the automaton move from state $q$ to $q_i$ when $\Delta(f,q)=(q_1,\dots,q_n)$.
Reading $\bot$ from an accepting state changes the state to $\blacktriangle$; from any other state, that leads automaton into $\blacktriangledown$.
State $\blacktriangle$ has priority $0$ while all other states have priority $1$.
As a result a path in the graph is accepted if and only if it ended in $\bot$ with an accepting state of $\mathcal{T}$.

If a tree is accepted by $\mathcal{T}$, one can devise a (memoryful) strategy for $\eve$ that consists in choosing the next function symbol according to this tree and the path chosen by $\adam$ in the tree.

For example, with symbols $x,y,y'$ (arity $2$) and $z,z'$ (arity $0$), if tree $x(y(z,z'),y'(z',z))$ is accepted, then from $\top$ choose $x$, then after $\top x (x,1)$ choose $y$, and from $\top x (x,1) y (y,2)$ choose $z'$.
That creates the path $\top x (x,1) y (y,2) z' \bot^\omega$.
The labeling of that branch in $\mathcal{T}$ will be mimicked by the automaton; since it is accepted it must end in an accepting state of $\mathcal{T}$, so the automaton will end up in state $\blacktriangle$ and the run will be accepted.

On the other hand, if the tree is not accepted then $\adam$ can choose a branch that is rejected by $\mathcal{T}$ and that would be rejected by the parity automaton as well.

As a result an accepted tree corresponds to a winning strategy for $\eve$.
So the emptiness problem can be encoded into deciding the corresponding two-player game between $\adam$ and $\eve$.%
\medskip

As in the above conversion the function symbols were converted into the arena and the states and transition function into a parity automaton, requiring the tree to be conjointly accepted by several DTDTA can be translated into having the path conjointly accepted by several parity automata.
\end{proof}\fi

\section{Conclusion}\label{sec:conclusion}


In this paper, we present a game-theoretic approach to formalize fair exchange protocols under assumption of rational but possibly malicious agents. This framework is more expressive than previous frameworks for rational fairness.

We establish a strong link between our framework and concepts in game theory. More precisely, we show that the correctness conditions induced by our framework are strongly related to the concept of Strong Secure Equilibrium. This link with game theory allows us to develop efficient algorithms for the verification problems and we show that our algorithms have tractable complexity when the number of players is fixed, which is relevant in practice as the number of agents often remains small.

In future work, we plan to investigate the influence of imperfect information on the complexity of our problems and on the modeling.

\bibliography{bibli}
\ifDraft
\label{LastPage}
\noindent\textcolor{black!60}{\dotfill Pagecount: \pageref{LastPage}\dotfill}
\fi

\ifArxiv\newpage\fi
\appendix
\ifArxiv\section{Different approaches for fair exchange}\label{app:first_examples}

The naive exchange model is depicted in \figurename~\ref{fig:ongoingExampleTwoPlayers}.

\begin{figure*}[h!]
    \centering
    \begin{tikzpicture}[auto,node distance=4cm]
        \node[abstractstate] (init) at (0,0) {$\begin{array}{rl}%
            \alice:&\textit{item}_1 \\
            \bob:&\textit{item}_2
        \end{array}$};
        \node[abstractstate] at (6,2) (aliceOwnsAll) {$\begin{array}{rl}%
            \alice:&\textit{item}_1,\textit{item}_2 \\
            \bob:&\emptyset
        \end{array}$};
        \node[abstractstate] at (6,-2) (bobOwnsAll) {$\begin{array}{rl}%
            \alice:& \emptyset\\
            \bob:&\textit{item}_1,\textit{item}_2
        \end{array}$};
        \node[abstractstate] at (12,0) (end) {$\begin{array}{rl}%
            \alice:&\textit{item}_2 \\
            \bob:&\textit{item}_1
        \end{array}$};

        \path[->,aliceMove] (init) edge node[swap,sloped] {$\alice$ sends $\textit{item}_1$} (bobOwnsAll);
        \path[->,bobMove] (init) edge node[sloped] {$\bob$ sends $\textit{item}_2$} (aliceOwnsAll);
        \path[->,aliceMove] (aliceOwnsAll) edge node[sloped] {$\alice$ sends $\textit{item}_1$} (end);
        \path[->,bobMove] (bobOwnsAll) edge node[swap,sloped] {$\bob$ sends $\textit{item}_2$} (end);
    \end{tikzpicture}
    \caption{Exchange between Alice and Bob only.}
    \label{fig:ongoingExampleTwoPlayers}
\end{figure*}
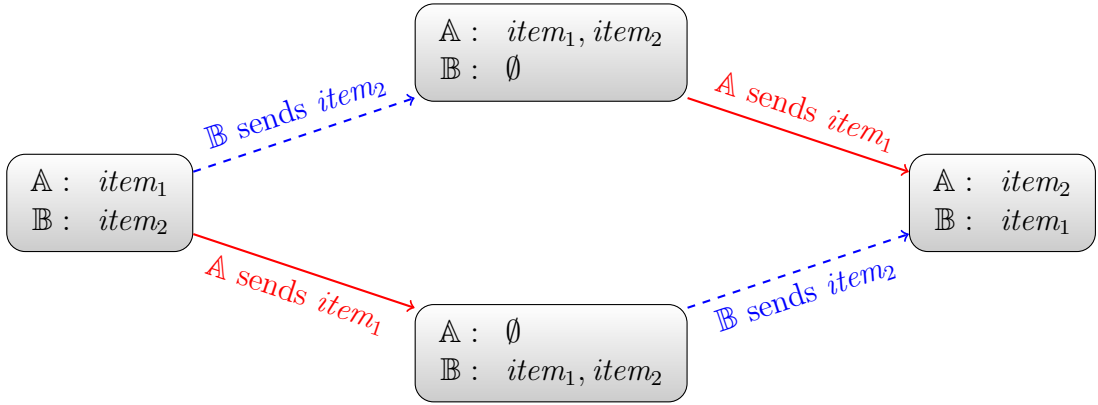

The exchange model with optimistic use of the TTP is given in \figurename~\ref{fig:ongoingExampleWithTTP}.

\begin{figure*}[h!]
    \centering
    \begin{tikzpicture}[auto,node distance=4cm]
        \node[abstractstate] (init) at (1,0) {$\begin{array}{rl}%
            \alice:&\textit{item}_1 \\
            \bob:&\textit{item}_2
        \end{array}$};
        \node[abstractstate] at (5,3) (aliceOwnsAll) {$\begin{array}{rl}%
            \alice:&\textit{item}_1,\textit{item}_2 \\
            \bob:&\emptyset
        \end{array}$};
        \node[abstractstate] at (5,-3) (bobOwnsAll) {$\begin{array}{rl}%
            \alice:& \emptyset\\
            \bob:&\textit{item}_1,\textit{item}_2
        \end{array}$};
        \node[abstractstate] at (14,0) (end) {$\begin{array}{rl}%
            \alice:&\textit{item}_2 \\
            \bob:&\textit{item}_1
        \end{array}$};
        \node[abstractstate] at (12,3) (aliceOwnsAllTtpAlerted) {$\begin{array}{rl}%
            \alice:&\textit{item}_1,\textit{item}_2 \\
            \bob:&\emptyset \\
            & \ttp \text{ alerted}
        \end{array}$};
        \node[abstractstate] at (12,-3) (bobOwnsAllTtpAlerted) {$\begin{array}{rl}%
            \alice:& \emptyset\\
            \bob:&\textit{item}_1,\textit{item}_2\\
            & \ttp \text{ alerted}
        \end{array}$};

        \path[->,aliceMove] (init) edge node[sloped,pos=0.55] {$\alice$ sends $\textit{item}_1$} (bobOwnsAll);
        \path[->,bobMove] (init) edge node[swap,sloped,pos=0.55] {$\bob$ sends $\textit{item}_2$} (aliceOwnsAll);
        \path[->,aliceMove] (aliceOwnsAll) edge node[swap,sloped] {$\alice$ sends $\textit{item}_1$} (end);
        \path[->,bobMove] (bobOwnsAll) edge node[sloped] {$\bob$ sends $\textit{item}_2$} (end);
        \path[->,aliceMove] (bobOwnsAll) edge node[swap,sloped] {$\alice$ alerts $\ttp$} (bobOwnsAllTtpAlerted);
        \path[->,bobMove] (aliceOwnsAll) edge node[sloped] {$\bob$ alerts $\ttp$} (aliceOwnsAllTtpAlerted);
        \path[->,ttpMove] (aliceOwnsAllTtpAlerted) edge node[text width=2.5cm,pos=0.8] {$\ttp$ forces $\alice$ to send $\textit{item}_1$} (end);
        \path[->,ttpMove] (bobOwnsAllTtpAlerted) edge node[swap,text width=2.5cm,pos=0.8] {$\ttp$ forces $\bob$ to send $\textit{item}_2$} (end);
    \end{tikzpicture}
    \caption{Exchange between Alice and Bob with a trusted third party.}
    \label{fig:ongoingExampleWithTTP}
\end{figure*}
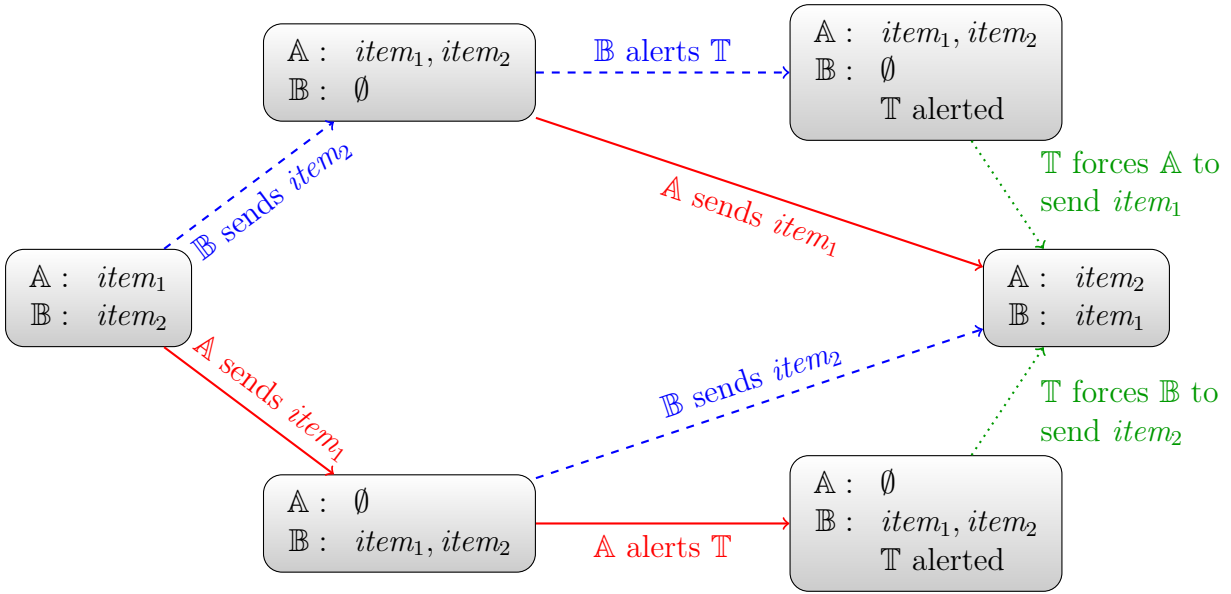

\label{app:ongoingExampleWithTTPandInfra}

The model of item exchange with explicit modeling of the infrastructure is depicted in \figurename~\ref{fig:ongoingExampleWithTTPandInfra}.

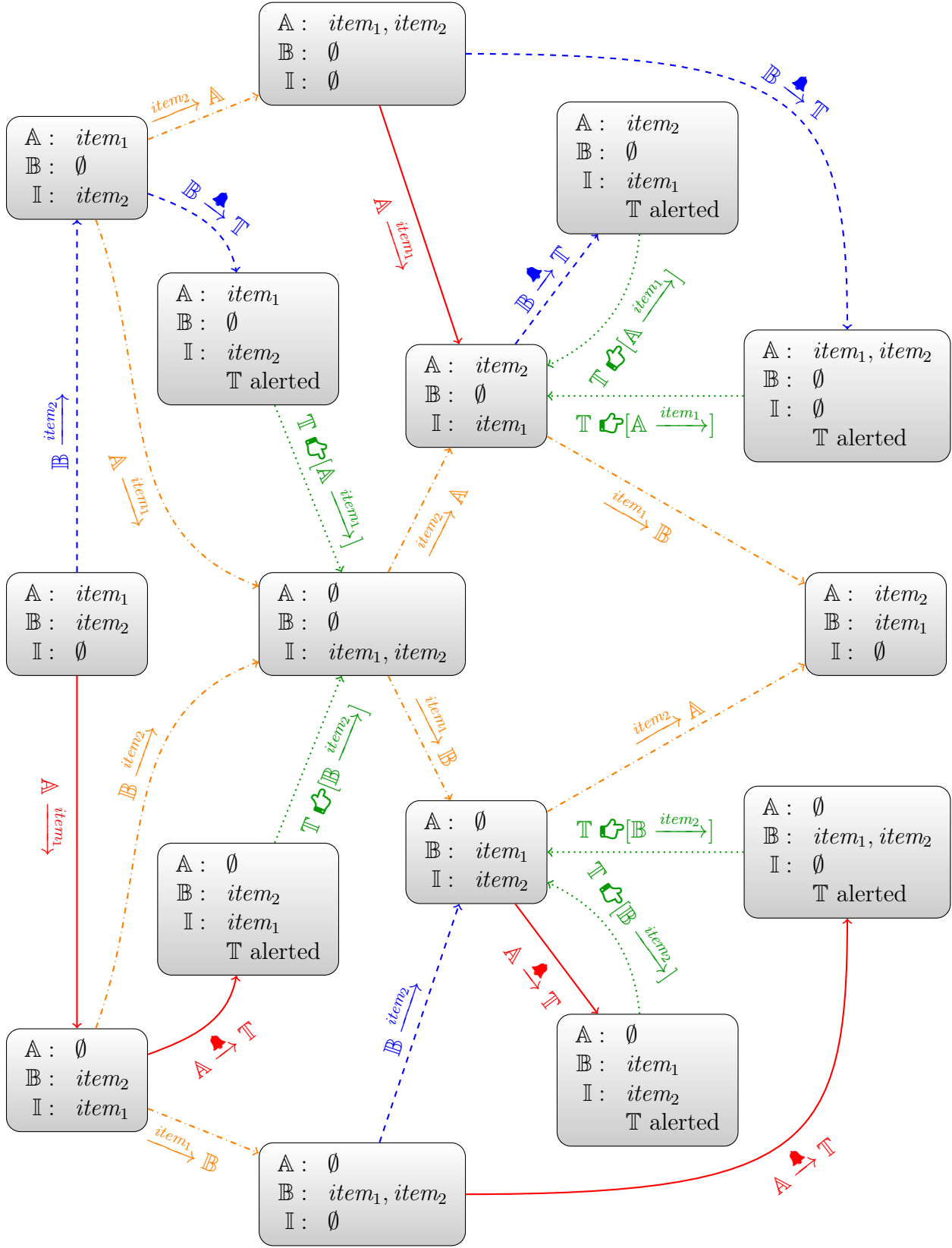
\begin{figure*}[hp!]
    \begin{tikzpicture}[auto,node distance=4cm]
        \node[abstractstate] (init) at (0,0) {$\begin{array}{rl}%
            \alice:&\textit{item}_1 \\
            \bob:&\textit{item}_2 \\
            \II:&\emptyset
        \end{array}$};
        \node[abstractstate] at (0,8) (bobSent) {$\begin{array}{rl}%
            \alice:&\textit{item}_1 \\
            \bob:&\emptyset \\
            \II:&\textit{item}_2
        \end{array}$};
        \node[abstractstate] at (3,5) (bobSentTtpAlerted) {$\begin{array}{rl}%
            \alice:&\textit{item}_1 \\
            \bob:&\emptyset \\
            \II:&\textit{item}_2 \\
            & \ttp \text{ alerted}
        \end{array}$};
        \node[abstractstate] at (7,4) (received2aliceSent) {$\begin{array}{rl}%
            \alice:&\textit{item}_2 \\
            \bob:&\emptyset \\
            \II:&\textit{item}_1
        \end{array}$};
        \node[abstractstate] at (10,8) (received2aliceSentTtpAlerted) {$\begin{array}{rl}%
            \alice:&\textit{item}_2 \\
            \bob:&\emptyset \\
            \II:&\textit{item}_1 \\
            & \ttp \text{ alerted}
        \end{array}$};
        \node[abstractstate] at (5,10) (aliceOwnsAll) {$\begin{array}{rl}%
            \alice:&\textit{item}_1,\textit{item}_2 \\
            \bob:&\emptyset \\
            \II:&\emptyset
        \end{array}$};
        \node[abstractstate] at (0,-8) (aliceSent) {$\begin{array}{rl}%
            \alice:& \emptyset\\
            \bob:&\textit{item}_2 \\
            \II:&\textit{item}_1
        \end{array}$};
        \node[abstractstate] at (3,-5) (aliceSentTtpAlerted) {$\begin{array}{rl}%
            \alice:& \emptyset\\
            \bob:&\textit{item}_2 \\
            \II:&\textit{item}_1 \\
            & \ttp \text{ alerted}
        \end{array}$};
        \node[abstractstate] at (7,-4) (received1bobSent) {$\begin{array}{rl}%
            \alice:& \emptyset\\
            \bob:&\textit{item}_1 \\
            \II:&\textit{item}_2
        \end{array}$};
        \node[abstractstate] at (10,-8) (received1bobSentTtpAlerted) {$\begin{array}{rl}%
            \alice:& \emptyset\\
            \bob:&\textit{item}_1 \\
            \II:&\textit{item}_2 \\
            & \ttp \text{ alerted}
        \end{array}$};
        \node[abstractstate] at (5,-10) (bobOwnsAll) {$\begin{array}{rl}%
            \alice:& \emptyset\\
            \bob:&\textit{item}_1,\textit{item}_2 \\
            \II:&\emptyset
        \end{array}$};
        \node[abstractstate] at (5,0) (allSent) {$\begin{array}{rl}%
            \alice:& \emptyset\\
            \bob:& \emptyset\\
            \II:&\textit{item}_1,\textit{item}_2
        \end{array}$};
        \node[abstractstate] at (14,0) (end) {$\begin{array}{rl}%
            \alice:&\textit{item}_2 \\
            \bob:&\textit{item}_1 \\
            \II:&\emptyset
        \end{array}$};
        \node[abstractstate] at (13.5,4) (aliceOwnsAllTtpAlerted) {$\begin{array}{rl}%
            \alice:&\textit{item}_1,\textit{item}_2 \\
            \bob:&\emptyset  \\
            \II:&\emptyset\\
            & \ttp \text{ alerted}
        \end{array}$};
        \node[abstractstate] at (13.5,-4) (bobOwnsAllTtpAlerted) {$\begin{array}{rl}%
            \alice:& \emptyset\\
            \bob:&\textit{item}_1,\textit{item}_2 \\
            \II:&\emptyset\\
            & \ttp \text{ alerted}
        \end{array}$};

        \path[->,aliceMove] (init) edge node[swap,sloped,pos=0.4] {$\alice \xrightarrow{\textit{item}_1}$} (aliceSent);
        \path[->,infraMove] (aliceSent) edge node[swap,sloped,pos=0.4] {$\xrightarrow{\textit{item}_1}\bob$} (bobOwnsAll);
        
        \path[->,bobMove] (init) edge node[sloped,pos=0.4] {$\bob\xrightarrow{\textit{item}_2}$} (bobSent);
        \path[->,infraMove] (bobSent) edge node[sloped,pos=0.4] {$\xrightarrow{\textit{item}_2}\alice$} (aliceOwnsAll);

        \path[->,infraMove] (aliceSent) edge[out=70, in=-160,out distance=3cm] node[sloped,pos=0.6] {$\bob \xrightarrow{\textit{item}_2}$} (allSent);
        \path[->,infraMove] (bobSent) edge[out=-70, in=160,out distance=3cm] node[swap,sloped,pos=0.6] {$\alice \xrightarrow{\textit{item}_1}$} (allSent);
        \path[->,infraMove] (allSent) edge node[sloped] {$\xrightarrow{\textit{item}_1}\bob$} (received1bobSent);
        \path[->,infraMove] (allSent) edge node[swap,sloped] {$\xrightarrow{\textit{item}_2}\alice$} (received2aliceSent);

        \path[->,aliceMove] (aliceOwnsAll) edge node[swap,sloped] {$\alice \xrightarrow{\textit{item}_1}$} (received2aliceSent);
        \path[->,infraMove] (received2aliceSent) edge node[swap,sloped,pos=0.4] {$\xrightarrow{\textit{item}_1}\bob$} (end);
        \path[->,bobMove] (bobOwnsAll) edge node[sloped] {$\bob\xrightarrow{\textit{item}_2}$} (received1bobSent);
        \path[->,infraMove] (received1bobSent) edge node[sloped] {$\xrightarrow{\textit{item}_2}\alice$} (end);
        
        \path[->,aliceMove] (bobOwnsAll) edge[out=0,in=-90,out distance=6cm,in distance=4cm] node[swap,sloped] {$\alice\xrightarrow{\text{\warning}}\ttp$} (bobOwnsAllTtpAlerted);
        \path[->,aliceMove] (received1bobSent) edge node[swap,sloped] {$\alice\xrightarrow{\text{\warning}}\ttp$} (received1bobSentTtpAlerted);
        \path[->,aliceMove] (aliceSent) edge[out=20,in=-100] node[swap,sloped] {$\alice\xrightarrow{\text{\warning}}\ttp$} (aliceSentTtpAlerted);
        \path[->,bobMove] (aliceOwnsAll) edge[out=0,in=90,out distance=6cm,in distance=4cm] node[sloped] {$\bob\xrightarrow{\text{\warning}}\ttp$} (aliceOwnsAllTtpAlerted);
        \path[->,bobMove] (received2aliceSent) edge node[sloped] {$\bob\xrightarrow{\text{\warning}}\ttp$} (received2aliceSentTtpAlerted);
        \path[->,bobMove] (bobSent) edge[out=-20,in=100] node[sloped] {$\bob\xrightarrow{\text{\warning}}\ttp$} (bobSentTtpAlerted);
        
        \path[->,ttpMove] (bobOwnsAllTtpAlerted) edge node[sloped] {$\ttp$ \force$[\bob\xrightarrow{\textit{item}_2}]$} (received1bobSent);
        \path[->,ttpMove] (received1bobSentTtpAlerted) edge[bend right] node[sloped] {$\ttp$ \force$[\bob\xrightarrow{\textit{item}_2}]$} (received1bobSent);
        \path[->,ttpMove] (aliceSentTtpAlerted) edge node[swap,sloped] {$\ttp$ \force$[\bob\xrightarrow{\textit{item}_2}]$} (allSent);
        \path[->,ttpMove] (aliceOwnsAllTtpAlerted) edge node[sloped,swap] {$\ttp$ \force$[\alice\xrightarrow{\textit{item}_1}]$} (received2aliceSent);
        \path[->,ttpMove] (received2aliceSentTtpAlerted) edge[bend left] node[sloped,swap] {$\ttp$ \force$[\alice\xrightarrow{\textit{item}_1}]$} (received2aliceSent);
        \path[->,ttpMove] (bobSentTtpAlerted) edge node[sloped] {$\ttp$ \force$[\alice\xrightarrow{\textit{item}_1}]$} (allSent);
    \end{tikzpicture}
    \caption{Exchange between $\alice$ and $\bob$ with explicit infrastructure modeling and TTP.}
    \label{fig:ongoingExampleWithTTPandInfra}
\end{figure*}\fi
\ifPagesLimited
\section{The Zhou-Gollmann optimistic protocol (details)}\label{app:ZhouGollmann}

\subsection{Transitions in the Zhou-Gollmann game}\label{app:ZhouGollmannTransitions}

\subsection{Mealy machine for the Zhou-Gollmann protocol}\label{app:ZhouGollmannMealy}

\fi
\ifArxiv
\section{Case study: Baum-Waidner Optimistic Multi-Party Contract Signing}\label{app:BaumWaidner}

In \cite{BaumWaidner00}, the authors present a protocol for $n$ agents $\{\player_1,\dots,\player_n\}$ to sign a contract, only relying on a trusted third-party (TTP, $\ttp$) to recover from timeout (or potential cheating attempt).
The protocol is carried out over $n+1$ rounds.
In the first round, all agents send a level-1 promise of signature to all other agents to start the protocol.
All agents wait to have received all such promises to move to round $2$.
In subsequent rounds ($2$ to $n+1$), agents send level-$r$ promises as well as acknowledgment of success for level $r-1$.
Once an agent has received all level-$r$ promises and acknowledgment of success for level $r-1$, they move to round $r+1$.
A level-$r$ promise from $\player_i$ is a message signed by $i$ that contains all the acknowledgments from all agents for round $r-2$, that were received at the end of the previous round.
Note that at level $2$, the promises of level $1$ serve as acknowledgments that ``round $0$'' was successfully carried out.
The level-$n+1$ promise is considered an actual signature of the contract.

If at any point an agent suspects the protocol is not going as expected (for example if they fail to receive a message from another agent), they can contact the TTP to request a resolution at round $r$.
The resolution request must contain the promises of level $r-1$ and acknowledgments of level $r-2$ received from all the other agents: these should have been received from everyone before even moving to round $r$.
The TTP can simply abort if the level of promises in the resolution request is not high enough; this abortion can later be recovered if another agent provides a resolution request with level of promises that proof that level $r$ was actually successfully performed.
Otherwise, they produce a resolution token that will be transmitted to any subsequent agent requesting a resolution, as it acts as a proof of signature.
Any given agent is only given one chance to contact the TTP (all further requests are ignored); therefore an agent that asks for a resolution at an early stage, receiving an abort token, but still participates in the protocol would never be able to obtain a resolution token from the TTP if the protocol does not reach its normal end.

Remark that in this case the channels are assumed perfect and therefore are not part of the model: any message sent is immediately received by the intended recipient.

\subsection{Modeling of the protocol}

\subsubsection{Messages}

The possible messages of the protocol are as follows.
They all are assumed to include the necessary metadata identifying the contract being signed and the instance of the protocol.
For each agent $i,j \in \{1,\dots,n\}$ and $r\in \{1,\dots,n+1\}$, $\Promise(i,j,r)$ is the sending of signature promise of level $r$ from $\player_i$ to $\player_j$.
For each agent $i,j,r \in \{1,\dots,n\}$, $\Ack(i,j,r)$ is the sending of acknowledgment of the completion of level $r$, from $\player_i$ to $\player_j$.
The TTP can send $\Aborted(r,i)$ which is an abort token sent to $\player_i$ and $\Signed(r,i)$, a proof of the contract signature.

As the level-$r$ promise includes the acknowledgments received from other agents and is signed by the sender, $\Promise(i,j,r)$ must first be sent:
\begin{itemize}
 \item by agent $\player_i$,
 \item if $r>1$, for all $j\neq i$, $\Ack(j,i,r-1)$ has been sent.
\end{itemize}
The acknowledgments are signed so $\Ack(i,j,r)$ must first be sent by $\player_i$, but nothing prevents them from sending it at any time (even though it may be a bad idea).
Nothing prevents promises, acknowledgments, or a proof of signature from the TTP to be forwarded to other players:
\begin{itemize}
    \item Once $\Promise(i,j,r)$ has been sent, $\player_j$ can send $\Promise(i,k,r)$ for any $k\in\{1,\dots,n\}$.
    \item Once $\Ack(i,j,r)$ has been sent, $\player_j$ can send $\Ack(i,k,r)$ for any $k\in\{1,\dots,n\}$.
    \item Once $\Signed(r,i)$ has been sent, $\player_i$ can send $\Signed(r,j)$ for any $j\in\{1,\dots,n\}$.
\end{itemize}

For each agent $i \in \{1,\dots,n\}$ and $r\in \{1,\dots,n+1\}$, $\Resolve(i,r)$ is the sending of resolution request to the TTP at round $r$.
It can only be sent:
\begin{itemize}
 \item by agent $\player_i$,
 \item if $r>2$, for all $j\neq i$, $\Ack(j,i,r-2)$ has been sent.
 \item if $r>1$, for all $j\neq i$, $\Promise(j,i,r-1)$ has been sent.
\end{itemize}

Message $\Signed(r,i)$ contains information from a $\Resolve$ request at level $r$ so one must have been sent to $\ttp$ before: for $r\in\{1,\dots,n+1\}$, $\Signed(r,i)$ can be sent only by the TTP and if for some $j$, $\Resolve(j,r)$ was sent before.

The $\Signed(r,i)$ message can act for $\player_i$ as a proof of the signature of the contract by all parties.
We can therefore define, for each player $i$, the state of having a signed contract as:
\[O_{\Signed,i} \eqdef \bigwedge_{j\neq i} \Promise(j,i,n+1) \vee \Signed(r,i)\]
A correct execution of the protocol, and therefore the collaborative goal of the agents is \[\Corr=\bigwedge_i O_{\Signed,i}.\]

\subsubsection{Arena}
The game arena is composed of states $2^\actions\times\{\ttp,\player_0,\dots,\player_n\}$ where
\ifArxiv\[\begin{array}{rcl}
\actions &=&
\left\{\Promise(i,j,r) \ \middle|\  r\in \{1,\dots,n+1\}, i,j \in \{1,\dots,n\}\right\}\\&\cup&
\left\{\Ack(i,j,r) \ \middle|\  r\in \{1,\dots,n+1\}, i,j \in \{1,\dots,n\}\right\}\\&\cup&
\left\{\Resolve(i,r) \ \middle|\  r\in \{1,\dots,n+1\}, i \in \{1,\dots,n\}\right\}\\&\cup&
\left\{\Signed(r,i) \ \middle|\  r\in \{1,\dots,n+1\}, i \in \{1,\dots,n\}\right\}\\&\cup&
\left\{\Aborted(r,i) \ \middle|\  r\in \{1,\dots,n+1\}, i \in \{1,\dots,n\}\right\}
\end{array}\]
\else
$\actions$ is given in \tablename~\ref{tab:actionsBW}.
\begin{table*}
\[\begin{array}{rclcl}
\actions &=&
\left\{\Promise(i,j,r) \ \middle|\  r\in \{1,\dots,n+1\}, i,j \in \{1,\dots,n\}\right\} &\cup&
\left\{\Ack(i,j,r) \ \middle|\  r\in \{1,\dots,n+1\}, i,j \in \{1,\dots,n\}\right\}\\&\cup&
\left\{\Resolve(i,r) \ \middle|\  r\in \{1,\dots,n+1\}, i \in \{1,\dots,n\}\right\} &\cup&
\left\{\Signed(r,i) \ \middle|\  r\in \{1,\dots,n+1\}, i \in \{1,\dots,n\}\right\}\\&\cup&
\left\{\Aborted(r,i) \ \middle|\  r\in \{1,\dots,n+1\}, i \in \{1,\dots,n\}\right\}
\end{array}\]
\caption{Actions in the Baum-Weidner protocol.}
\label{tab:actionsBW}
\end{table*}
\fi
\subsubsection{Mealy machines for strategy following the protocol}

A Mealy machine $\M_i^\tau$ for the strategy of $\player_i$ that allows up to $\tau$ turns per round before starting the resolution process with the TTP can be defined as follows.
In this description, we may abusively use $\player_0$ to denote $\ttp$, in order to facilitate the round-robin nature of the game.
Machine $\M_i^\tau$ has $(n+1)\times (\tau+1)+2$ states: $(r,t)\in\{1,\dots,n+1\}\times\{0,\dots,\tau\}$, a recovery state $R$ and a stop state $s$.

From $(1,t)$, if the game is in a state $(v,\player_i)$ that satisfies $\bigwedge_{j\neq i} \Promise(j,i,1)$, then there is a transition to $(2,0)$ with output:
$$(v\cup \{\Promise(i,j,2),\Ack(i,j,1) ~|~ j\neq i\},\player_{i+1\mod n}).$$
Otherwise, if $t<\tau$ then reading $(v,\player_i)$ outputs $(v,\player_{i+1 \mod n})$ and moves to $(1,t+1)$.
From $(1,\tau)$, reading $(v,\player_i)$ that does not satisfy $\bigwedge_{j\neq i} \Promise(j,i,1)$ leads to state $R$ with output: $$(v\cup \{\Resolve(i,1)\},\player_{i+1 \mod n}).$$

From $(r,t)$ with $1<r\leq n$, if the game is in a state $(v,\player_i)$ that satisfies $\bigwedge_{j\neq i} \Promise(j,i,n) \wedge \Ack(j,i,n-1)$, then there is a transition to $(r+1,0)$ with output: $$(v\cup\{\Promise(i,j,n+1),\Ack(i,j,n) ~|~ j\neq i\}.$$
Otherwise, if $t<\tau$ then reading $(v,\player_i)$ outputs $(v,\player_{i+1 \mod n})$ and moves to $(r,t+1)$.
From $(r,\tau)$, reading $(v,\player_i)$ that does not satisfy $\bigwedge_{j\neq i} \Promise(j,i,n) \wedge \Ack(j,i,n-1)$ leads to state $R$ with output: $$(v\cup \{\Resolve(i,r)\},\player_{i+1 \mod n}).$$

From $R$, reading $(v,\player_i)$ where $v$ satisfies $\Aborted(r,i) \vee \Signed(r,i)$  from some $r$ leads to $S$.
All other values lead to $R$ with output $(v,\player_{i+1\mod n})$.
From $S$, reading $(v,\player_i)$ outputs $(v,\player_{i+1\mod n})$ and remains in $S$.

In all states, if $(v,\player_j)$ is read with $j\neq i$, the machine does not move and outputs $(v,\player_{j+1\mod n})$.

\bigskip
A Mealy machine for the TTP $\M_\ttp$ has states of the form $(c,a,s)$ where $c\subseteq \{1,\dots,n\}$ is the set of players who have contacted the TTP, $a\subseteq \{1,\dots,n\}\times\{1,\dots,n+1\}$ is the set of abort tokens having been sent, $s\in\{0,\dots,n+1\}$ indicates that a signing token has been sent at level $s$, with $0$ indicating no signing token has been sent.
From state $(c,a,0)$, if the game is in state $(v,\ttp)$ where $v$ satisfies $\Resolve(i,1)$ and $i\notin c$, the machine goes to state $(c\cup\{i\},a\cup\{(i,1)\},0)$ and outputs $(v\cup\{\Aborted(r,i)\},\player_1)$.
From state $(c,a,0)$, if the game is in state $(v,\ttp)$ where $v$ satisfies $\Resolve(i,r)$ for $r>1$, $i\notin c$
\begin{itemize}
    \item if for any $(p,\ell)\in a$, $\ell<r$, the machine goes to state $(c\cup\{i\},a,r)$ and outputs $(v\cup\{\Signed(r,i)\},\player_1)$.
    \item otherwise the machine goes to state $(c\cup\{i\},(a\cup\{(i,r)\},0)$ and outputs transition $(v\cup\{\Aborted(r,i)\},\player_1)$.
\end{itemize}
From state $(c,a,\rho)$, if the game is in state $(v,\ttp)$ where $v$ satisfies $\Resolve(i,r)$ for $r>1$, and $i\notin c$, then the machine goes to state $(c\cup\{i\},a,\rho)$ and outputs $(v\cup\{\Signed(\rho,i)\},\player_1)$.
All other cases of input $(v,\player_i)$ have the machine remain in the same state and output $(v,\player_{i+1\mod n})$.

\subsection{Objectives and attacks models}

In general, agents consider that they are in a bad situation if they have given their signature to another party while not having the full set of signatures.
The actual individual objectives of agents may however vary with the attack model being considered.

\subsubsection{Selfish agents}
In this model, agents are selfish and only care about the objectives informally described above.
We define:
\begin{mathpar}
    \Wrong_i \eqdef \neg O_{\Signed,i} \wedge \bigvee_{j\neq i} \Promise(i,j,n+1)
\end{mathpar}

\subsubsection{Sacrifices for the greater ``good''}

In the original model, it is assumed that up to $n-1$ agents can be dishonest and not follow the protocol in order to cheat the last (and honest) agent.
But one can also assume that these dishonest may even be willing to sacrifice themselves by being wronged as long as one member of the group does achieve its individual objective.
In order to differentiate ``losing by sacrifice'' from ``losing by being cheated'', we introduce a $\Boast$ family of actions that allows cheaters achieving their individual objective to exclude other agents from the coalition: if he is wronged that would count as a loss, not a sacrifice.
Message $\Boast(i,j)$ is the announcement by $\player_i$ that if $\player_j$ is wronged it should not count as a sacrifice performed to help $i$.
It can be sent only by $\player_i$ at any time.

We can now define the objectives for all players:
\begin{mathpar}
    \Wrong_i \eqdef   \bigvee_{j\neq i} \left(\Boast(j,i) \wedge O_{\Signed,j}\right)  \wedge \neg O_{\Signed,i} \wedge\bigvee_{j\neq i} \Promise(i,j,n+1)
\end{mathpar}
These objective mean that to achieve an attack, an agent must have a signed contract and be able to point another agent who does not.
By being pointed (only by such an agent), and being in a bad situation of having sent its signature while not having received a signed contract, a user is wronged.
Accomplices that sacrifice themselves to allow for another agent to achieve its individual goal are not considered wronged if they were not pointed.
It allows these agents to actually participate in a dishonest coalition without officially being wronged.

\subsubsection{A ghost in the protocol}
The above approach can be generalized to include finer specifications on the way coalitions are allowed to form.
That can be achieved by using a \emph{Ghost} agent $\ghost$ whose only actions are to $\Curse$ players who, then, cannot sacrifice themselves in order to form a harmful coalition.
Agent $\ghost$ has no action besides $\Curse(i)$ for $i\in\{1,\dots,n\}$, that indicates that $\player_i$ is a potential target of an attack: if he is wronged that would count as a loss, not a sacrifice.

For example, the above case can be formalized using the following objectives for players:
\begin{mathpar}
    \Wrong_i \eqdef \Curse(i) \wedge \neg O_{\Signed,i} \wedge \bigvee_{j\neq i} \Promise(i,j,n+1)
\end{mathpar}
And for the ghost $\Wrong_\ghost \eqdef \bot$.

As it has no action (besides cursing), this can only be achieved with the help of other agents.

Also note that the above definition allows up to $n-1$ agents to team up in the attack, but finer models could also be used by modifying the ghost's wronging condition.
For example, to allow only $k$ attackers to sacrifice themselves to cheat, one could define
\[\Wrong_\ghost'=\bigvee_{i \in \Pi} \Curse(i) \wedge \bigvee_{\substack{A \subseteq \{1,\dots,n\}\\|A|\geq k}} \bigwedge_{j \in A} \neg\Curse(j).\]

\fi
\end{document}